\newtheoremstyle{note}
  {\topsep/2}               
  {\topsep/2}               
  {}                      
  {\parindent}            
  {\itshape}              
  {.}                     
  {5pt plus 1pt minus 1pt}
  {}
\theoremstyle{note}
\newtheorem{lemma}{Lemma}
\newtheorem{conjecture}{Conjecture}
\newtheorem{corollary}{Corollary}
\newtheorem{proposition}{Proposition}
\theoremstyle{definition}
\theoremstyle{remark}
\def\vec#1{\bm{#1}} 
\newcommand{\tr}{\operatorname{tr}}
\newcommand{\spa}{\operatorname{span}}
\newcommand{\imply}{\mathrel{\Rightarrow}}
\newcommand{\lsp}{\hspace{0.1em}}
 \newcommand{\rme}{\mathrm{e}}
 \newcommand{\rmi}{\mathrm{i}}
 \newcommand{\rmC}{\mathrm{C}}
 \newcommand{\rmR}{\mathrm{R}}
 \newcommand{\rmT}{\mathrm{T}}
  \newcommand{\hs}{\mathrm{HS}}
 \newcommand{\caB}{\mathcal{B}}
 \newcommand{\caH}{\mathcal{H}}
 \newcommand{\scrD}{\mathscr{D}}
 \newcommand{\scrI}{\mathscr{I}}
 \newcommand{\scrK}{\mathscr{K}}
 \newcommand{\scrP}{\mathscr{P}}
 \newcommand{\scrS}{\mathscr{S}}
 \newcommand{\scrU}{\mathscr{U}}
\newcommand{\be}{\begin{equation}}
\newcommand{\ee}{\end{equation}}
\newcommand{\ba}{\begin{align}}
\newcommand{\ea}{\end{align}}
\def\<{\langle}  
\def\>{\rangle}  
\def\eqref#1{\textup{(\ref{#1})}}  
\newcommand{\eref}[1]{Eq.~\textup{(\ref{#1})}}
\newcommand{\Eref}[1]{Equation~\textup{(\ref{#1})}}
\newcommand{\esref}[2]{Eqs.~\textup{(\ref{#1})} and \textup{(\ref{#2})}}
\newcommand{\fref}[1]{Fig.~\ref{#1}}
\newcommand{\sref}[1]{Sec.~\ref{#1}}
\newcommand{\Sref}[1]{Section~\ref{#1}}
\newcommand{\thref}[1]{Theorem~\ref{#1}}
\newcommand{\Thref}[1]{Theorem~\ref{#1}}
\newcommand{\thsref}[1]{Theorems~\ref{#1}}
\newcommand{\Thsref}[1]{Theorems~\ref{#1}}
\newcommand{\lref}[1]{Lemma~\ref{#1}}
\newcommand{\Lref}[1]{Lemma~\ref{#1}}
\newcommand{\lsref}[1]{Lemmas~\ref{#1}}
\newcommand{\pref}[1]{Proposition~\ref{#1}}
\newcommand{\Pref}[1]{Proposition~\ref{#1}}
\newcommand{\psref}[1]{Propositions~\ref{#1}}
\newcommand{\Psref}[1]{Propositions~\ref{#1}}
\newcommand{\crref}[1]{Corollary~\ref{#1}}
\newcommand{\Crref}[1]{Corollary~\ref{#1}}
\newcommand{\crsref}[1]{Corollaries~\ref{#1}}
\newcommand{\cref}[1]{Conjecture~\ref{#1}}
\newcommand{\Cref}[1]{Conjecture~\ref{#1}}
\newcommand{\aref}[1]{Appendix~\ref{#1}}
\newcommand{\rcite}[1]{Ref.~\cite{#1}}
\newcommand{\rscite}[1]{Refs.~\cite{#1}}
\begin{document}
	\title{Hiding and masking quantum information in complex and  real quantum mechanics}

	\author{Huangjun Zhu}
	\email{zhuhuangjun@fudan.edu.cn}
	
	\affiliation{State Key Laboratory of Surface Physics and Department of Physics, Fudan University, Shanghai 200433, China}

	\affiliation{Institute for Nanoelectronic Devices and Quantum Computing, Fudan University, Shanghai 200433, China}
	
	\affiliation{Center for Field Theory and Particle Physics, Fudan University, Shanghai 200433, China}

	\begin{abstract}
Classical information can be completely hidden in the correlations of bipartite quantum systems. However, it is impossible 
to hide or mask all quantum information according to  the no-hiding and no-masking theorems derived recently. Here we show that any set of informationally complete quantum states is neither hidable nor maskable, thereby strengthening the  no-hiding and no-masking theorems known before. Then,  by virtue of Hurwitz-Radon matrices (representations of the Clifford algebra), we show that information about real quantum states can be completely hidden in the correlations, although the minimum dimension of the composite Hilbert space  required increases exponentially with the dimension of the original Hilbert space. Moreover, the set of real quantum states is a maximal maskable set within quantum theory and has a surprising connection with maximally entangled states. These results offer valuable insight on the potential and limit of hiding and masking quantum information, which are of intrinsic interest to a number of active research areas. 
	\end{abstract}

	\date{\today}
	\maketitle
	

\section{Introduction}	
Hiding information in correlations is a simple way of realizing secret sharing \cite{Blak79,Sham79}, which is a primitive to cryptography and secure multiparty computation. In the quantum world, a similar idea is of interest to a wide spectrum of research topics, including but not limited to quantum secret sharing \cite{HillBB99,ClevGL99,DiViLT02}, quantum communication, information scrambling, and the black-hole information paradox \cite{Page93,HaydP07,SekiS08,LiuLZZ18L}.	
However,  it is impossible to hide or mask all quantum information in correlations according to the no-hiding theorem   \cite{BrauP07}  and no-masking theorem \cite{ModiPSS18}  derived recently, in sharp contrast with classical information. These no-go theorems are reminiscent of the no-cloning theorem  \cite{WootZ82,Diek82,LamaSHB02} and  no-broadcasting theorem~\cite{BarnCFJ96}, which play crucial roles in quantum cryptography.
On the other hand, little is known about hiding or masking quantum information in restricted sets of quantum states \cite{LianLF19,DingH20,DuGCH21}. An example of special interest is the set of 
real quantum states as represented by real density matrices \cite{Arak80,McKaMG09,HardW12,Baez12,RenoTWT21}, which is the starting point of the resource theory of imaginarity \cite{HickG18,WuKRS21}.

Here we show that it is impossible to hide or mask any set of quantum states that is \emph{informationally complete} (IC), thereby strengthening the  no-hiding and no-masking theorems and establishing an information theoretical underpinning of these no-go results.  This conclusion can serve as the common starting point for deriving and strengthening a number of results on quantum information masking. As implications, a set of qubit states is hidable or maskable iff  the corresponding set of Bloch vectors is contained in a disk \cite{ModiPSS18, LianLF19,DingH20}.
In addition, 
it is impossible to hide or mask any set of quantum states that has a nonzero measure; previously, quite restrictive assumptions are required to derive 
a similar result  \cite{LianLFF20}. Furthermore, it is impossible to hide or mask any set of quantum states that can form a (weighted complex projective) 2-design.

 By virtue of \emph{Hurwitz-Radon} (HR) matrices \cite{Hurw23,Rado22,Eckm06}, we further show that information about real quantum states can be completely hidden in the correlations. 
In addition, we determine the minimum dimension and entanglement cost required to achieve this task. It turns out the minimum dimension increases exponentially with the dimension of the original Hilbert space. Moreover, the set of real quantum sates is a maximal maskable set within quantum theory. Meanwhile, there is a simple connection between the concurrence \cite{HillW97,Woot98,RungBCH01} of the output state of any masker for real quantum states and the robustness of imaginarity of the input state  \cite{HickG18,WuKRS21}.
Our study offers valuable insight on the potential and limit of hiding and masking quantum information. It may shed light on a number of active research areas, including quantum secret sharing \cite{HillBB99,ClevGL99,DiViLT02}, information scrambling,  black-hole information paradox \cite{Page93,HaydP07,SekiS08,LiuLZZ18L},  resource theory of imaginarity \cite{HickG18,WuKRS21}, and foundational studies on  quantum mechanics \cite{Woot86,Hard01,ChirDP11}. 

The rest of this paper is organized as follows. In \sref{sec:HideMask} we review the basic ideas of hiding and masking quantum information and introduce the concepts of masking spectrum, masking purity, entanglement of masking, and maximal maskable sets. In \sref{sec:LimHideMask} we prove that it is impossible to hide or mask any set of quantum states that is IC. In \sref{sec:HR} we discuss the properties and construction of HR matrices. In \sref{sec:MaskRQ} we prove that the set of real quantum states is maskable and is a maximal maskable set. In \sref{sec:HideNoMask} we construct a hidable set that is not maskable. In \sref{sec:PhaseReal} we clarify the relation between real states and phase-parameterized states. \Sref{sec:sum} summarizes this paper. To streamline the presentation, most technical proofs are relegated to the Appendix.

\section{\label{sec:HideMask}Hiding and masking  quantum information}
Let $\caH$ be a $d$-dimensional Hilbert space with $d\geq 2$ and  let $\{|j\>\}_{j=0}^{d-1}$ be the computational basis. Denote by $\scrD(\caH)$ the set of all quantum states (represented by density matrices) on $\caH$ 
and by  $\scrP(\caH)$ the set of all pure states (represented by  rank-1 projectors). Denote by $\scrD^\rmR(\caH)$ the set of  real quantum states (represented by real density matrices with respect to the computational basis) and  by  $\scrP^\rmR(\caH)$ the set of real pure states.

Let $M$ be
an isometry from $\caH$ to a bipartite Hilbert space $\caH_A\otimes \caH_B$ of dimension $d_A\times d_B$. Let  $\rho \in  \scrD(\caH)$, 
\begin{equation}
\varrho=M(\rho):=M\rho M^\dag,\;\; \varrho_A=\tr_B(\varrho), \;\;  \varrho_B=\tr_A(\varrho).  \label{eq:varrhoAB}
\end{equation} 
Given $\scrS\subseteq \scrD(\caH)$, define 
\begin{equation}
M(\scrS):=M\scrS M^\dag=\{\varrho| \rho\in \scrS\}.
\end{equation}
If the reduced state $\varrho_A$ is independent of $\rho$ in $\scrS$, then $M$ can hide the  information of quantum states in the set $\scrS$ from subsystem $A$ (cf. \fref{fig:HideMask}) and is called a \emph{partial masker} (here we only consider partial maskers for subsystem $A$). So it is natural to expect that the information about the original states spreads over the correlations. However, this possibility is ruled by the no-hiding theorem  when $\scrS=\scrD(\caH)$ or $\scrS=\scrP(\caH)$. In this case,  any partial masker can only hide the information in a trivial way by transferring the information to another subsystem \cite{BrauP07}. This surprising fact has implications for many active research topics, including the black-hole information paradox. However, little is known about this issue when $\scrS$ is a smaller subset.

The set $\scrS$ is \emph{hidable} if it consists of only one quantum state or if there exists a nontrivial partial masker. Otherwise, the set $\scrS$ is  \emph{antiscrambling}. In the latter case, $\scrS$ contains at least two distinct quantum states and every partial masker $M:\caH\mapsto \caH_A\otimes \caH_B$ for $\scrS$ is trivial in this  sense: there exists a subspace $\caH_B'=\caH_{B_1}\otimes\caH_{B_2}$ in $\caH_B$ such that 
\begin{equation}\label{eq:antiscrambling}
M(\rho) =\varrho_{AB_1}\otimes \varrho_{B_2} \quad \forall \rho\in \scrS,
\end{equation}
where $\varrho_{AB_1}$ is a fixed density matrix on $\caH_A\otimes \caH_{B_1}$, which is independent of $\rho$ in $\scrS$, while $\varrho_{B_2}$ is a density matrix on $\caH_{B_2}$ that depends on $\rho$. In the current language, the no-hiding theorem \cite{BrauP07} states that $\scrP(\caH)$ and $\scrD(\caH)$ are antiscrambling.

\begin{figure}
	\includegraphics[width=6.5cm]{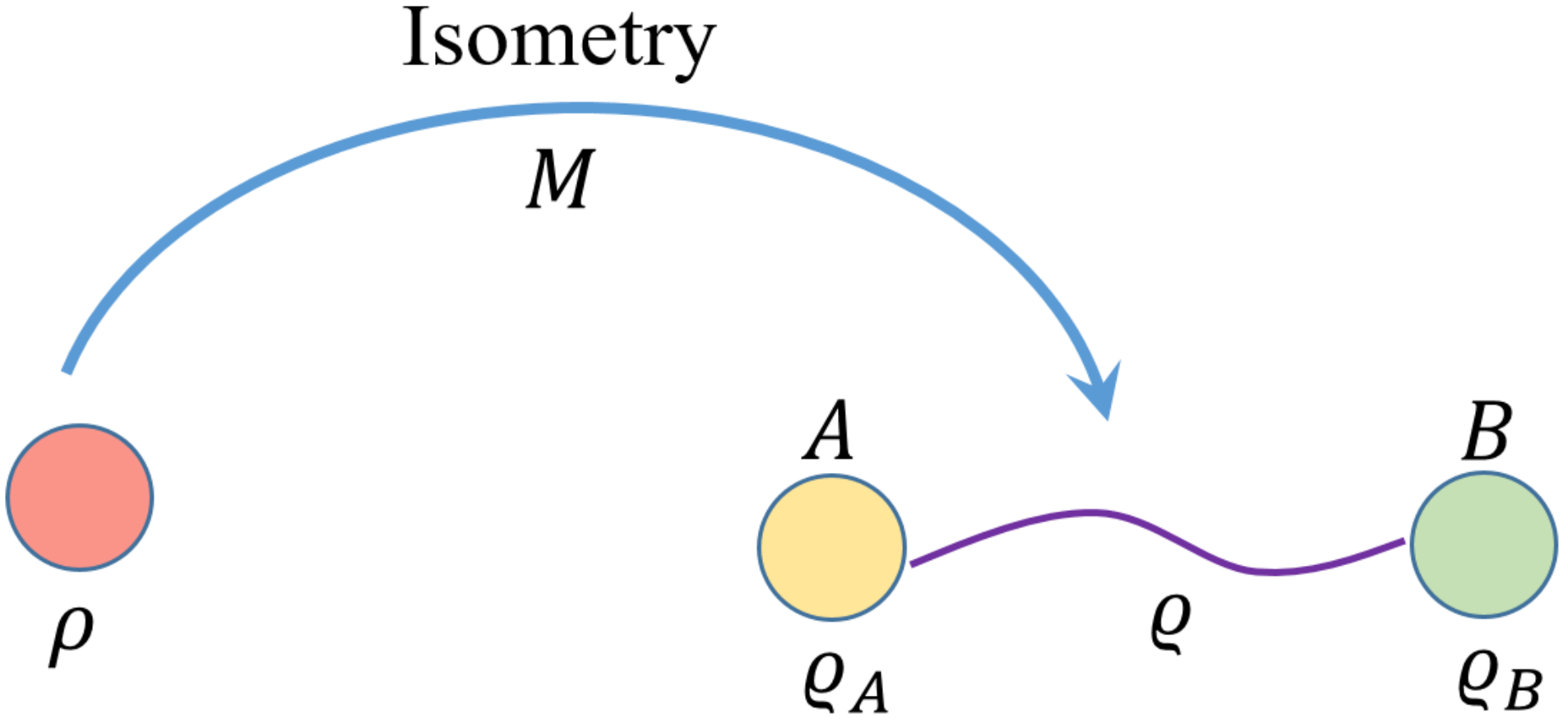}
	\caption{\label{fig:HideMask}
		Hiding and masking of quantum information. In the former case, $\varrho_A$ is independent of $\rho$ (in a given set $\scrS$); in the latter case, both $\varrho_A$ and $\varrho_B$ are independent of $\rho$. }
\end{figure}

By contrast, $M$ can \emph{mask} the set $\scrS$ if both  $\varrho_A$ and $\varrho_B$ are independent of $\rho$ (cf.~\fref{fig:HideMask}); that is,
\begin{equation}
\tr_B[M(\rho)]=\tau_A, \quad \tr_A[M(\rho)]=\tau_B\quad \forall \rho\in \scrS, 
\end{equation}
where $\tau_A$ and $\tau_B$ are fixed density matrices and are referred to as the common reduced density matrices. In this case, $M$ is called a \emph{masker} for $\scrS$, and $\scrS$ is \emph{maskable}. Note that  the  masker $M$ can hide the information from both $A$ and $B$, so all information is hidden 
in the correlations.  Meanwhile, this masker offers a (2,2) threshold scheme for quantum secret sharing \cite{ClevGL99}. A maskable set $\scrS$  is \emph{maximal} if it is not contained in any other maskable set in $\scrD(\caH)$. When  $\scrS\subseteq \scrD^\rmR(\caH)$, the masker $M$ is called a real masker if $M(\scrS)\subseteq \scrD^\rmR(\caH_A\otimes \caH_B)$.

The \emph{masking spectrum} of $M$ is defined as the nonzero spectrum of $\tau_A$ or $\tau_B$, assuming that $\scrS$ contains at least one pure state, so that $\tau_A$ and $\tau_B$  have the same nonzero spectrum. 
The \emph{masking purity} $\wp$ of $M$ refers to the purity of $\tau_A$ or $\tau_B$, that is, 
\begin{align}\label{eq:MaskPurity}
\wp:=\tr(\tau_A^2)=\tr(\tau_B^2).
\end{align}
Given a bipartite entanglement monotone $E$ \cite{HoroHHH09}, the \emph{entanglement of masking} of $\scrS$  is defined as 
\begin{equation}
E(\scrS):=\min_{M}E(\scrS,M),\quad E(\scrS,M):=\max_{\rho\in \scrS} E(M(\rho)),
\end{equation}
where the minimization is over all maskers $M$ for $\scrS$.   The  entanglement of masking associated with real maskers can be defined in a similar way and is denoted by $E^\rmR(\scrS)$. 
 The following lemma is proved in \aref{sec:MaskEntLemProof}. 
\begin{restatable}{lemma}{lemMaskEnt}\label{lem:MaskEnt}
	Suppose $\scrS$ contains a pure state $\rho_0$ and $M$ is a masker for $\scrS$. Then $E(\scrS,M)=E(M(\rho_0))$ for any entanglement monotone $E$.
\end{restatable}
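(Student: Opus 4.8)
The plan is to combine two observations: since $M$ is an isometry, the image $M(\rho_0)$ of the pure state $\rho_0$ is again pure, and since $M$ masks $\scrS$, every output $M(\rho)$ has the same reduced state $\tau_A$ on $A$ as $M(\rho_0)$. I will then show that each $M(\rho)$ is obtainable from $M(\rho_0)$ by a quantum channel acting on $B$ alone, so that monotonicity of $E$ under local operations yields $E(M(\rho))\le E(M(\rho_0))$; as $\rho_0\in\scrS$, the maximum defining $E(\scrS,M)$ is attained at $\rho_0$.

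First I would fix a unit vector $\ket{\psi_0}$ with $\rho_0=\outer{\psi_0}{\psi_0}$ and set $\ket{\Psi}:=M\ket{\psi_0}$, which is a unit vector in $\caH_A\otimes\caH_B$ because $M$ is an isometry; hence $M(\rho_0)=\outer{\Psi}{\Psi}$ is pure, and the masking hypothesis applied to $\rho_0$ gives $\tr_B\outer{\Psi}{\Psi}=\tau_A$. Next, for an arbitrary $\rho\in\scrS$ I would introduce an auxiliary space $\caH_E$ and a purification $\ket{\Phi}\in\caH_A\otimes\caH_B\otimes\caH_E$ of $\varrho=M(\rho)$. Because $\tr_{BE}\outer{\Phi}{\Phi}=\tr_B\varrho=\tau_A$, the vector $\ket{\Phi}$ is a second purification of the same $\tau_A$, now with purifying system $\caH_B\otimes\caH_E$.

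The key step is the uniqueness of purification up to an isometry on the purifying system: since $\ket{\Psi}$ and $\ket{\Phi}$ both purify $\tau_A$, there is an isometry $V:\caH_B\to\caH_B\otimes\caH_E$ with $\ket{\Phi}=(\mathrm{id}_A\otimes V)\ket{\Psi}$. Tracing out $\caH_E$ then gives $M(\rho)=(\mathrm{id}_A\otimes\caN)[M(\rho_0)]$, where $\caN(X):=\tr_E(VXV^\dag)$ is a quantum channel on $B$. In other words, $M(\rho)$ arises from the pure state $M(\rho_0)$ by a one-sided local channel, which is a trivial instance of LOCC that leaves the $A$-marginal equal to $\tau_A$.

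Finally I would invoke the defining property of an entanglement monotone, namely non-increase under LOCC and in particular under a deterministic local channel, to conclude $E(M(\rho))\le E(M(\rho_0))$ for every $\rho\in\scrS$. Since $\rho_0\in\scrS$, taking the maximum over $\scrS$ yields $E(\scrS,M)=\max_{\rho\in\scrS}E(M(\rho))=E(M(\rho_0))$. The one point that requires care is the purification step: one must check that the connecting isometry goes from $\caH_B$ into $\caH_B\otimes\caH_E$, so that the induced channel $\caN$ acts on $B$ only and does not disturb subsystem $A$ --- and this is precisely where the masking identity $\tr_B M(\rho)=\tau_A$ enters.
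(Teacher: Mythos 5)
Your proposal is correct and follows essentially the same route as the paper: both proofs purify $M(\rho)$, use the common reduced state $\tau_A$ forced by the masking condition to relate that purification to the pure state $M(\rho_0)$ via an isometry on the non-$A$ side, and finish by monotonicity of $E$ under discarding the purifying system. The only cosmetic difference is that you package the isometry followed by the partial trace as a single one-sided quantum channel applied to $M(\rho_0)$, whereas the paper phrases the same fact as equality of Schmidt coefficients between the two purifications followed by monotonicity under partial trace.
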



If there exists no masker for $\scrS$, then  $\scrS$ is  not maskable.  
 By  \rcite{ModiPSS18}, the sets $\scrP(\caH)$ and $\scrD(\caH)$ are not maskable.  This no-masking theorem also follows from the no-hiding theorem \cite{BrauP07}; note that a maskable set is hidable (not antiscrambling) by definition, and an antiscrambling set is not maskable. Nevertheless, a maskable set can contain infinite nonorthogonal quantum states. In particular, any set of phase-parameterized states is maskable \cite{ModiPSS18} (cf. \sref{sec:PhaseReal}). In addition, any set of commuting density matrices is maskable \cite{DuGCH21} since these density matrices can be expressed as convex mixtures of pure states in an orthonormal basis and so can be masked using a generalized Bell basis. Together with \rcite{BarnCFJ96}, this fact implies  that any set of quantum  states that can be broadcast is maskable, as illustrated in \fref{fig:NoGo}.

\begin{figure}
	\includegraphics[width=6.5cm]{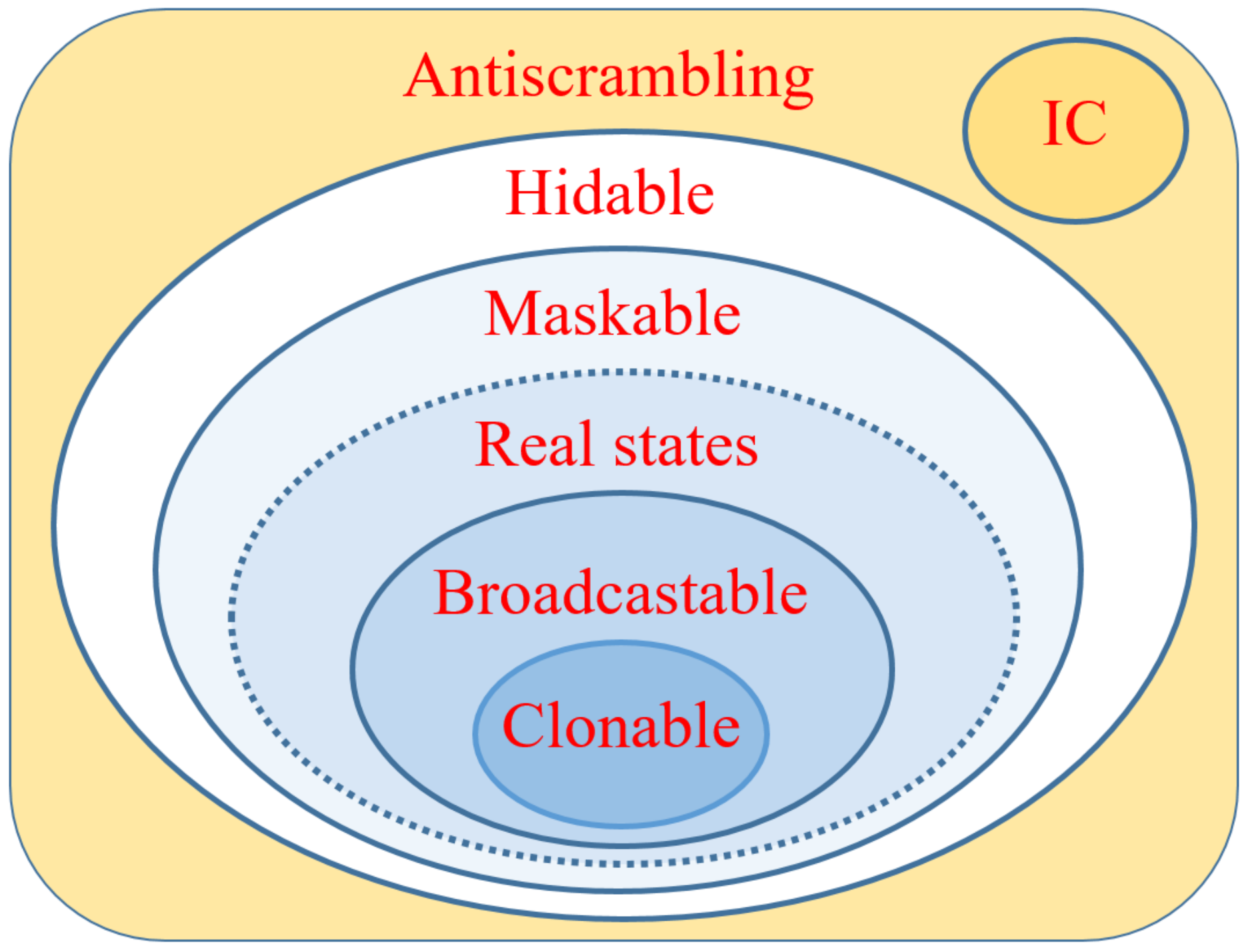}
	\caption{\label{fig:NoGo}
		Hierarchy of no-go theorems and associated sets  of quantum states.	A set of quantum states can be cloned iff distinct states are orthogonal, while it can be broadcast iff all states commute with each other \cite{WootZ82,Diek82,BarnCFJ96}. Any set that can be broadcast is equivalent to a subset of the set $\scrD^\rmR(\caH)$ of real states, which can be masked even with a real masker. Any maskable set can hide from one party in a nontrivial way. So no-hiding and no-masking theorems do not apply to real quantum mechanics. By contrast, any IC set is antiscrambling (cannot hide from one party in a nontrivial way). Note that the set marked by "Antiscrabmling" is the complement of the set marked by "Hidable", as indicated by the color coding.}
\end{figure}

\section{\label{sec:LimHideMask}Limitations on hiding and masking  quantum information}

\subsection{Stronger no-hiding and no-masking theorems}

Let $\scrS\subseteq \scrD(\caH)$ be a set of density matrices. The set $\scrS$ is IC if $\spa(\scrS)=\spa(\scrD(\caH))$. In this case, any other density matrix is uniquely determined by the overlaps (transition probabilities) with density matrices in the set. 
This definition is motivated by an analogous definition for quantum measurements, usually represented by  positive operator-valued measures (POVMs) \cite{Prug77,Scot06,ZhuE11}. By virtue of this concept, we can strengthen the no-hiding and no-masking theorems \cite{BrauP07,ModiPSS18}, which is instructive to understanding the distinctions between complex quantum mechanics and real quantum mechanics. \Pref{pro:MaskSpan} below follows from the very definitions.

\begin{proposition}\label{pro:MaskSpan}
	If $\spa(\scrS_1)\subseteq\spa(\scrS_2)\subseteq \scrD(\caH)$, then $\scrS_1$ is maskable whenever $\scrS_2$ is. If $\spa(\scrS_1)=\spa(\scrS_2)$; then $\scrS_1$ is hidable or maskable  iff $\scrS_2$ is.
\end{proposition}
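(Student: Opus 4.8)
The plan is to exploit the fact that both masking and hiding are \emph{affine} conditions on the input state, so that they are controlled entirely by the affine hull, and hence by the linear span, of the set. First I would reformulate the definitions in operator-theoretic terms: for a fixed isometry $M$, the two partial-trace maps $\rho\mapsto\tr_B(M\rho M^\dag)$ and $\rho\mapsto\tr_A(M\rho M^\dag)$ are linear, so the statement that $M$ masks $\scrS$ (resp.\ partially masks $\scrS$ from $A$) is precisely the statement that both (resp.\ one) of these maps is \emph{constant} on $\scrS$. The elementary but crucial observation is that a linear map which is constant on $\scrS$ is automatically constant, with the same value, on the affine hull $\mathrm{aff}(\scrS)$.

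The next step is to relate the affine hulls to the spans using the trace-one normalization. Since every element of $\scrD(\caH)$ has unit trace, one checks that $\mathrm{aff}(\scrS)=\{X\in\spa(\scrS):\tr X=1\}$, and in particular $\spa(\scrS_1)\subseteq\spa(\scrS_2)$ forces $\scrS_1\subseteq\mathrm{aff}(\scrS_2)$. Granting this, the first claim is immediate: if $M$ masks $\scrS_2$ with common reduced states $\tau_A,\tau_B$, then both reduced maps are constant on $\mathrm{aff}(\scrS_2)\supseteq\scrS_1$, so $M$ masks $\scrS_1$ with the \emph{same} $\tau_A,\tau_B$; no positivity check is needed here because $\tau_A,\tau_B$ are already density matrices. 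Single-state sets are handled separately but trivially, since $\dim\spa(\scrS)=1$ iff $\scrS$ reduces to one state.

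For the second claim I would use $\spa(\scrS_1)=\spa(\scrS_2)$, hence $\mathrm{aff}(\scrS_1)=\mathrm{aff}(\scrS_2)$. The maskability equivalence then follows by applying the first claim in both directions. For the hidability equivalence the same affine argument shows that $M$ partially masks $\scrS_1$ iff it partially masks $\scrS_2$, so it remains only to check that the \emph{triviality} of a partial masker --- the antiscrambling product form of \eref{eq:antiscrambling} --- is preserved. Here I would take a trivial masker $M$ for $\scrS_2$, write any $\rho\in\scrS_1$ as an affine combination $\rho=\sum_i c_i\sigma_i$ of states $\sigma_i\in\scrS_2$ with $\sum_i c_i=1$, and push it through $M$: linearity gives $M(\rho)=\varrho_{AB_1}\otimes(\sum_i c_i\varrho_{B_2}^{(i)})$, which is again of the required product form.

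The step I expect to be the main obstacle is this last one, preserving triviality, since it is the only place where positivity is not automatic: the factor $\sum_i c_i\varrho_{B_2}^{(i)}$ carries possibly negative coefficients $c_i$ and must still be shown to be a genuine density matrix. I would resolve this by noting that $M(\rho)=M\rho M^\dag$ is positive semidefinite of unit trace, and that $\varrho_{AB_1}$ is a fixed nonzero density matrix; positivity and unit trace of the second tensor factor then follow from the corresponding properties of the product. By the symmetry $\scrS_1\leftrightarrow\scrS_2$ this establishes a correspondence between trivial, and hence between nontrivial, partial maskers of the two sets, completing the hidability equivalence.
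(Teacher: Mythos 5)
Your proof is correct and takes essentially the route the paper intends: the paper gives no explicit argument, stating only that Proposition~\ref{pro:MaskSpan} "follows from the very definitions," and what it has in mind is exactly the observation you formalize --- that the reduced-state maps $\rho\mapsto\tr_B(M\rho M^\dag)$ and $\rho\mapsto\tr_A(M\rho M^\dag)$ are linear, hence constant on $\scrS$ iff constant on its affine hull, which by the unit-trace normalization is determined by $\spa(\scrS)$. Your additional care with the triviality transfer in \eref{eq:antiscrambling} --- checking that the factor $\sum_i c_i\varrho_{B_2}^{(i)}$ is positive with unit trace despite the possibly negative affine coefficients --- correctly fills in the one nontrivial detail the paper leaves implicit.
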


\begin{restatable}{theorem}{thmHideIC}\label{thm:HideIC}
	Any IC set of quantum states is antiscrambling. 
\end{restatable}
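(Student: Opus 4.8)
The plan is to exploit informational completeness through \emph{linearity}: although the masking condition is imposed only on $\scrS$, it forces a rigid global structure on the isometry $M$. Let $M\colon\caH\to\caH_A\otimes\caH_B$ be any partial masker for an IC set $\scrS$, so that $\tr_B M(\rho)=\tau_A$ for all $\rho\in\scrS$, with $\tau_A$ fixed. First I would note that the map $\rho\mapsto\tr_B M(\rho)$ is linear and that every trace-one Hermitian operator---in particular every $\rho\in\scrD(\caH)$ and every pure state---is an \emph{affine} combination of elements of $\scrS$. Indeed, $\spa(\scrS)=\spa(\scrD(\caH))$ supplies $d^2$ linearly independent density matrices in $\scrS$; lying in the trace-one hyperplane, they are affinely independent, so their affine hull is exactly that hyperplane. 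Writing $\rho=\sum_i c_i\rho_i$ with $\rho_i\in\scrS$ and $\sum_i c_i=\tr\rho=1$ then gives $\tr_B M(\rho)=\sum_i c_i\tau_A=\tau_A$, so the masking condition automatically extends to all of $\scrP(\caH)$ and $\scrD(\caH)$. At this point one route is immediate: by \pref{pro:MaskSpan} (hidability depends only on the span) together with the no-hiding theorem \cite{BrauP07}, which asserts that $\scrD(\caH)$ is antiscrambling, $\scrS$ is antiscrambling as well.

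For a self-contained argument I would instead extract the trivial tensor structure directly. Fix the spectral decomposition $\tau_A=\sum_{k=1}^r p_k\outer{a_k}{a_k}$ with $p_k>0$. Because $\tr_B M(\outer{\psi}{\psi})=\tau_A$ for \emph{every} pure $\ket{\psi}$, each $M\ket{\psi}$ is a purification of $\tau_A$ and so decomposes as $M\ket{\psi}=\sum_{k=1}^r\sqrt{p_k}\,\ket{a_k}_A\otimes\ket{\phi_k(\psi)}_B$ with $\ket{\phi_k(\psi)}$ orthonormal. The vectors $\ket{\phi_k(\psi)}=p_k^{-1/2}(\bra{a_k}\otimes I)M\ket{\psi}$ define linear maps $V_k\colon\ket{\psi}\mapsto\ket{\phi_k(\psi)}$, and the per-state relations $\langle\phi_k(\psi)|\phi_l(\psi)\rangle=\delta_{kl}$, valid for all $\ket\psi$, promote by polarization to the operator identities $V_k^\dag V_l=\delta_{kl} I$. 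Introducing $\caH_{B_1}=\bbC^r$ with orthonormal basis $\{\ket k\}$, a copy $\caH_{B_2}\cong\caH$, and the isometry $W\colon\caH_{B_1}\otimes\caH_{B_2}\to\caH_B$ defined by $W(\ket k\otimes\ket\chi)=V_k\ket\chi$ (well defined onto $\caH_B'=\bigoplus_k V_k(\caH)$ precisely because $V_k^\dag V_l=\delta_{kl}I$), one obtains $M\ket\psi=(I_A\otimes W)\bigl(\ket{\Omega}_{AB_1}\otimes\ket{\psi}_{B_2}\bigr)$ with the fixed state $\ket{\Omega}=\sum_k\sqrt{p_k}\,\ket{a_k}_A\otimes\ket{k}_{B_1}$. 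Conjugating and using linearity in $\rho$ then yields $M(\rho)=\varrho_{AB_1}\otimes\varrho_{B_2}$ on $\caH_A\otimes\caH_B'$ with $\varrho_{AB_1}=\outer{\Omega}{\Omega}$ fixed and $\varrho_{B_2}$ depending on $\rho$, which is exactly the trivial form \eqref{eq:antiscrambling}.

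Finally, since an IC set spans a $d^2$-dimensional space it contains at least two distinct states, so both clauses of the definition of antiscrambling are met. I expect the main obstacle to be the step that converts the \emph{pointwise} Schmidt data $\ket{\phi_k(\psi)}$ into the \emph{global} operator identities $V_k^\dag V_l=\delta_{kl}I$: this is precisely where informational completeness is indispensable, since only the extension of the masking condition to a spanning set lets one promote per-state orthonormality to genuine operator relations via linearity. A secondary technical point is a degenerate $\tau_A$, where one must fix a single eigenbasis $\{\ket{a_k}\}$ throughout so that the decomposition and the maps $V_k$ are consistently defined.
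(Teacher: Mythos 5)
Your proposal is correct and takes essentially the same route as the paper: your first argument (linearity extends the partial-masking condition from an IC set to all of $\scrD(\caH)$, then invoke \pref{pro:MaskSpan} and the no-hiding theorem \cite{BrauP07}) is precisely the paper's main-text justification, and your self-contained construction parallels the appendix proof step by step. The only difference is bookkeeping: the paper invokes informational completeness at the end, using the dual of a linearly independent subset of $\scrS$ to conclude that the operators $R(v,u)$ vanish, whereas you invoke it at the start and reach the equivalent operator identities $V_k^\dag V_l=\delta_{kl}\openone$ by polarization over all pure states, before building the same isometry that exhibits the trivial form \eqref{eq:antiscrambling}.
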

\Thref{thm:HideIC} follows from \rcite{BrauP07} and \pref{pro:MaskSpan}.  It implies \thref{thm:MaskIC} below, which  also
follows from \rcite{ModiPSS18} and \pref{pro:MaskSpan}. Self-contained proofs can be found in \aref{sec:NoHMproof}.

\begin{restatable}{theorem}{thmMaskIC}\label{thm:MaskIC}
	Any IC set of quantum states is not maskable.
\end{restatable}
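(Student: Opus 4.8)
The plan is to derive \thref{thm:MaskIC} as an immediate consequence of \thref{thm:HideIC} together with the structural relationship between masking and hiding already laid out in the preliminaries. The key observation, stated explicitly in the text following \lref{lem:MaskEnt}, is that \emph{a maskable set is always hidable} (equivalently, not antiscrambling): by definition a masker keeps both reduced states $\tau_A$ and $\tau_B$ fixed, so in particular it keeps $\varrho_A$ fixed and thus qualifies as a partial masker; if such a masker existed and were nontrivial the set would be hidable, whereas an antiscrambling set admits only trivial partial maskers. Contrapositively, \emph{an antiscrambling set cannot be maskable}. Since \thref{thm:HideIC} already establishes that any IC set $\scrS$ is antiscrambling, it follows directly that no IC set is maskable.

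First I would make the implication \emph{antiscrambling} $\Rightarrow$ \emph{not maskable} fully rigorous by examining what a masker would force. Suppose, for contradiction, that $M:\caH\to\caH_A\otimes\caH_B$ masks the IC set $\scrS$, so that $\tr_B[M(\rho)]=\tau_A$ and $\tr_A[M(\rho)]=\tau_B$ are both independent of $\rho\in\scrS$. Then $M$ is in particular a partial masker for subsystem $A$, so $\scrS$ is hidable by definition. But \thref{thm:HideIC} says $\scrS$ is antiscrambling, i.e.\ \emph{not} hidable, provided $\scrS$ contains at least two distinct states. An IC set necessarily spans $\spa(\scrD(\caH))$, whose dimension is $d^2\geq 4>1$, so $\scrS$ cannot consist of a single state and the hidable-versus-antiscrambling dichotomy genuinely applies. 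This contradiction rules out the existence of any masker.

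An alternative route, which I would mention as the self-contained argument promised for \aref{sec:NoHMproof}, is to invoke \pref{pro:MaskSpan} together with the no-masking theorem of \rcite{ModiPSS18}. Since $\scrS$ is IC we have $\spa(\scrS)=\spa(\scrD(\caH))=\spa(\scrP(\caH))$, and \pref{pro:MaskSpan} asserts that maskability depends only on the span; hence $\scrS$ is maskable iff $\scrD(\caH)$ is. As $\scrD(\caH)$ is not maskable by the no-masking theorem, neither is $\scrS$. This reduces the whole statement to a span-invariance argument plus the known result, paralleling exactly how \thref{thm:HideIC} reduces to \rcite{BrauP07}.

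I do not anticipate a serious obstacle here, since the theorem is explicitly flagged in the excerpt as a consequence of \thref{thm:HideIC} and \pref{pro:MaskSpan}. The only point requiring care is the logical bookkeeping around the definitions: one must verify that the ``single-state'' escape clause in the definition of hidable does not undermine the contrapositive, and that the partial-masker condition ($\varrho_A$ fixed) is genuinely weaker than the masking condition (both $\varrho_A$ and $\varrho_B$ fixed), so that every masker is automatically a partial masker. Both are immediate once the definitions in \sref{sec:HideMask} are read carefully, so the proof is essentially a short chain of implications rather than a new construction.
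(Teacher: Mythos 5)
Your proof is correct, but it is not the argument the paper actually writes out for this theorem: it is the reduction that the paper only sketches in the main text ("\thref{thm:HideIC} ... implies \thref{thm:MaskIC} below, which also follows from \rcite{ModiPSS18} and \pref{pro:MaskSpan}"). The paper's own proof, in \aref{sec:NoHMproof}, is direct and self-contained: assuming a masker $M$ for an IC set $\scrK=\{|\psi_h\>\}_h$, it expands each $|\psi_h\>$ in a linearly independent subfamily and its dual basis, writes $M|\psi_h\>$ in the Schmidt form $\sum_u\sqrt{\lambda_u}\,|a_u\>|b_u^{(h)}\>$, packages the masking conditions into a hermitian operator $\tilde{R}$ on $\caH\otimes\caH_B$ satisfying $\tr_\caH[(|\psi_h\>\<\psi_h|\otimes\openone_B)\tilde{R}]=0$ for all $h$, invokes informational completeness to force $\tilde{R}=0$, and then extracts the impossible identity $\lambda_u=\lambda_u\<\psi_k|\psi_j\>\<b_u^{(j)}|b_u^{(k)}\>$ (for $j\neq k$ and $\lambda_u>0$) since $|\<\psi_k|\psi_j\>|<1$. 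Your route buys brevity and logical economy, and both of your reductions are legitimate; what the paper's direct proof buys is independence from the no-masking theorem of \rcite{ModiPSS18} and, more importantly, a pinpoint diagnosis of where the argument breaks in real quantum mechanics---there, the failure of local tomography means the masking conditions no longer imply $\tilde{R}=0$---which is exactly the discussion the paper attaches to this proof and which motivates the constructions of \sref{sec:MaskRQ}. One bookkeeping point you flag but do not carry out deserves a line: a masker $M$ for a set $\scrS$ with at least two states cannot be a \emph{trivial} partial masker, because triviality in the sense of \eref{eq:antiscrambling} together with a fixed $\tau_B$ would force $\varrho_{B_2}$, and hence $M(\rho)$ itself, to be constant on $\scrS$, contradicting the injectivity of the isometry $M$; with this one-line verification, your chain of implications (maskable $\Rightarrow$ hidable $\Rightarrow$ not antiscrambling, contraposed against \thref{thm:HideIC}) is complete.
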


The implications of \thsref{thm:HideIC} and \ref{thm:MaskIC} are discussed in more detail in the next subsection. Here it should be noted  that \thsref{thm:HideIC} and \ref{thm:MaskIC} do not apply to real quantum mechanics, which is tied to the failure of local tomography \cite{Arak80}, as discussed in \aref{sec:NoHMproof}.

\subsection{Implications}

To understand the implications of \thsref{thm:HideIC} and \ref{thm:MaskIC} concerning the limit of hiding and masking quantum information, it is instructive to  point out a basic property of IC sets of quantum states. Note that  
each density matrix on $\caH$   can be expressed as  $\rho=(\openone/d)+\sum_{j=1}^{d^2-1} r_j E_j$, where $\openone$ is the identity, $\{E_j\}_{j=1}^{d^2-1}$ is an orthonormal basis in the space of traceless hermitian matrices, and $\bm{r}=(r_j)_{j=1}^{d^2-1}$  is a generalized Bloch vector. The affine dimension of a set of generalized Bloch vectors is defined as the smallest dimension of affine spaces that contain this set. In this terminology, a set of density matrices is IC iff the (corresponding) set of generalized Bloch vectors has affine dimension $d^2-1$.  For a qubit, this fact leads to a simple geometric criterion. 
\begin{proposition}\label{pro:qubitIC}
	A set of qubit density matrices is IC iff the  set of Bloch vectors is not contained in any disk---the intersection of the Bloch ball with a plane.
\end{proposition}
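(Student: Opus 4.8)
The plan is to reduce everything to the affine-dimension criterion recalled just above: a set of qubit density matrices is IC iff the associated set of Bloch vectors has affine dimension $d^2-1=3$. Since for a qubit the Bloch vectors $\bm{r}$ live in $\bbR^3$ and, being feasible, all lie inside the Bloch ball, the entire argument is an elementary geometric equivalence in $\bbR^3$. I would prove the contrapositive: a set of qubit states is \emph{not} IC iff its Bloch vectors are contained in some disk.

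First I would note that, because the ambient space is $\bbR^3$, the affine dimension of any set of Bloch vectors is at most $3$, and it equals $3$ precisely when the set is IC. Hence "not IC" is equivalent to "affine dimension $\le 2$," which in turn is equivalent, directly from the definition of affine dimension, to the set being contained in some plane (a $2$-dimensional affine subspace); the degenerate cases of affine dimension $0$ or $1$ (a point or a line) are trivially contained in a plane as well.

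The remaining step is to identify "contained in a plane" with "contained in a disk." Because every Bloch vector already lies in the Bloch ball, if the set lies in a plane $P$ then it lies in $P\cap(\text{Bloch ball})$, which is by definition a disk. Conversely, a disk is contained in its defining plane, so any set inside a disk lies in a plane. Combining the two reductions yields "not IC $\Leftrightarrow$ contained in a disk," and the contrapositive is the proposition. The only subtlety---not a genuine obstacle---is the treatment of planes tangent to or missing the ball; but since a nonempty set of feasible Bloch vectors forces the containing plane to meet the ball in a genuine (nonempty) disk, these cases cause no trouble.
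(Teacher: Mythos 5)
Your proof is correct and follows essentially the same route as the paper, which presents this proposition as an immediate consequence of the affine-dimension characterization of IC sets stated just before it (IC iff the Bloch vectors have affine dimension $d^2-1=3$), combined with the observation that a plane's intersection with the Bloch ball is a disk. Your write-up simply makes explicit the elementary geometric steps the paper leaves implicit, including the harmless degenerate cases.
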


In addition, any set  of qubit states that is contained in a disk in the Bloch ball is maskable \cite{ModiPSS18,LianLF19,DingH20} and thus not antiscrambling. 
To see this, it suffices to consider pure states. Up to a unitary transformation and irrelevant phase factors,  such a set of kets  is contained in a set of the form 
\begin{equation}
\scrK=\{\cos\theta|0\>+ \rme^{\rmi\alpha}\sin\theta |1\>\, |\, 0\leq \alpha<2\pi\},
\end{equation}
where $0\leq \theta\leq \pi/4$.
This set can be masked by the isometry:  $|0\>\mapsto|00\>, |1\>\mapsto |11\> $ \cite{ModiPSS18}. So \thsref{thm:HideIC} and \ref{thm:MaskIC}
imply the following corollary. 
\begin{corollary}\label{cor:MaskQubit}
	A set of qubit states is hidable (or maskable) iff the set is not IC, iff the set of Bloch vectors is contained in a disk.
\end{corollary}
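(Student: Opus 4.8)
The plan is to close a short cycle of implications among the four properties [maskable], [hidable], [not IC], and [Bloch vectors contained in a disk], drawing on the results already in hand. The last equivalence---[not IC] iff [Bloch vectors in a disk]---requires no work: it is exactly the contrapositive of \pref{pro:qubitIC}, which states that a qubit set is IC precisely when its Bloch vectors are \emph{not} contained in any disk. So the task reduces to showing that, for qubits, [maskable], [hidable], and [not IC] are mutually equivalent, even though maskability and hidability differ in general.

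First I would dispose of the two easy implications. That a maskable set is hidable is noted in the text by definition (a maskable set is not antiscrambling). For the direction of real interest I would argue [hidable] $\Rightarrow$ [not IC] by contraposition: if $\scrS$ were IC, then \thref{thm:HideIC} makes it antiscrambling, i.e. not hidable. (Equivalently one may invoke \thref{thm:MaskIC}, IC $\Rightarrow$ not maskable, for the [maskable] $\Rightarrow$ [not IC] edge directly.) The single-state case is harmless, since one density matrix cannot span the four-dimensional real space of qubit Hermitian operators and so is never IC, while being hidable by definition.

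It remains to prove the closing implication [not IC] $\Rightarrow$ [maskable], which carries the content. By the equivalence already noted, "not IC" means the Bloch vectors of $\scrS$ lie in a disk, i.e. in the intersection of the Bloch ball with some plane $\Pi$. Let $\scrS_2$ be the set of pure states whose Bloch vectors form the boundary circle $\Pi\cap\partial(\text{ball})$. Since the affine span of $\scrS$'s Bloch vectors is contained in $\Pi$, which is the affine span of the circle, lifting through the affine map $\vec r\mapsto(\openone+\vec r\cdot\vec\sigma)/2$ gives $\spa(\scrS)\subseteq\spa(\scrS_2)$; hence by \pref{pro:MaskSpan} it suffices to mask $\scrS_2$. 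A circle on the Bloch sphere can be rotated by a qubit unitary in $\mathrm{SU}(2)$ (acting as a rotation on Bloch vectors) to a latitude $\{z=\cos2\theta\}$, and after a further unitary and global phases its kets are exactly the standard family $\scrK=\{\cos\theta\ket0+\rme^{\rmi\alpha}\sin\theta\ket1\}$ with $0\le\theta\le\pi/4$. This family is masked by the isometry $\ket0\mapsto\ket{00}$, $\ket1\mapsto\ket{11}$, as recorded just above the corollary, so $\scrS_2$ and therefore $\scrS$ is maskable, closing the cycle.

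The main obstacle is the reduction step rather than any single computation. The crux is to justify passing from the mixed set $\scrS$ to the pure-state boundary circle: I must verify that the inclusion $\spa(\scrS)\subseteq\spa(\scrS_2)$ holds, which rests on the fact that both Bloch-vector sets have affine span contained in the common plane $\Pi$, so that \pref{pro:MaskSpan} applies verbatim. The remaining geometric normalization---rotating an arbitrary disk to a latitude slice and reflecting into the range $0\le\theta\le\pi/4$ so that its pure states coincide with $\scrK$ up to a unitary and phases---is routine once one uses that $\mathrm{SU}(2)$ realizes every rotation of the Bloch sphere.
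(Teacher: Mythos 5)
Your proof is correct and follows essentially the same route as the paper: \pref{pro:qubitIC} for the geometric equivalence, \thsref{thm:HideIC} and \ref{thm:MaskIC} for the IC direction, and reduction of a disk to the standard family $\scrK$ masked by $\ket{0}\mapsto\ket{00}$, $\ket{1}\mapsto\ket{11}$. Your explicit invocation of \pref{pro:MaskSpan} to pass from the mixed set to the boundary circle of pure states merely spells out the step the paper compresses into ``it suffices to consider pure states.''
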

\Crref{cor:MaskQubit} confirms the hyperdisk conjecture proposed in \rcite{ModiPSS18} for the qubit case. Here the proof is simpler than previous proofs and has wider applicability \cite{LianLF19,DingH20}. 
This corollary also shows that a nontrivial disk (not a single point) in the qubit case
is a maximal maskable set. In particular, $\scrD^\rmR(\caH)$ is a maximal maskable set, which is consistent with \thref{thm:MaskMax} presented in \sref{sec:MaskRQ} below. \Crref{cor:MaskQubit} also implies that any hidable or maskable set in the qubit case has measure zero. This conclusion is not a coincidence as confirmed by the following corollary.

\begin{corollary}
	Any hidable  set of pure (mixed) states in $\scrP(\caH)$ ($\scrD(\caH)$) has measure zero with respect to the uniform measure, and so does any maskable  set of pure (mixed) states.
\end{corollary}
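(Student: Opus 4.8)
The plan is to reduce both assertions to a single dimension-counting fact: a hidable set cannot be informationally complete, so its generalized Bloch vectors are confined to a proper affine subspace, which is a null set for the relevant uniform measure. Since a maskable set is hidable by definition, it suffices to treat hidable sets. By \Thref{thm:HideIC} every IC set is antiscrambling, hence not hidable; contrapositively, a hidable set $\scrS$ is not IC, so $\spa(\scrS)\subsetneq\spa(\scrD(\caH))$ and the Bloch vectors of $\scrS$ have affine dimension at most $d^2-2$. Equivalently, there exist a nonzero traceless Hermitian matrix $Y$ and a scalar $c$ with
\begin{equation}
\tr(\rho Y)=c \quad \forall\, \rho\in\scrS ,
\end{equation}
so that $\scrS$ is contained in the hyperplane section $\scrS_Y:=\{\rho:\tr(\rho Y)=c\}$ of $\scrP(\caH)$ or $\scrD(\caH)$.

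For mixed states the conclusion is then immediate. The uniform measure on $\scrD(\caH)$ is absolutely continuous with respect to Lebesgue measure on the full $(d^2-1)$-dimensional Bloch body, while $\scrS_Y$ lies in an affine hyperplane of dimension $d^2-2$, which is Lebesgue null. Hence every hidable, and in particular every maskable, subset of $\scrD(\caH)$ has measure zero.

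The pure-state case is where the real work lies, because $\scrP(\caH)\cong\mathbb{CP}^{d-1}$ is only $(2d-2)$-dimensional and sits on the boundary of the Bloch body, so I cannot simply invoke Lebesgue-nullity of a hyperplane. Instead I would regard $f(|\psi\>):=\<\psi|Y|\psi\>-c$ as a real-analytic (indeed real-quadratic) function on the connected manifold $\mathbb{CP}^{d-1}$, whose zero set is exactly $\scrS_Y$. Since $Y$ is nonzero and traceless it has eigenvalues of both signs, so $\<\psi|Y|\psi\>$ sweeps a nondegenerate interval as $|\psi\>$ ranges over all pure states; thus $f$ is not identically zero. The zero set of a nonconstant real-analytic function on a connected real-analytic manifold has measure zero with respect to any smooth measure, and the uniform (Fubini--Study) measure is smooth, so $\scrS_Y$---and a fortiori $\scrS$---is null.

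The main obstacle is precisely this last measure-zero statement on $\scrP(\caH)$, where the naive Lebesgue argument is unavailable on the lower-dimensional pure-state manifold. I expect to dispatch it either through the real-analytic zero-set lemma sketched above, or equivalently by observing that for a Haar-random $|\psi\>$ the random variable $\<\psi|Y|\psi\>$ has an absolutely continuous distribution whenever $Y$ is non-scalar, whence $\Pr[\<\psi|Y|\psi\>=c]=0$. Either route converts the algebraic obstruction (non-IC $\Rightarrow$ a nontrivial linear constraint) into analytic nullity, completing both cases.
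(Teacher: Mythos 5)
Your proposal is correct and takes essentially the same route as the paper's proof: invoke \thref{thm:HideIC} to conclude a hidable (hence also any maskable) set is not IC, extract a nonzero Hermitian operator imposing a single linear constraint $\tr(Q\rho)=0$ on all states in the set, and conclude nullity with respect to the uniform measure. The only difference is one of completeness---the paper leaves the final measure-zero step implicit, whereas you justify it explicitly (Lebesgue-nullity of a hyperplane for $\scrD(\caH)$, and the real-analytic zero-set argument on $\mathbb{CP}^{d-1}$ for $\scrP(\caH)$), which is a worthwhile elaboration of the same argument rather than a different one.
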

Here the uniform measure means the measure induced by the Hilbert-Schmidt distance. In the case of pure states, it coincides with the measure induced by the Haar measure on the unitary group. For pure states, \rcite{LianLFF20}  proved that any maskable set has measure zero given the assumption that the output Hilbert spaces $\caH_A$ and $\caH_B$ have the same dimension as $\caH$. However, this assumption is too restrictive   according to our study on the masking of real quantum states
 as presented in \sref{sec:MaskRQ}.

\begin{proof}
	Suppose $\scrS\in \scrP(\caH)$ is hidable.  Then $\scrS$ is not IC according to \thref{thm:HideIC}. Therefore, we can find a nonzero hermitian operator $Q$ such that $\tr(Q\rho)=0$ for all $\rho\in \scrS$, which implies that $\scrS$ has measure zero within $\scrP(\caH)$. A similar proof applies to mixed states and to any hidable or maskable  set.
\end{proof}

Let $t$ be a positive integer. A weighted set of quantum states $\{|\psi_j\>, w_j\}_j$ in $\caH$  with $w_j> 0$ is a weighted $t$-design if $\sum_j w_j (|\psi_j\>\<\psi_j|)^{\otimes t}$ is proportional to the projector onto  the symmetric subspace in $\caH^{\otimes t}$; it is a $t$-design if all weights $w_j$ are equal \cite{Zaun11,ReneBSC04,Scot06}. Note that any weighted  $t$-design with $t\geq 2$ is automatically a weighted 2-design. In addition, the set $\{|\psi_j\>\}_j$ is IC if $\{|\psi_j\>, w_j\}_j$ is a  weighted 2-design for some choice of weights $w_j$ \cite{Scot06,ZhuE11}. Together with \thsref{thm:HideIC} and \ref{thm:MaskIC}, these observations yield the following corollary. 
\begin{corollary}\label{cor:design}
	Any  set of pure states that can  form a weighted 2-design is antiscrambling and not maskable. 	 
\end{corollary}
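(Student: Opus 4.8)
The plan is to deduce the corollary directly from \thsref{thm:HideIC} and \ref{thm:MaskIC} by showing that any set which can form a weighted 2-design is IC. Indeed, the excerpt already records the standard fact that if $\{|\psi_j\>, w_j\}_j$ is a weighted 2-design for some weights $w_j > 0$, then $\{|\psi_j\>\}_j$ is IC \cite{Scot06,ZhuE11}. Granting this, an IC set is antiscrambling by \thref{thm:HideIC} and not maskable by \thref{thm:MaskIC}, which is exactly the claim. Thus the whole corollary reduces to the implication ``weighted 2-design $\Rightarrow$ IC,'' and everything else is a one-line invocation of the two theorems.

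To make the reduction self-contained I would verify that implication via the frame (super)operator. After rescaling the weights so that $\sum_j w_j = 1$, the 2-design condition becomes $\sum_j w_j (\outer{\psi_j}{\psi_j})^{\otimes 2} = \frac{2}{d(d+1)} P_{\sym}$, with $P_{\sym} = \frac{1}{2}(\openone + S)$ the projector onto the symmetric subspace of $\caH \otimes \caH$ and $S$ the swap operator. Recall that $\scrS$ is IC iff the projectors $\{\outer{\psi_j}{\psi_j}\}_j$ span the full real space of Hermitian operators on $\caH$, equivalently iff the superoperator $\caF(X) := \sum_j w_j \<\psi_j|X|\psi_j\>\, \outer{\psi_j}{\psi_j}$ is invertible. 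The key step is to rewrite $\caF(X) = \frac{2}{d(d+1)} \tr_2[(\openone \otimes X) P_{\sym}]$ and then apply the swap identities $\tr_2[(\openone \otimes X)(\openone \otimes \openone)] = \tr(X)\openone$ and $\tr_2[(\openone \otimes X) S] = X$, which yield $\caF(X) = \frac{1}{d(d+1)}[\tr(X)\openone + X]$. This map is manifestly positive definite (with eigenvalue $1/d$ along $\openone$ and $1/[d(d+1)]$ on the traceless part), hence invertible, so $\scrS$ is IC.

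I do not expect any serious obstacle here: the only nontrivial ingredient is the frame-operator computation above, which is entirely routine once the swap trick is invoked, and in any case it may simply be cited. One minor point worth flagging in the writeup is the phrase ``can form'': it suffices that \emph{some} choice of positive weights turns the set into a weighted 2-design, and since any (weighted) $t$-design with $t \geq 2$ is in particular a weighted 2-design, the corollary applies uniformly to all such configurations.
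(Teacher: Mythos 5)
Your proposal is correct and follows exactly the paper's own route: the paper likewise derives \crref{cor:design} by citing the fact that a set forming a weighted 2-design is IC \cite{Scot06,ZhuE11} and then invoking \thsref{thm:HideIC} and \ref{thm:MaskIC}. Your additional frame-superoperator computation showing $\caF(X)=\frac{1}{d(d+1)}[\tr(X)\openone+X]$ is a correct, self-contained verification of that cited fact, but it does not change the essential argument.
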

Prominent examples of $2$-designs include symmetric informationally complete (SIC) POVMs \cite{Zaun11,ReneBSC04,FuchHS17} and complete sets of $d+1$ mutually unbiased bases \cite{Ivan81,WootF89,DurtEBZ10}. Recall that a SIC~POVM in dimension $d$ is composed of $d^2$ pure states with the equal pairwise fidelity of $1/(d+1)$. Two bases $\{|\psi_j\>\}_j$ and $\{|\phi_k\>\}_k$ are mutually unbiased if $|\<\psi_j|\phi_k\>|^2=1/d$ for all $j,k$. 
A set of mutually unbiased bases in dimension $d$ is complete if it contains $d+1$ bases. 
According to \Crref{cor:design}, all these sets are  antiscrambling and not maskable.

\section{\label{sec:HR}Hurwitz-Radon  matrices}

In this section, we introduce the concept of  HR matrices  (operators) \cite{Hurw23,Rado22,Eckm06}, which is crucial to establishing our main result on the masking of real quantum states. Except for \lref{lem:HRtauTrace} below, most results presented in this section  are well known \cite{Eckm06}. 
For the convenience  of the reader, here we formulate these results in a language that is familiar to researchers in the quantum information community.

\subsection{Definition and  basic properties of HR matrices}

 A set $\{U_j\}_{j=1}^{s}$ of $m\times m$ unitary  matrices (operators)  is a set of HR matrices  (operators) \cite{Hurw23,Rado22,Eckm06} if 
\begin{equation}\label{eq:HR}
U_jU_k+U_kU_j=-2\delta_{jk}\openone.
\end{equation}
For example, let $\sigma_x, \sigma_y, \sigma_z$ be the three Pauli matrices; then 
$\{\rmi\sigma_x, \rmi\sigma_y, \rmi\sigma_z\}$ is a set of HR matrices in dimension~2.
Such matrices play  crucial roles in the Dirac equation and in the representations of the Clifford algebra \cite{Loun01book}.
The defining equation above  implies that
\begin{align}
U_j^2=-\openone,\quad U_j^\dag=-U_j,\quad \tr(U_j^\dag U_k)=m\delta_{jk}. 
\end{align}
So each  $U_j$ has at most two distinct eigenvalues, namely, $\rmi$ and $-\rmi$; in addition $U_j$ and $U_k$ are orthogonal with respect to the Hilbert-Schmidt inner product when $j\neq k$. If $s\geq 2$, then \eref{eq:HR}  implies that $\tr(U_j)=0$ (cf.~\lref{lem:HRtauTrace} below); in the case $s=1$, this conclusion does not hold automatically, and this fact has important implications for the masking of real density matrices, as we shall see later.

The main properties of HR matrices (operators) are summarized below and proved in \aref{sec:HRpropProof}.
\begin{restatable}{lemma}{lemHRequiDef}\label{lem:HRequiDef}
	Suppose $U_0, U_1, U_2, \ldots, U_s$ are $s+1$ unitary matrices of the same size and $\vec{c}=(c_0, c_1, \ldots, c_s)$ is a real vector of  $s+1$ components; let	
\begin{align}
|\vec{c}|:=\sqrt{\sum_{j=0}^s c_j^2},\quad   U(\vec{c}):=\sum_{j=0}^s c_j U_j. 
\end{align}	
 Then the following  statements are equivalent:
	\begin{enumerate}
		\item $\{U_0^\dag U_j\}_{j=1}^{s}$ is a set of $s$ HR matrices.
		\item $U(\vec{c})^\dag U(\vec{c})=|\vec{c}|^2 \openone$ for each real vector $\vec{c}$. 
		\item $U(\vec{c})$ is unitary for each normalized real vector $\vec{c}$. 
	\end{enumerate}
\end{restatable}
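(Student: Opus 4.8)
The plan is to prove the equivalence of the three statements by establishing the cyclic chain $(1)\Rightarrow(2)\Rightarrow(3)\Rightarrow(1)$, which avoids proving all six implications directly. The computational heart of the matter is the expansion $U(\vec{c})^\dag U(\vec{c})=\sum_{j,k} c_j c_k\, U_j^\dag U_k$, so the first thing I would do is separate this double sum into its diagonal part ($j=k$) and its off-diagonal part ($j\neq k$). Since each $U_j$ is unitary, the diagonal terms contribute $\sum_j c_j^2\,\openone=|\vec{c}|^2\openone$ automatically. Thus statement (2) is equivalent to the vanishing of the symmetrized off-diagonal sum $\sum_{j<k} c_j c_k\,(U_j^\dag U_k+U_k^\dag U_j)=0$ for every real $\vec{c}$.

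For $(1)\Rightarrow(2)$, I would write $V_j:=U_0^\dag U_j$ for $j=1,\dots,s$, so that by hypothesis $\{V_j\}$ satisfies the HR relations \eqref{eq:HR}. The key algebraic observation is that the off-diagonal anticommutators $U_j^\dag U_k+U_k^\dag U_j$ can be re-expressed in terms of the $V_j$ by inserting $U_0 U_0^\dag=\openone$: concretely, $U_j^\dag U_k=V_j^\dag V_k$, and one must also handle the terms involving the index $0$ (where $V_0$ is effectively the identity). A short calculation using $V_j^\dag=-V_j$ and $V_jV_k+V_kV_j=-2\delta_{jk}\openone$ should show that each symmetrized off-diagonal pair vanishes, yielding (2). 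The implication $(2)\Rightarrow(3)$ is immediate: for a normalized $\vec{c}$ we have $|\vec{c}|^2=1$, so (2) gives $U(\vec{c})^\dag U(\vec{c})=\openone$; since $U(\vec{c})$ is a square matrix, a one-sided isometry condition already forces unitarity.

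For the final implication $(3)\Rightarrow(1)$, which I expect to be the main obstacle, I would exploit the fact that (3) holds for \emph{every} normalized real vector, so I can extract relations by choosing convenient $\vec{c}$. Specifically, applying (3) to the standard basis vectors shows each $U_j$ is unitary, and applying it to vectors of the form $(c_0,\dots,c_s)$ supported on just two coordinates, say indices $p$ and $q$ with $c_p^2+c_q^2=1$, gives the constraint $c_p c_q(U_p^\dag U_q+U_q^\dag U_p)=0$ for all such $c_p,c_q$; choosing any pair with $c_pc_q\neq 0$ forces $U_p^\dag U_q+U_q^\dag U_p=0$ whenever $p\neq q$. Setting $p=0$ yields $U_0^\dag U_q=-U_q^\dag U_0=-(U_0^\dag U_q)^\dag$, so each $V_q=U_0^\dag U_q$ is anti-Hermitian, giving $V_q^2=-V_q^\dag V_q=-\openone$; the remaining relations $U_p^\dag U_q+U_q^\dag U_p=0$ for $p,q\geq 1$ then translate, after inserting $U_0U_0^\dag$, into $V_p^\dag V_q+V_q^\dag V_p=0$, which combined with anti-Hermiticity gives exactly the HR anticommutation relation $V_pV_q+V_qV_p=-2\delta_{pq}\openone$. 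The subtle point requiring care is the bookkeeping between the conjugated products $U_p^\dag U_q$ and the genuine HR relations for $V_p=U_0^\dag U_q$, since the index $0$ plays a distinguished role; I would organize this by treating the anti-Hermiticity of the $V_q$ first and only then deriving the full anticommutator, so that the distinguished role of $U_0$ is cleanly absorbed.
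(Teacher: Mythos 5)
Your proposal is correct and follows essentially the same route as the paper: the cyclic chain $(1)\Rightarrow(2)\Rightarrow(3)\Rightarrow(1)$, expansion of $U(\vec{c})^\dag U(\vec{c})$ into diagonal and anticommutator terms, and extraction of the HR relations from unitarity of two-term combinations. The only cosmetic difference is that the paper works throughout with $V(\vec{c})=U_0^\dag U(\vec{c})$ and the specific vectors with entries $1/\sqrt{2}$, whereas you manipulate $U(\vec{c})$ directly and allow arbitrary two-coordinate vectors with $c_pc_q\neq 0$; both yield identical algebra.
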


\begin{restatable}{lemma}{lemHRtauTrace}\label{lem:HRtauTrace}
	Suppose $U_0=\openone$ and $\{U_j\}_{j=1}^{s}$ with  $s\geq 2$ is a set of HR matrices that commute with a hermitian matrix $\tau$. Suppose $j,k, j',k'$ are integers that satisfy the conditions $0\leq j<k\leq s$ and $0\leq j'<k'\leq s$; let $\alpha>0$. Then 
	\begin{align}
	&\tr(U_j  U_k \tau^\alpha)=0, \label{eq:HRtauTrace1}\\
	&\tr(U_j  U_k U_{j'}  U_{k'} \tau^\alpha)=-\tr(\tau^\alpha)\delta_{j j'}\delta_{k k'},\quad s\neq 3, \label{eq:HRtauTrace2}  \\
	&\tr(U_j  U_k U_{j'}  U_{k'} \tau^\alpha)=-\tr(\tau^\alpha)\delta_{j j'}\delta_{k k'}\nonumber\\
	& +(\delta_{j=0}\epsilon_{kj'k'}+\delta_{j'=0}\epsilon_{j k k'})\tr(U_1U_2U_3\tau^\alpha),\;\, s=3, \label{eq:HRtauTrace3}
	\end{align}	
	where $\epsilon_{kj'k'}$ is equal to 1 ($-1$)  if 	$(k, j', k')$ is an even  (odd) permutation of $(1, 2, 3)$ and is equal to 0 otherwise; $\epsilon_{j k k'}$ is defined in a similar way.
\end{restatable}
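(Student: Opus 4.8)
The plan is to exploit a single \emph{conjugation symmetry}. Since $\tau$, and hence $\tau^\alpha$, commutes with every $U_l$, while conjugation by a generator $U_l$ flips the sign of each anticommuting factor, one gets for any monomial $W$ in the $U$'s an identity $\tr(W\tau^\alpha)=\sgn\cdot\tr(W\tau^\alpha)$ with an explicit sign; whenever that sign is $-1$ the trace must vanish. Concretely, from \eqref{eq:HR} together with $U_l^{-1}=-U_l$ one has $U_lU_aU_l^{-1}=-U_a$ for $a\neq l,\ a\ge1$, while $U_0=\openone$, $U_l$, and $\tau^\alpha$ are fixed by the conjugation. This yields the key sign rule: for a product $U_{b_1}\cdots U_{b_q}$ of \emph{distinct} nonzero generators, conjugation by $U_l$ contributes $(-1)^{q-1}$ if $l\in\{b_1,\dots,b_q\}$ and $(-1)^{q}$ otherwise.

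Next I would reduce each four-fold product $U_jU_kU_{j'}U_{k'}$ to a canonical monomial $\pm\openone$ or $\pm U_{b_1}\cdots U_{b_q}$ (with $b_1<\cdots<b_q$ nonzero) by deleting each $U_0=\openone$, collapsing repetitions through $U_i^2=-\openone$, and sorting the rest with the anticommutation sign. A short case analysis on the ordered pairs (using $k,k'\ge1$, $j<k$, $j'<k'$) shows that $q=0$ arises \emph{only} on the diagonal $(j,k)=(j',k')$, where the monomial equals $-\openone$ and contributes $-\tr(\tau^\alpha)$. For $q\ge1$ the sign rule disposes of everything else: if $q$ is even, take $l\in\{b_1,\dots,b_q\}$ to get sign $(-1)^{q-1}=-1$; if $q$ is odd with $q<s$, take $l\in\{1,\dots,s\}\setminus\{b_1,\dots,b_q\}$ (nonempty) to get sign $(-1)^{q}=-1$. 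Either way $\tr(U_{b_1}\cdots U_{b_q}\tau^\alpha)=0$. This already yields \eqref{eq:HRtauTrace1} (a two-fold product has $q\in\{1,2\}$ with $1<s$) and gives \eqref{eq:HRtauTrace2} once the sole remaining possibility, $q=s$ with $q$ odd, is excluded.

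I would then isolate that exceptional possibility. Because a four-fold product has $q\le4$ and $q$ has the same parity as its number of nonzero indices, reaching $q=s$ with $q$ odd forces $s\in\{1,3\}$; since $s\ge2$ this leaves only $s=3$, which explains the dichotomy in the statement. (For even $s$ the top monomial $U_1\cdots U_s$ has $q=s$ even and is still killed by the $l\in\{b_1,\dots,b_q\}$ choice, while for odd $s\ge5$ the value $q=s$ is unreachable from four factors.) For $s=3$ the surviving monomials are exactly those whose nonzero indices are a permutation of $(1,2,3)$, which requires precisely one vanishing index; as $k,k'\ge1$, that index is $j$ or $j'$. When $j=0$ the product is $U_kU_{j'}U_{k'}=\epsilon_{kj'k'}\,U_1U_2U_3$, and when $j'=0$ it is $U_jU_kU_{k'}=\epsilon_{jkk'}\,U_1U_2U_3$, with $\epsilon$ the permutation sign from sorting. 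Taking traces against $\tau^\alpha$ reproduces the correction $(\delta_{j=0}\epsilon_{kj'k'}+\delta_{j'=0}\epsilon_{jkk'})\tr(U_1U_2U_3\tau^\alpha)$ in \eqref{eq:HRtauTrace3}, disjoint from the diagonal term since $q=3\neq0$ (and any anomalous term with $j=0$ has $j'\ge1$, so it is genuinely off-diagonal).

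I expect the main obstacle to be the sign bookkeeping rather than anything conceptual: I must verify that the diagonal case is the \emph{only} source of a scalar monomial, and track the anticommutation sign when sorting a three-generator product into $U_1U_2U_3$ so that $\epsilon_{kj'k'}$ and $\epsilon_{jkk'}$ emerge with the correct orientation and the two exceptional families ($j=0$ versus $j'=0$) are not double-counted. The conjugation argument itself is uniform and reusable, so the proof reduces to an elementary, if slightly tedious, enumeration of index patterns.
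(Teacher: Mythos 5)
Your proposal is correct and takes essentially the same approach as the paper: both rest on conjugating the monomial $U_jU_kU_{j'}U_{k'}$ by a generator $U_l$ that anticommutes with it (using that $\tau^\alpha$ commutes with every $U_l$) to force the trace to vanish, leaving only the diagonal term and, for $s=3$, the $U_1U_2U_3$ corrections. The only difference is presentational: your canonical-form reduction and parity rule explicitly verify the existence of the anticommuting generator, which the paper simply asserts case by case ("we can always find a matrix $V$\,\ldots").
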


When $\tau=\openone$, 
\lref{lem:HRtauTrace} yields
\begin{align}
&\tr(U_j  U_k)=0,\\
&\tr(U_j  U_k U_{j'}  U_{k'})=-m\delta_{j j'}\delta_{k k'},\quad s\neq 3,  \\
&\tr(U_j  U_k U_{j'}  U_{k'})=-m\delta_{j j'}\delta_{k k'}\nonumber\\
& +(\delta_{j=0}\epsilon_{kj'k'}+\delta_{j'=0}\epsilon_{j k k'})\tr(U_1U_2U_3),\quad s=3,
\end{align}
where $m=\tr(\openone)$.

\subsection{Construction of HR matrices}
In dimension 2, $\{\rmi\sigma_x, \rmi\sigma_y, \rmi\sigma_z\}$ is a set of HR matrices as mentioned before. In higher dimensions, HR matrices can be constructed iteratively. Given a set $\{U_j\}_{j=1}^{s}$ of $s$ HR matrices in dimension $m$, then  $s+2$ HR matrices can  be constructed in dimension $2m$ as follows,
\begin{equation}
\begin{gathered}
U_j\mapsto U_j\otimes \sigma_z, \quad j=1,2,\ldots, s,\\
U_{s+1}=\openone\otimes \rmi\sigma_x,\quad U_{s+2}=\openone\otimes \rmi\sigma_y. 
\end{gathered}
\end{equation}
In this way,  $2b+1$ HR matrices can be constructed in dimension $2^b$, which attain the maximum   number according to  \lref{lem:HRnumMax} below. If instead the dimension  is divisible by $2^b$, then the same number of HR matrices can be constructed by considering the tensor product with a suitable identity matrix.

Suppose $\{U_j\}_{j=1}^{r}$ and $\{V_j\}_{j=1}^{s}$ are two sets of HR matrices in dimensions $m_1$ and  $m_2$, respectively. Then  we can construct a set of $r+s+1$  HR matrices in dimension $2m_1m_2$ as follows \cite{Eckm06},
\begin{equation}\label{eq:HRcomposition}
W_j=\begin{cases}
U_j\otimes \openone\otimes \sigma_z & 1\leq j\leq r,\\
\openone\otimes V_{j-r}\otimes \sigma_x & r+1\leq j\leq r+s,\\
\openone\otimes\openone\otimes \rmi\sigma_y & j=r+s+1.
\end{cases}
\end{equation}
Note that this construction works even when  $r=0$ and $m_1=1$.

HR matrices can also be constructed from real orthogonal matrices. The maximum number of such HR matrices in dimensions 2, 4, 8 are 1, 3, 7, respectively. For example,  $\rmi\sigma_y$ is an HR matrix in  dimension 2, and  $\rmi\sigma_y\otimes\sigma_z$, $\rmi\sigma_y\otimes\sigma_x$, $\openone\otimes\rmi\sigma_y$ are three HR matrices  in dimension 4; seven HR matrices in dimension~8 can be constructed as follows, 
\begin{align}
&\sigma_x\otimes\rmi\sigma_y\otimes \openone,\, \sigma_x\otimes\sigma_z\otimes \rmi\sigma_y,\, \sigma_x\otimes\sigma_x\otimes \rmi\sigma_y,\, \rmi\sigma_y \otimes \openone\otimes\openone,\nonumber\\
&\sigma_z\otimes \openone\otimes\rmi\sigma_y,\,\sigma_z\otimes \rmi\sigma_y\otimes\sigma_z, \sigma_z\otimes \rmi\sigma_y\otimes\sigma_x. 
\end{align}
Real orthogonal HR matrices in higher dimensions can be constructed iteratively by virtue of \eref{eq:HRcomposition}, note that $W_j$ is real orthogonal if both $U_j$ and $V_{j-r}$ are real orthogonal. In particular $s+8$ real orthogonal HR matrices can be constructed in dimension $16m$ if $s$ real orthogonal HR matrices can be constructed in dimension $m$.

\begin{lemma}\label{lem:HRnumMax}
	$d-1$ unitary (real orthogonal)	HR matrices can be constructed in dimension $m$ iff $m$ is divisible by $\kappa(d)$ ($\kappa^\rmR(d)$), where
	\begin{align}
	\kappa(d)&:=2^{\lfloor(d-1)/2\rfloor}, \label{eq:MaskDimC}\\
	\kappa^\rmR(d)&:=\begin{cases}
	\kappa(d) & d=0,1,7 \mod 8,\\
	2\kappa(d) & d=2,3,4,5,6 \mod 8.
	\end{cases} \label{eq:MaskDimR}
	\end{align}
\end{lemma}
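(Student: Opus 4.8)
The plan is to translate the problem into the representation theory of Clifford algebras, where the answer is classical. Relation~\eqref{eq:HR} states exactly that $U_1,\dots,U_s$ are anticommuting elements with $U_j^2=-\openone$, i.e.\ generators of the Clifford algebra on $s$ generators each squaring to $-\openone$; call it $\mathrm{C}\ell_{0,s}$ over $\bbR$ and $\mathbb{C}\ell_s=\mathrm{C}\ell_{0,s}\otimes\bbC$ over $\bbC$. First I would prove the reformulation: a set of $s$ unitary (real orthogonal) HR matrices in dimension $m$ exists iff $\mathbb{C}\ell_s$ ($\mathrm{C}\ell_{0,s}$) admits a representation of complex (real) dimension $m$. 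One direction is immediate, since HR matrices satisfy the defining relations and hence extend to an algebra homomorphism; for the converse the only point to check is that unitarity/orthogonality can be imposed for free, which follows by averaging any inner product over the finite group generated by $\{\pm\openone,\pm U_j,\dots\}$ and noting that $U_j^2=-\openone$ then forces $U_j^\dag=-U_j$. Thus only the module structure matters, and $d-1$ HR matrices exist in dimension $m$ iff $m$ is a multiple of the minimal nonzero representation dimension.

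For the \emph{if} direction it then suffices to exhibit $d-1$ HR matrices in dimension $\kappa(d)$ (resp.\ $\kappa^\rmR(d)$), which the explicit constructions above already do. In the unitary case the doubling step produces $2b+1$ HR matrices in dimension $2^b$, so for $s=d-1$ one obtains, after possibly discarding one generator, exactly $d-1$ matrices in dimension $2^{\lfloor(d-1)/2\rfloor}=\kappa(d)$. In the real orthogonal case the base sets for $0\le s\le 7$ (realized in dimensions $1,2,4,4,8,8,8,8$ by discarding generators from the displayed examples) combined with the composition~\eqref{eq:HRcomposition}, which adjoins $8$ generators while multiplying the dimension by $16$, realize $d-1$ matrices in dimension $\kappa^\rmR(d)$. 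Tensoring with an identity matrix then extends both constructions to every multiple of $\kappa(d)$ or $\kappa^\rmR(d)$.

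The substantive part is the \emph{only if} direction, namely that every representation has dimension divisible by the stated value. Here I would invoke that a finite-dimensional Clifford algebra is semisimple, so any representation is a direct sum of irreducibles and its dimension is a multiple of the common irreducible dimension; it remains to identify that dimension from the Wedderburn type. Over $\bbC$ this is the $2$-periodic statement $\mathbb{C}\ell_s\cong\mathrm{Mat}(2^{s/2},\bbC)$ for even $s$ and $\mathrm{Mat}(2^{(s-1)/2},\bbC)^{\oplus2}$ for odd $s$, giving minimal complex dimension $2^{\lfloor s/2\rfloor}=\kappa(d)$. Over $\bbR$ one needs the $8$-periodic Bott table realizing $\mathrm{C}\ell_{0,s}$ as a matrix algebra over $\bbR$, $\bbC$, or $\mathbb{H}$, together with the fact that a representation of $\mathrm{Mat}(k,\mathbb{K})$ has minimal real dimension $k\dim_\bbR\mathbb{K}$; running through $s\bmod 8$ then matches the minimal real dimension to $\kappa^\rmR(d)$, the extra factor of $2$ for $d\equiv 2,3,4,5,6\pmod 8$ being exactly the contribution of a $\bbC$ or $\mathbb{H}$ commutant.

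I expect the main obstacle to be this last case analysis for the real algebras: establishing (or carefully citing) the $8$-fold Bott periodicity and tracking precisely when the commuting division ring is $\bbR$, $\bbC$, or $\mathbb{H}$, since it is exactly this division-ring structure that produces the discrepancy between $\kappa$ and $\kappa^\rmR$ in~\eqref{eq:MaskDimR}. The complex case and both sufficiency arguments are routine once the reduction to Clifford modules is in place.
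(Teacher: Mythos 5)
Your proposal is correct, and it actually does more than the paper does: the paper offers no proof of \lref{lem:HRnumMax} beyond the explicit constructions of \sref{sec:HR} (which settle only the \emph{if} direction) and a citation of Hurwitz, Radon, and Eckmann for the full statement. Your Clifford-module argument is the standard modern proof underlying exactly those references. The reduction is sound: HR sets in dimension $m$ are the same thing as unital representations of $\mathrm{C}\ell_{0,s}$ with $s=d-1$, because averaging an inner product over the finite group generated by $\{\pm\openone,\pm U_1,\dots,\pm U_s\}$ makes the generators unitary (orthogonal), and since the represented algebra is then closed under adjoints ($U_j^\dag=-U_j$) the same inner product gives complete reducibility; as all irreducible modules of $\mathbb{C}\ell_s$, respectively $\mathrm{C}\ell_{0,s}$, share a common dimension (two irreducibles of equal size occur precisely when $s$ is odd, respectively $s\equiv 3 \bmod 4$), the achievable dimensions are exactly the multiples of the minimal one. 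Your bookkeeping also checks out: the minimal real dimensions $2,4,4,8,8,8,8,16$ for $s=1,\dots,8$, extended by $a_{s+8}=16\lsp a_s$, reproduce $\kappa^\rmR(s+1)$ in \eref{eq:MaskDimR}, the complex minimal dimension $2^{\lfloor s/2\rfloor}$ reproduces $\kappa(s+1)$ in \eref{eq:MaskDimC}, and the paper's constructions (doubling with one generator discarded in the complex case; the base cases in dimensions $1,2,4,4,8,8,8,8$ together with the composition \eref{eq:HRcomposition}, which adjoins $8$ real orthogonal generators per factor of $16$ in dimension, in the real case) realize these minima, with tensoring by an identity covering all multiples. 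What each route buys: the paper's citation is economical, while your proof makes the lemma self-contained up to the $8$-periodic classification table of real Clifford algebras and isolates exactly where $\kappa^\rmR$ exceeds $\kappa$ --- namely when the commutant division algebra is $\bbC$ or $\mathbb{H}$ --- which is precisely the Bott periodicity the paper alludes to after the lemma.
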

\Lref{lem:HRnumMax} follows from  \rscite{Hurw23,Rado22,Eckm06}. It clarifies the minimum dimension required to construct a given number of HR matrices. Incidentally, the variations of $\kappa(d)$ and   $\kappa^\rmR(d)$ exhibit Bott periodicity~\cite{Eckm06}, which is of key interest in algebraic topology.

\section{\label{sec:MaskRQ}Masking quantum information in real quantum mechanics}
In this section we  show that the set $\scrD^\rmR(\caH)$ of real density matrices is maskable and is actually a maximal maskable set. In addition, we  determine the entanglement of masking and the minimum dimension of the output Hilbert space required to mask $\scrD^\rmR(\caH)$. 

\subsection{Construction of a masker}
 Let $m$ be an even positive integer that is divisible by $\kappa(d)$.
Let $\caH_A$ and $\caH_B$ be two $m$-dimensional Hilbert spaces and 
\begin{align}
|\Phi\>:=\frac{1}{\sqrt{m}}\sum_{l=0}^{m-1} |ll\>
\end{align}
the canonical maximally entangled state in $\caH_A\otimes\caH_B$.
By assumption we can construct  a set $\{U_j\}_{j=1}^{d-1}$ of  $d-1$ traceless HR matrices  on $\caH_A$; let $U_0=\openone_A$.
Define the isometry $M: \caH\mapsto \caH_A\otimes \caH_B$ by its action,
\begin{equation}\label{eq:MaskerReal}
M|j\>= |\Phi_j\>:=(U_j \otimes \openone_B )|\Phi\>, \quad j=0,1,\ldots, d-1;
\end{equation}
then $M$ is a masker for $\scrP^\rmR(\caH)$ and $\scrD^\rmR(\caH)$. 

To verify the above claim, note that the $d$  states $|\Phi_j\>$ are orthonormal since
\begin{align}
\<\Phi_j|\Phi_k\>=\frac{1}{m}\tr(U_j^\dag U_k)=\delta_{jk}.
\end{align}
Given any normalized vector $\vec{c}=(c_0,c_1,\ldots, c_{d-1})$ and the state $|\psi(\vec{c})\>=\sum_j c_j|j\>\in \caH$, the output state has the form 
\begin{equation}
|\Psi(\vec{c})\>=\sum_j c_j|\Phi_j\>=[U(\vec{c}) \otimes \openone_B]|\Phi\>,\quad U(\vec{c})=\sum_j c_j U_j.
\end{equation}
If in addition $\vec{c}$ is a  real vector, then  $U(\vec{c})$ is unitary by \lref{lem:HRequiDef} given that $\{U_j\}_{j=1}^{d-1}$ is a set of HR matrices, which implies that $|\Psi(\vec{c})\>$ is maximally entangled. 
So $M$ is a masker for $\scrP^\rmR(\caH)$ and  $\scrD^\rmR(\caH)$.

If $m$ is divisible by $\kappa^\rmR(d)$ and each $U_j$ is real orthogonal, then the output Hilbert space could be real, which leads to a real masker for $\scrP^\rmR(\caH)$ and  $\scrD^\rmR(\caH)$. Therefore, the set of real states can be masked even with a real masker. In other words, the information about quantum states in real quantum mechanics can be completely hidden in the correlations. 
This fact shows  that the no-hiding and no-masking theorems do not apply to real quantum mechanics, in sharp contrast with complex quantum mechanics, as illustrated in \fref{fig:NoGo}.

When $d=4$ for example, a masker for $\scrD^\rmR(\caH)$ can be constructed using the set of HR matrices $\{\rmi\sigma_x, \rmi\sigma_y, \rmi\sigma_z\}$, which is tied to the two-qubit magic basis \cite{HillW97}
\begin{equation}\label{eq:MagicBasis}
\begin{aligned}
\!\!|\Phi_0\>&=\frac{1}{\sqrt{2}}(|00\>+|11\>), &
|\Phi_1\>&=\frac{\rmi}{\sqrt{2}}(|01\>+|10\>),\\
\!\!|\Phi_2\>&=\frac{1}{\sqrt{2}}(|01\>-|10\>),& |\Phi_3\>&=\frac{\rmi}{\sqrt{2}}(|00\>-|11\>).
\end{aligned}
\end{equation}
Let $M$ be the isometry from $\caH$ to the two-qubit Hilbert space $\caH_A\otimes\caH_B$ as defined by the map $|j\>\mapsto |\Phi_j\>$ for $j=0,1,2,3$. Then  $M$ can mask the set $\scrD^\rmR(\caH)$. Here the concurrence of the output state $|\Psi(\vec{c})\>=\sum_j c_j|\Phi_j\>$ reads \cite{HillW97}
\begin{equation}
C(|\Psi(\vec{c})\>)=\biggl|\sum_j c_j^2\biggr|,
\end{equation}
so $|\Psi(\vec{c})\>$ is maximally entangled iff $|\psi(\vec{c})\>$ is a real ket up to an overall phase factor. In this way, the masker~$M$ establishes a one-to-one correspondence between real pure states in $\caH$ and maximally entangled states in $\caH_A\otimes\caH_B$. This correspondence is tied to the geometric facts that normalized real kets in $\caH$ form a 3-sphere, and so does the special unitary group $\mathrm{SU}(2)$. Incidentally, based on this example, we can construct a hidable set that is not maskable,  as shown in \sref{sec:HideNoMask} later.

%
\subsection{All maskers for real quantum states}
Next, we determine all maskers for the sets $\scrP^\rmR(\caH)$ and $\scrD^\rmR(\caH)$. Let $|\Psi_0\>$ be any bipartite ket in $\caH_A\otimes \caH_B$ and $\tau_A=\tr_B(|\Psi_0\>\<\Psi_0|)$. Suppose $\{U_j\}_{j=1}^{d-1}$ is a set of HR matrices that commute with $\tau_A$ and satisfy $\tr(U_j\tau_A)=0$ for $j=1,2,\ldots, d-1$ [the condition $\tr(U_j\tau_A)=0$ is automatically guaranteed when $d\geq 3$ thanks to \lref{lem:HRtauTrace}]. Let
\begin{align}
|\Psi_j\>=(U_j \otimes \openone_B )|\Psi_0\>;
\end{align} 
then the kets $|\Psi_j\>$ for $j=0,1,\ldots, d-1$ are orthonormal, and the isometry $M: j\mapsto |\Psi_j\>$ is a masker for $\scrP^\rmR(\caH)$ and $\scrD^\rmR(\caH)$.
The following theorem  proved in \aref{sec:MaskGenProof} shows that essentially all maskers can be constructed in this way. It implies \crsref{cor:MaskSpecCh}-\ref{cor:PurityCon} below, which are proved in \aref{sec:MaskGenCorProof}. 
\begin{restatable}{theorem}{thmMaskGenHR}\label{thm:MaskGenHR}
	Suppose  the isometry $M: \caH\mapsto \caH_A\otimes \caH_B $ is a masker  for $\scrD^\rmR(\caH)$ with common reduced density matrices $\tau_A$ and $\tau_B$ of full rank. Let $|\Psi_j\>:=M|j\>$. 
	Then $|\Psi_j\>$ for $j=1,2,\ldots,d-1$ have the form
	\begin{equation}\label{eq:PsijMaskGen}
	|\Psi_j\>=(U_j \otimes \openone_B )|\Psi_0\>=(\openone_A \otimes V_j )|\Psi_0\>, 
	\end{equation}		
	where $\{U_j\}_{j=1}^{d-1}$ and $\{V_j\}_{j=1}^{d-1}$
	are two  sets of HR matrices. Meanwhile,
	$U_j$ and $ V_j$ commute with $\tau_A$ and $\tau_B$, respectively,  and satisfy $\tr(U_j\tau_A)=\tr(V_j\tau_B)=0$.
\end{restatable}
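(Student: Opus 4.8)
The plan is to convert the masking conditions for $\scrD^\rmR(\caH)$ into algebraic identities on the operators representing the output kets, and then read off the Hurwitz--Radon structure by elementary manipulations. First I would fix the matrix representation $|\Psi_j\>\leftrightarrow M_j$, where $M_j$ collects the amplitudes of $|\Psi_j\>$, so that $\tr_B(|\Psi_j\>\<\Psi_k|)=M_jM_k^\dag$ and $\tr_A(|\Psi_j\>\<\Psi_k|)=(M_k^\dag M_j)^T$, while orthonormality of the $|\Psi_j\>$ (from $M$ being an isometry) reads $\tr(M_j^\dag M_k)=\delta_{jk}$. Because $\tau_A$ and $\tau_B$ are of full rank and share the same nonzero spectrum (the set contains pure states), both subsystems have a common dimension $m$ and every $M_j$ is $m\times m$; in particular $M_0M_0^\dag=\tau_A$ is invertible, so $M_0$ is invertible.

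The key reduction is that, since masking is affine in $\rho$ and every real density matrix is a convex combination of real pure states $|\psi(\vec{c})\>\<\psi(\vec{c})|$ with $\vec{c}$ real, masking $\scrD^\rmR(\caH)$ is equivalent to demanding, for every normalized real $\vec{c}$, that $M(\vec{c})M(\vec{c})^\dag=\tau_A$ and $M(\vec{c})^\dag M(\vec{c})=\tau_B^T$, where $M(\vec{c}):=\sum_j c_j M_j$. Matching coefficients of this quadratic identity in $\vec{c}$ (equivalently, invoking \lref{lem:HRequiDef} after a $\tau_A^{-1/2}$ rescaling) produces four families of relations: $M_jM_j^\dag=\tau_A$ and $M_j^\dag M_j=\tau_B^T$ for every $j$, together with $M_jM_k^\dag+M_kM_j^\dag=0$ and $M_j^\dag M_k+M_k^\dag M_j=0$ for $j\neq k$.

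Next I would set $U_j:=M_jM_0^{-1}$, so that $M_j=U_jM_0$, $U_0=\openone_A$, and $|\Psi_j\>=(U_j\otimes\openone_B)|\Psi_0\>$, and then extract the HR properties in order. Unitarity of $U_j$ follows from $M_j^\dag M_j=M_0^\dag M_0$ and invertibility of $M_0$. Commutation $[U_j,\tau_A]=0$ follows from $M_jM_j^\dag=\tau_A$, i.e.\ $U_j\tau_A U_j^\dag=\tau_A$, combined with unitarity. The relation $M_jM_0^\dag+M_0M_j^\dag=0$ then becomes $\tau_A(U_j+U_j^\dag)=0$, forcing $U_j^\dag=-U_j$ and hence $U_j^2=-\openone$; finally $M_jM_k^\dag+M_kM_j^\dag=0$ together with the commutation collapses to $U_jU_k+U_kU_j=0$ for $j\neq k$, so $\{U_j\}_{j=1}^{d-1}$ is a set of HR matrices commuting with $\tau_A$. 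The trace condition is then immediate from orthonormality, $\tr(U_j\tau_A)=\tr(M_0^\dag M_j)=\<\Psi_0|\Psi_j\>=0$ (for $d\geq 3$ this is also delivered automatically by \lref{lem:HRtauTrace}). The $V_j$-side statement follows by the mirror-image argument using $\tilde V_j:=M_0^{-1}M_j$ and the conditions on $M_j^\dag M_j$; putting $V_j:=\tilde V_j^T$ gives $|\Psi_j\>=(\openone_A\otimes V_j)|\Psi_0\>$, and transposition preserves the HR relations, the commutation with $\tau_B$, and the vanishing trace.

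The step requiring the most care is the two-sided bookkeeping with transposes: one must genuinely use the $B$-masking condition (not only the $A$-masking condition) to obtain unitarity of $U_j$, since without unitarity the identity $U_j\tau_A U_j^\dag=\tau_A$ does not force $[U_j,\tau_A]=0$ but only a $\tau_A$-weighted pseudo-unitarity. The other mild subtlety is justifying $\dim\caH_A=\dim\caH_B$ and the invertibility of $M_0$ at the outset, since the entire passage from $M_j$ to $U_j$ and $V_j$ rests on it.
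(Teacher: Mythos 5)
Your proof is correct --- every step checks out: the polarization identities obtained from masking all real unit superpositions, unitarity of $U_j$ from the $B$-side condition, the chain $[U_j,\tau_A]=0 \Rightarrow U_j^\dag=-U_j \Rightarrow U_jU_k+U_kU_j=0$ for $j\neq k$, and the transpose bookkeeping that produces $V_j$. It is, however, organized differently from the paper. The paper is modular: it first proves \lref{lem:SchmidtDecom} (two purifications of $\tau_B$ with a full-rank marginal on $A$ are related by a \emph{unique} local unitary on $A$, which automatically commutes with $\tau_A$), applies it both to each pair $(|\Psi_j\>,|\Psi_0\>)$ and to every real superposition $|\Psi(\vec{c})\>$, concluding that $U(\vec{c})=\sum_j c_j U_j$ is unitary for every normalized real $\vec{c}$, and then cites \lref{lem:HRequiDef} to convert that statement into the HR relations. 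You inline both lemmas: your polarization identities are exactly what that combination extracts, and your explicit operator $U_j=M_jM_0^{-1}$ coincides with the operator $M_jM_0^\dag\tau_A^{-1}$ constructed inside the paper's proof of \lref{lem:SchmidtDecom}. The paper's route buys brevity and reusability (both auxiliary lemmas are reused for the constructive converse, for \crref{cor:MaskSpecCh}, and for \lref{lem:UVAB}); your route buys a self-contained argument that makes the logical dependencies explicit --- in particular, you correctly isolate that unitarity of $U_j$ genuinely needs the $B$-side marginal condition (the $A$-side alone gives only $U_j\tau_A U_j^\dag=\tau_A$), a point the paper absorbs silently into its purification lemma, and your preliminary observations ($d_A=d_B$ from the common Schmidt spectrum, invertibility of $M_0$ from full rank) match the paper's remark that assuming full-rank marginals loses no generality. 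One parenthetical aside is slightly off: after the $\tau_A^{-1/2}$ rescaling the unitaries are $\tau_A^{-1/2}M_j$, and \lref{lem:HRequiDef} then delivers the HR property for $\{M_0^{-1}M_j\}$ (your $\tilde V_j$ family) rather than directly for $U_j=M_jM_0^{-1}$; this is harmless, since your direct verification stands on its own.
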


\begin{restatable}{corollary}{corMaskSpecCh}\label{cor:MaskSpecCh}
	When $d\geq 3$,	there exists a complex (real) masker for $\scrD^\rmR(\caH)$ with masking spectrum $\{\lambda_o, m_o\}_o$ iff the multiplicity $m_o$ of each nonzero eigenvalue $\lambda_o$ is divisible by $\kappa(d)$ ($\kappa^\rmR(d)$). When $d=2$, 
	the same conclusion holds for a real masker; 
	there exists a complex masker for $\scrD^\rmR(\caH)$ with masking spectrum $\{\mu_l\}_l$ iff there exists a vector $\vec{v}=(v_l)_l$ composed of 1 and $-1$ such that $\sum_l v_l\mu_l=0$.
\end{restatable}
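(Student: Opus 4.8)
The plan is to convert the existence of a masker into the existence of a family of HR matrices with prescribed structure via \thref{thm:MaskGenHR}, and then to count such matrices eigenspace by eigenspace using \lref{lem:HRnumMax}. First I would reduce to the full-rank case: for any masker $M$ for $\scrD^\rmR(\caH)$ with reduced states $\tau_A,\tau_B$, every output $M(\rho)$ has $A$-marginal $\tau_A$ and $B$-marginal $\tau_B$; since $\tau_A$ lives on $\supp(\tau_A)$, positivity forces each $M(\rho)$ to be supported on $\supp(\tau_A)\otimes\supp(\tau_B)$, so restricting to these supports gives a masker with full-rank reduced states and the \emph{same} masking spectrum. Thus I may assume $\tau_A$ is full rank, and \thref{thm:MaskGenHR} supplies $d-1$ HR matrices $\{U_j\}$ commuting with $\tau_A$ and satisfying $\tr(U_j\tau_A)=0$. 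Conversely, the construction preceding \thref{thm:MaskGenHR} turns any such family into a masker with common reduced state $\tau_A$ (orthonormality of the $|\Psi_j\>$ is exactly $\tr(U_j\tau_A)=0$, and $U(\vec{c})$ is unitary by \lref{lem:HRequiDef}), so the whole problem becomes one about HR matrices commuting with a fixed $\tau_A$ of prescribed nonzero spectrum.

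For $d\geq 3$ I would proceed as follows. Since each $U_j$ commutes with $\tau_A$, it leaves every eigenspace $\caH_A^{(o)}$ (of dimension $m_o$) invariant, and the restrictions $\{U_j^{(o)}\}_{j=1}^{d-1}$ still obey \eref{eq:HR}, hence form $d-1$ HR matrices in dimension $m_o$; here the trace conditions are automatic because $s=d-1\geq 2$ (\lref{lem:HRtauTrace}). The necessity direction of \lref{lem:HRnumMax} then forces $\kappa(d)\mid m_o$ for every $o$, and sufficiency just reverses this: build $d-1$ HR matrices on each eigenspace with $\kappa(d)\mid m_o$, take their direct sum to get $\{U_j\}$ commuting with $\tau_A$, and choose any purification $|\Psi_0\>$ of $\tau_A$. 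For a \emph{real} masker I would first show reality forces the $U_j$ to be real orthogonal: writing the real state $|\Psi_0\>$ in a real Schmidt basis $\sum_l\sqrt{p_l}|e_l\>|f_l\>$, reality of $|\Psi_j\>=(U_j\otimes\openone_B)|\Psi_0\>$ together with orthonormality of the real vectors $|f_l\>$ gives that $U_j|e_l\>$ is real for every $l$; being real and anti-Hermitian unitary, $U_j$ is real orthogonal on each real eigenspace. Its block restrictions are then real orthogonal HR matrices, so the real part of \lref{lem:HRnumMax} yields $\kappa^\rmR(d)\mid m_o$, and sufficiency reverses this using real orthogonal HR matrices and a real purification.

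For $d=2$ the single relevant matrix is $U_1$ with $U_1^2=-\openone$, and the decisive point is that $\tr(U_1)=0$ is \emph{not} automatic. Because $U_1$ commutes with $\tau_A$ and both are normal, I would diagonalize them simultaneously so that $\tau_A=\diag(\mu_l)$ (the masking spectrum listed with multiplicity) and $U_1=\diag(\rmi v_l)$ with $v_l\in\{+1,-1\}$. Then $\tr(U_1\tau_A)=\rmi\sum_l v_l\mu_l$, so the constraint $\tr(U_1\tau_A)=0$ coming from \thref{thm:MaskGenHR} is precisely the existence of a $\pm1$ vector $\vec{v}$ with $\sum_l v_l\mu_l=0$; conversely any such $\vec{v}$ defines a valid $U_1$ and hence a complex masker. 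For a real masker the reality argument again forces $U_1$ to be real orthogonal, i.e.\ real antisymmetric, which exists on $\caH_A^{(o)}$ iff $m_o$ is even, i.e.\ $\kappa^\rmR(2)\mid m_o$; the trace condition is then automatic since real antisymmetric matrices are traceless. This reproduces the stated dichotomy.

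The main obstacle I anticipate is the reality step: carefully justifying that a real masker forces genuinely real orthogonal (not merely complex) HR matrices on each eigenspace, especially when $\tau_A$ has degenerate eigenvalues, where the Schmidt basis is non-unique and one must verify that $U_j$ preserves the real structure of the entire eigenspace rather than of a single chosen basis. The full-rank reduction and the simultaneous diagonalization used in the $d=2$ case should be routine by comparison.
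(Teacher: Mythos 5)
Your proposal follows essentially the same route as the paper's proof: reduce to full-rank reduced states, translate maskers into families of HR matrices commuting with $\tau_A$ via \thref{thm:MaskGenHR} (with the converse supplied by the construction preceding that theorem), decompose eigenspace by eigenspace, and count with \lref{lem:HRnumMax}; your $d=2$ complex argument by simultaneous diagonalization of $U_1$ and $\tau_A$ is literally the paper's.

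The one step where you go beyond the paper---which dismisses the real-case necessity with ``a similar reasoning employed above''---is the reality argument, and there the genuine subtlety is not the one you flag. Degeneracy of $\tau_A$ is harmless: once $U_j|e_l\>$ is real for every vector of one real basis of $\caH_A$, the operator $U_j$ is real, independently of any choice of Schmidt basis. The real issue is that a real masker only forces the output \emph{density matrices} to be real, so each $|\Psi_j\>$ is guaranteed real only up to a phase $\rme^{\rmi\theta_j}$, and reality of the basis outputs alone is not enough. For instance, take $\tau_A=\openone_A/2$, $U_1=\rmi\sigma_z$, $|\Psi_0\>=(|00\>+|11\>)/\sqrt{2}$, and $|\Psi_1\>=(U_1\otimes\openone_B)|\Psi_0\>$: both $|\Psi_0\>\<\Psi_0|$ and $|\Psi_1\>\<\Psi_1|$ are real, yet for a real superposition the output $M(|\psi(\vec{c})\>\<\psi(\vec{c})|)$ has an imaginary off-diagonal part proportional to $c_0c_1$, so this $M$ is a complex masker that is \emph{not} real---consistently, $U_1$ is not real orthogonal. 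The repair must invoke reality of $M(\rho)$ on real \emph{superpositions}: masking $(|j\>+|k\>)(\<j|+\<k|)/2$ shows that $|\Psi_j\>\<\Psi_k|+|\Psi_k\>\<\Psi_j|$ is real, which forces $\sin(\theta_j-\theta_k)=0$ for all $j,k$; hence, after multiplying $M$ by a single global phase (which changes neither the channel $M(\cdot)$ nor the masking spectrum), all $|\Psi_j\>$ may be taken genuinely real, and then the uniqueness statement of \lref{lem:SchmidtDecom} applied to \eref{eq:PsijMaskGen} and its complex conjugate gives $\overline{U_j}=U_j$, i.e., $U_j$ real orthogonal, after which your Schmidt-basis argument and the eigenspace counting go through. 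With this insertion your proof is complete; everything else matches the paper's argument.
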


Here $\lambda_o$    refer to distinct eigenvalues, while  $\mu_l$ for different $l$ may equal each other.

\begin{restatable}{corollary}{corMaskDim}\label{cor:MaskDim}
	$\scrD^\rmR(\caH)$ can be masked in the  Hilbert space $\caH_A\otimes \caH_B$ iff $d_A, d_B\geq \tilde{\kappa}(d):= \max\{\kappa(d),2\}$; if the Hilbert space $\caH_A\otimes \caH_B$ is real, then the condition turns into $d_A, d_B\geq \kappa^\rmR(d)$.
\end{restatable}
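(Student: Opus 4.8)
The plan is to deduce \Crref{cor:MaskDim} from the spectral characterization in \crref{cor:MaskSpecCh}, by showing that the output dimensions $d_A,d_B$ enter only through a bound on the rank of the common reduced density matrices, and then reading off the smallest admissible rank. First I would pin down this role precisely. Each computational basis projector $|j\>\<j|$ lies in $\scrD^\rmR(\caH)$, so the masker condition forces $\tr_B(|\Psi_j\>\<\Psi_j|)=\tau_A$ and $\tr_A(|\Psi_j\>\<\Psi_j|)=\tau_B$ for every $j$, where $|\Psi_j\>:=M|j\>$. Hence every output ket lies in $\supp(\tau_A)\otimes\supp(\tau_B)$, and since each $|\Psi_j\>$ is pure its two marginals share the same nonzero spectrum, so $r:=\rk(\tau_A)=\rk(\tau_B)$ is just the number of entries of the masking spectrum. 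Restricting $M$ to this $r\times r$ subspace is still an isometry with full-rank marginals, so a masker exists in $\caH_A\otimes\caH_B$ iff there is a valid masking spectrum of rank $r\leq\min\{d_A,d_B\}$; conversely any such spectrum---realized by \crref{cor:MaskSpecCh} together with the construction of \sref{sec:MaskRQ}---pads trivially into larger output spaces. Therefore a masker exists iff $d_A,d_B\geq r_{\min}$, where $r_{\min}$ is the least rank occurring among valid masking spectra.

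The second step is to extract $r_{\min}$ from \crref{cor:MaskSpecCh}. For $d\geq3$ a complex (real) masking spectrum is admissible iff every multiplicity is divisible by $\kappa(d)$ ($\kappa^\rmR(d)$), so the total rank is a positive multiple of that integer and is minimized by a single eigenvalue of that multiplicity; thus $r_{\min}=\kappa(d)$ ($\kappa^\rmR(d)$). For $d=2$ the complex condition $\sum_l v_l\mu_l=0$ with signs $v_l=\pm1$ is impossible at rank one but holds at rank two via the spectrum $\{1/2,1/2\}$, giving $r_{\min}=2$; the real case again reduces to divisibility by $\kappa^\rmR(2)=2$ and gives $r_{\min}=2$. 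Assembling these with $\kappa(d)=2^{\lfloor(d-1)/2\rfloor}$---which satisfies $\kappa(2)=1$ and $\kappa(d)\geq2$ for all $d\geq3$---yields $r_{\min}=\max\{\kappa(d),2\}=\tilde\kappa(d)$ in the complex case. In the real case $\kappa^\rmR(d)\geq2$ for every $d\geq2$, so $r_{\min}=\kappa^\rmR(d)$ and no outer maximum is needed, exactly as claimed.

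The step I expect to be the main obstacle is the reduction itself: one must argue carefully that passing to $\supp(\tau_A)\otimes\supp(\tau_B)$ preserves both the isometry and the masking properties, so that the dimension question is faithfully equivalent to the rank question. The $d=2$ case is the other delicate point, since there \lref{lem:HRnumMax} is vacuous ($\kappa(2)=1$) and the bound $r\geq2$ comes solely from the traceless constraint $\tr(U_1\tau_A)=0$ supplied by \thref{thm:MaskGenHR}, which is precisely the sign-balance condition recorded in \crref{cor:MaskSpecCh}.
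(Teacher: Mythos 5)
Your proposal is correct and follows the same route as the paper, which simply derives \Crref{cor:MaskDim} as an immediate consequence of \crref{cor:MaskSpecCh}; your write-up fills in the details the paper leaves implicit (reduction to $\supp(\tau_A)\otimes\supp(\tau_B)$, padding into larger spaces, and reading off the minimal admissible rank, including the sign-balance obstruction at $d=2$). No gaps.
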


Given a density matrix $\rho\in \scrD(\caH)$, the robustness of imaginarity $\scrI_\rmR(\rho)$ \cite{HickG18,WuKRS21} is defined as 
\begin{equation}\label{eq:RoI}
\scrI_\rmR(\rho):=\min \biggl\{a\geq 0\Big| \rho' \in \scrD(\caH), \frac{\rho+a \rho'}{1+a}\in \scrD^\rmR(\caH) \biggr\}. 
\end{equation}
A closed formula was presented in  \rcite{WuKRS21} as reproduced here,
	\begin{align}\label{eq:RoIvalue}
\scrI_\rmR(\rho)=\frac{1}{2}\|\rho-\rho^\rmT\|_1, 
\end{align}
where $\|\rho-\rho^\rmT\|_1=\tr \sqrt{(\rho-\rho^\rmT)^2}$ is the Schatten 1-norm of $\rho-\rho^\rmT$. 
Here we are interested in $\scrI_\rmR(\rho)$ because of its intimate connection with 
the properties of the output state of any masker for $\scrD^\rmR(\caH)$ with $d=\dim(\caH)\geq 3$ as shown in the following corollary; see also \eref{eq:ConcurrenceRoI} below.  It should be noted that this corollary is not applicable when $d=2$. 
\begin{restatable}{corollary}{corPurityCon}\label{cor:PurityCon}
	Suppose  $d=\dim(\caH)\geq 3$ and $M$ is a masker for the set $\scrD^\rmR(\caH)$ with common reduced states $\tau_A$ and $\tau_B$. Then $\tau_A$ ($\tau_B$) is majorized by $\varrho_A$ ($\varrho_B$)  defined in \eref{eq:varrhoAB} for all $\rho\in \scrD(\caH)$. In addition,
	\begin{gather}
	\tr(\varrho_A^2)+	\tr(\varrho_B^2)=\wp\bigr(2+\|\rho-\rho^\rmT\|_{\hs}^2\bigr), \label{eq:purityABsum}\\
	2\wp\Bigl[1+\frac{2}{d}\scrI_\rmR(\rho)^2\Bigr]\leq 	\tr(\varrho_A^2)+	\tr(\varrho_B^2)\leq 2\wp\bigl[1+\scrI_\rmR(\rho)^2\bigr],	\label{eq:purityABsumLBUB}	
	\end{gather}
	where $\|\rho-\rho^\rmT\|_{\hs}=\sqrt{\tr[(\rho-\rho^\rmT)^2]}$ is the Hilbert-Schmidt norm of $\rho-\rho^\rmT$, and
 $\wp$ is the masking purity  defined in \eref{eq:MaskPurity}.
	If in addition $d\neq 4$, then 	
	\begin{gather}
	\tr(\varrho_B^2)=	\tr(\varrho_A^2)=\frac{1}{2}\wp\bigr(2+\|\rho-\rho^\rmT\|_{\hs}^2\bigr), \label{eq:purityAB}\\
	\wp\Bigl[1+\frac{2}{d}\scrI_\rmR(\rho)^2\Bigr]\leq 	\tr(\varrho_A^2)\leq \wp\bigl[1+\scrI_\rmR(\rho)^2\bigr].	\label{eq:purityABLBUB}
	\end{gather}	
\end{restatable}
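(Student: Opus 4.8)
The plan is to push an arbitrary $\rho\in\scrD(\caH)$ through the canonical form of a masker provided by \thref{thm:MaskGenHR}. First I would reduce to the full-rank case by restricting $\caH_A,\caH_B$ to $\supp(\tau_A),\supp(\tau_B)$: every computational basis ket is real, so each $|\Psi_j\>=M|j\>$ is masked and hence lies in $\supp(\tau_A)\otimes\supp(\tau_B)$, and no generality is lost. \thref{thm:MaskGenHR} then gives HR matrices $\{U_j\}_{j=1}^{d-1}$ on $\caH_A$ (with $U_0=\openone_A$) commuting with $\tau_A$, and $\{V_j\}_{j=1}^{d-1}$ on $\caH_B$ commuting with $\tau_B$, satisfying \eref{eq:PsijMaskGen}. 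Tracing out the two subsystems yields $\varrho_A=\sum_{jk}\rho_{jk}U_j\tau_A U_k^\dag$ and $\varrho_B=\sum_{jk}\rho_{jk}V_j\tau_B V_k^\dag$.

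For the majorization I would exploit $[U_j,\tau_A]=0$. Writing $\tau_A=\sum_o\lambda_o\Pi_o$ in its eigenbasis, the relation $\Pi_{o'}\tau_A\Pi_{o''}=\lambda_{o'}\delta_{o'o''}\Pi_{o'}$ annihilates every off-diagonal block, so $\varrho_A=\bigoplus_o\lambda_o Y_o$ is block diagonal in the $\tau_A$-eigenbasis. The weighted orthogonality $\tr(U_j^\dag U_k\tau_A^\alpha)=\delta_{jk}\tr(\tau_A^\alpha)$ (from \eref{eq:HRtauTrace1} and $U_j^\dag U_j=\openone$), applied with $\Pi_o$ a spectral projector of $\tau_A$, leaves only the $j=k$ terms of $\tr(\Pi_o\varrho_A)$, which therefore equals $\lambda_o m_o=\tr(\Pi_o\tau_A)$ (as $\sum_j\rho_{jj}=1$). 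Hence on each eigenspace $\tau_A$ is the flat state sharing the block trace of $\varrho_A$, so $\tau_A$ is majorized by $\varrho_A$ block by block; as both are block diagonal with matching block traces, $\tau_A\prec\varrho_A$ globally (equivalently, $\tau_A$ is the image of $\varrho_A$ under a block-diagonal mixed-unitary, hence unital, channel). The same reasoning gives $\tau_B\prec\varrho_B$.

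For the purity identities I would write $\rho=\rho_R+\rmi\,\mathrm{Im}\,\rho$ with $\rho_R=\tfrac12(\rho+\rho^\rmT)\in\scrD^\rmR(\caH)$. Since $M$ masks real states, the contribution of $\rho_R$ is exactly $\tau_A$, forcing each block into the form $Y_o=\openone+H_o$ with $H_o:=\rmi Y_{I,o}$ Hermitian and traceless, where $Y_I:=\sum_{jk}(\mathrm{Im}\,\rho)_{jk}U_jU_k^\dag$ is anti-Hermitian and commutes with $\tau_A$; this collapses the block sum to $\tr(\varrho_A^2)=\wp-\tr(\tau_A^2 Y_I^2)$. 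Evaluating $\tr(\tau_A^2 Y_I^2)$ with \lref{lem:HRtauTrace} at $\alpha=2$, for $d\neq4$ (so $s=d-1\neq3$) only the diagonal $(l,m)=(j,k)$ and swap $(l,m)=(k,j)$ contractions of \eref{eq:HRtauTrace2} survive, and with $U_j^\dag U_j=\openone$ and $(U_jU_k)^2=-\openone$ $(j\neq k)$ they each contribute $\wp\sum_{j\neq k}(\mathrm{Im}\,\rho)_{jk}^2$. Thus $-\tr(\tau_A^2 Y_I^2)=2\wp\sum_{j\neq k}(\mathrm{Im}\,\rho)_{jk}^2=\tfrac12\wp\|\rho-\rho^\rmT\|_{\hs}^2$, which is \eref{eq:purityAB}; adding the identical $B$-identity gives \eref{eq:purityABsum}.

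The step I expect to be the main obstacle is $d=4$ $(s=3)$, where \eref{eq:HRtauTrace3} adds to $\tr(\tau_A^2 Y_I^2)$ an anomaly proportional to $\tr(U_1U_2U_3\tau_A^2)$ (and to $\tr(V_1V_2V_3\tau_B^2)$ on the $B$ side), so \eref{eq:purityAB} genuinely fails for each party---for the magic-basis masker $U_j=\rmi\sigma_j$ one indeed has $\tr(U_1U_2U_3)\neq0$. These anomalies must cancel in the sum: iterating \eref{eq:PsijMaskGen} gives $(U_1U_2U_3\otimes\openone_B)|\Psi_0\>=(\openone_A\otimes V_3V_2V_1)|\Psi_0\>$, and the ricochet identity attached to $|\Psi_0\>$---for partners defined by $(O\otimes\openone_B)|\Psi_0\>=(\openone_A\otimes\tilde O)|\Psi_0\>$ one has $\tr(O\tau_A^2)=\tr(\tilde O\tau_B^2)$---together with the order-reversal sign $V_3V_2V_1=-V_1V_2V_3$ yields $\tr(U_1U_2U_3\tau_A^2)=-\tr(V_1V_2V_3\tau_B^2)$, so \eref{eq:purityABsum} survives for all $d\geq3$. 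Finally, for the bounds I would set $X:=\rho-\rho^\rmT=2\rmi\,\mathrm{Im}\,\rho$, a traceless Hermitian matrix whose nonzero eigenvalues form $\pm$ pairs $\{\pm\nu_p\}_{p=1}^n$ with $2n\leq d$ (because $\mathrm{Im}\,\rho$ is real antisymmetric). The elementary bounds $\sum_p\nu_p^2\leq(\sum_p\nu_p)^2$ and $(\sum_p\nu_p)^2\leq n\sum_p\nu_p^2\leq\tfrac d2\sum_p\nu_p^2$ (Cauchy--Schwarz and $2n\leq d$) give $\tfrac1d\|X\|_1^2\leq\|X\|_{\hs}^2\leq\tfrac12\|X\|_1^2$; inserting $\scrI_\rmR(\rho)=\tfrac12\|X\|_1$ from \eref{eq:RoIvalue} into \eref{eq:purityAB} and \eref{eq:purityABsum} then yields \eref{eq:purityABLBUB} and \eref{eq:purityABsumLBUB}.
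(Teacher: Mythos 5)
Your proposal is correct and follows essentially the same route as the paper's proof: reduction to full-rank $\tau_A,\tau_B$, the canonical HR form from \thref{thm:MaskGenHR}, block-trace matching for the majorization, \lref{lem:HRtauTrace} at $\alpha=2$ for the purity identity, cancellation of the $d=4$ anomaly via $\tr(U_1U_2U_3\tau_A^2)=-\tr(V_1V_2V_3\tau_B^2)$, and the two norm inequalities relating $\|\rho-\rho^\rmT\|_{\hs}$ to $\|\rho-\rho^\rmT\|_1$. The only cosmetic differences are that you package the computation through $\rho=\rho_R+\rmi\,\mathrm{Im}\,\rho$ and the operator $Y_I$ instead of the paper's explicit expansion (\lref{lem:varrhoAB}), and you assert the general "ricochet" identity that the paper isolates as \lref{lem:UVAB}.
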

When $d\geq 3$, \eref{eq:purityABsum} implies that
\begin{align}
\tr(\varrho_A^2)+\tr(\varrho_B^2)\geq 2\wp,
\end{align} 
and the inequality is saturated iff $\rho$ is a real density matrix, that is, $\rho\in \scrD^\rmR(\caH)$. If in addition $d\neq 4$, then \eref{eq:purityAB} implies that 
\begin{equation}
\tr(\varrho_A^2)\geq \wp,\quad \tr(\varrho_B^2)\geq \wp,
\end{equation}
and each inequality is saturated iff $\rho$ is a real density matrix.

\subsection{Entanglement of masking and masking purity}
By virtue of \Thref{thm:MaskGenHR} and \crref{cor:MaskSpecCh}, we can determine 
the minimum entanglement cost required to mask the set of real quantum states. The following corollary is proved in \aref{sec:MaskEntProof}.
\begin{restatable}{corollary}{corMaskEnt}\label{cor:MaskEnt}
	Let $E$ be an entanglement monotone; then 
	\begin{align}
	E(\scrD^\rmR(\caH))&=E(\scrP^\rmR(\caH))=E(|\Phi(\tilde{\kappa}(d))\>), \label{eq:MaskEntR}\\
	E^\rmR(\scrD^\rmR(\caH))&=E^\rmR(\scrP^\rmR(\caH))=E(|\Phi(\kappa^\rmR(d))\>),\label{eq:MaskEntRR}
	\end{align}
	where $|\Phi(m)\>$ is a maximally entangled state of Schmidt rank $m$. 
\end{restatable}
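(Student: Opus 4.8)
The plan is to combine the structural description of all maskers from \Thref{thm:MaskGenHR} and \crref{cor:MaskSpecCh} with \lref{lem:MaskEnt}, which reduces the computation of $E(\scrS,M)$ to evaluating $E$ on a single pure output state. Since $\scrP^\rmR(\caH)$ contains pure states and $E$ is an entanglement monotone (hence a function of the Schmidt spectrum alone for pure states), the task reduces to identifying, among all admissible maskers, the one whose fixed common reduced state $\tau_A$ has the ``smallest'' Schmidt spectrum in the sense of the monotone $E$. The key observation, already established in the construction subsection, is that for a real ket the output $|\Psi(\vec{c})\>=[U(\vec c)\otimes\openone_B]|\Psi_0\>$ is a local unitary image of $|\Psi_0\>$, so every real input is mapped to a pure state with the \emph{same} Schmidt spectrum as $\tau_A$; thus $E(M(\rho_0))=E(|\Psi_0\>)$ is determined entirely by $\tau_A$.

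First I would invoke \lref{lem:MaskEnt} to write $E(\scrD^\rmR(\caH),M)=E(\scrP^\rmR(\caH),M)=E(M(\rho_0))$ for any pure $\rho_0\in\scrP^\rmR(\caH)$, which immediately collapses the inner maximization and gives the first equality in each line of \esref{eq:MaskEntR}{eq:MaskEntRR}. The remaining minimization $\min_M E(M(\rho_0))$ over maskers then becomes a minimization over the allowed masking spectra. By \crref{cor:MaskSpecCh}, for $d\geq 3$ a complex masker exists iff every eigenvalue multiplicity of $\tau_A$ is divisible by $\kappa(d)$, and a real masker exists iff each multiplicity is divisible by $\kappa^\rmR(d)$. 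Since $E$ is Schur-concave on Schmidt spectra (the defining monotonicity property of any entanglement monotone under LOCC, which makes the maximally mixed reduced state of a given rank the unique maximizer of entanglement at that rank), among all valid $\tau_A$ the minimum of $E$ is attained by spreading the weight as evenly as possible over as many eigenvalues as the divisibility constraint permits with minimal rank.

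Next I would carry out this optimization explicitly. The divisibility constraint forces the rank of $\tau_A$ to be a multiple of $\kappa(d)$ (respectively $\kappa^\rmR(d)$), and the entanglement is minimized by taking the flat spectrum of the smallest such rank, i.e.\ a maximally entangled state $|\Phi(m)\>$ with $m=\tilde\kappa(d)=\max\{\kappa(d),2\}$ in the complex case and $m=\kappa^\rmR(d)$ in the real case; the flatness follows because any monotone is maximized (hence $E(M(\rho_0))$ minimized over the range of allowed spectra) at the uniform distribution, while the rank is pinned at its minimal admissible value. For $d=2$ the $\max\{\cdot,2\}$ correction accounts for the fact that $\kappa(2)=1$ but a genuine masker needs an entangled output, so rank $2$ is required; this is exactly why $\tilde\kappa$ rather than $\kappa$ appears, and it dovetails with the $d=2$ clause of \crref{cor:MaskSpecCh}. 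Putting these together yields \esref{eq:MaskEntR}{eq:MaskEntRR}.

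The main obstacle will be justifying rigorously that minimizing $E$ over the admissible class of reduced states is achieved at the uniform (maximally entangled) spectrum of minimal rank, rather than merely asserting it from Schur-concavity. Concretely, I must verify that the set of achievable masking spectra is exactly $\{$spectra whose multiplicities obey the divisibility condition$\}$ and that within this set the flat spectrum of minimal rank majorizes no other while being majorized by all spectra of the same support size, so that every entanglement monotone---not just entropy-type ones---orders them correctly. The cleanest route is to appeal to the LOCC-monotonicity definition of $E$ directly: a maximally entangled state of Schmidt rank $m$ can be converted by LOCC into any other pure state whose reduced state has a spectrum majorized by the uniform one (Nielsen's theorem), so $E(|\Phi(m)\>)\leq E(|\Psi_0\>)$ for every admissible $|\Psi_0\>$ of rank at least $m$, and minimal admissible rank forces $m=\tilde\kappa(d)$ or $\kappa^\rmR(d)$. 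I would also need to confirm the minimal-rank claim handles the $d=2$ edge case separately, since there the spectrum condition in \crref{cor:MaskSpecCh} is stated differently (a signed-sum condition) and must be shown to still force Schmidt rank at least $2$ for any genuine real masker.
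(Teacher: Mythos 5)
Your overall route is the same as the paper's: use \lref{lem:MaskEnt} to collapse the inner maximization to $E(M(\rho_0))=E(|\Psi_0\>\<\Psi_0|)$ for a single pure output state, use \crref{cor:MaskSpecCh} to characterize the admissible masking spectra, and use Nielsen's majorization criterion to show the minimum over maskers is attained at the maximally entangled state of minimal admissible rank, with saturation by an explicit flat-spectrum masker. However, your key step states the LOCC conversion in the wrong direction, and as written it would justify the opposite inequality. You assert that ``a maximally entangled state of Schmidt rank $m$ can be converted by LOCC into any other pure state whose reduced state has a spectrum majorized by the uniform one, so $E(|\Phi(m)\>)\leq E(|\Psi_0\>)$.'' By Nielsen's theorem, $|\psi\>\to|\phi\>$ by LOCC iff the Schmidt vector of $|\psi\>$ is majorized by that of $|\phi\>$; hence $|\Phi(m)\>$ converts into states whose spectrum \emph{majorizes} the uniform one, and a conversion $|\Phi(m)\>\to|\Psi_0\>$ would give $E(|\Phi(m)\>)\geq E(|\Psi_0\>)$---the reverse of what you need. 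The correct statement, and the one the paper uses, is that the masking spectrum of any admissible $|\Psi_0\>$ is majorized by the flat spectrum of rank $\tilde{\kappa}(d)$, so $|\Psi_0\>$ can be turned \emph{into} $|\Phi(\tilde{\kappa}(d))\>$ by LOCC, whence $E(|\Psi_0\>)\geq E(|\Phi(\tilde{\kappa}(d))\>)$.

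This exposes the second gap: that majorization fact is never actually established, and your fallback claim (the flat spectrum of minimal rank ``is majorized by all spectra of the same support size'') does not cover admissible spectra of rank $2\kappa(d),3\kappa(d),\ldots$, which \crref{cor:MaskSpecCh} permits. The missing argument is short: for $d\geq 3$, every distinct nonzero eigenvalue of $\tau_A$ has multiplicity divisible by $\kappa(d)$, so the largest eigenvalue $\lambda_{\max}$ has multiplicity at least $\kappa(d)$ and hence $\lambda_{\max}\leq 1/\kappa(d)$; the top-$j$ partial sums of the masking spectrum are therefore at most $\min\{j/\kappa(d),1\}$, which is exactly the partial-sum profile of the flat rank-$\kappa(d)$ spectrum, so the masking spectrum is majorized by it whatever its rank. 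For the $d=2$ complex case, the signed-sum condition $\sum_l v_l\mu_l=0$ splits the spectrum into two groups each summing to $1/2$, so $\lambda_{\max}\leq 1/2$ and the spectrum is majorized by $(1/2,1/2)$, which also forces Schmidt rank at least $2$ as you anticipated. With the conversion direction fixed and this divisibility-to-majorization argument supplied, your proposal coincides with the paper's proof.
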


\begin{figure}
	\includegraphics[width=7cm]{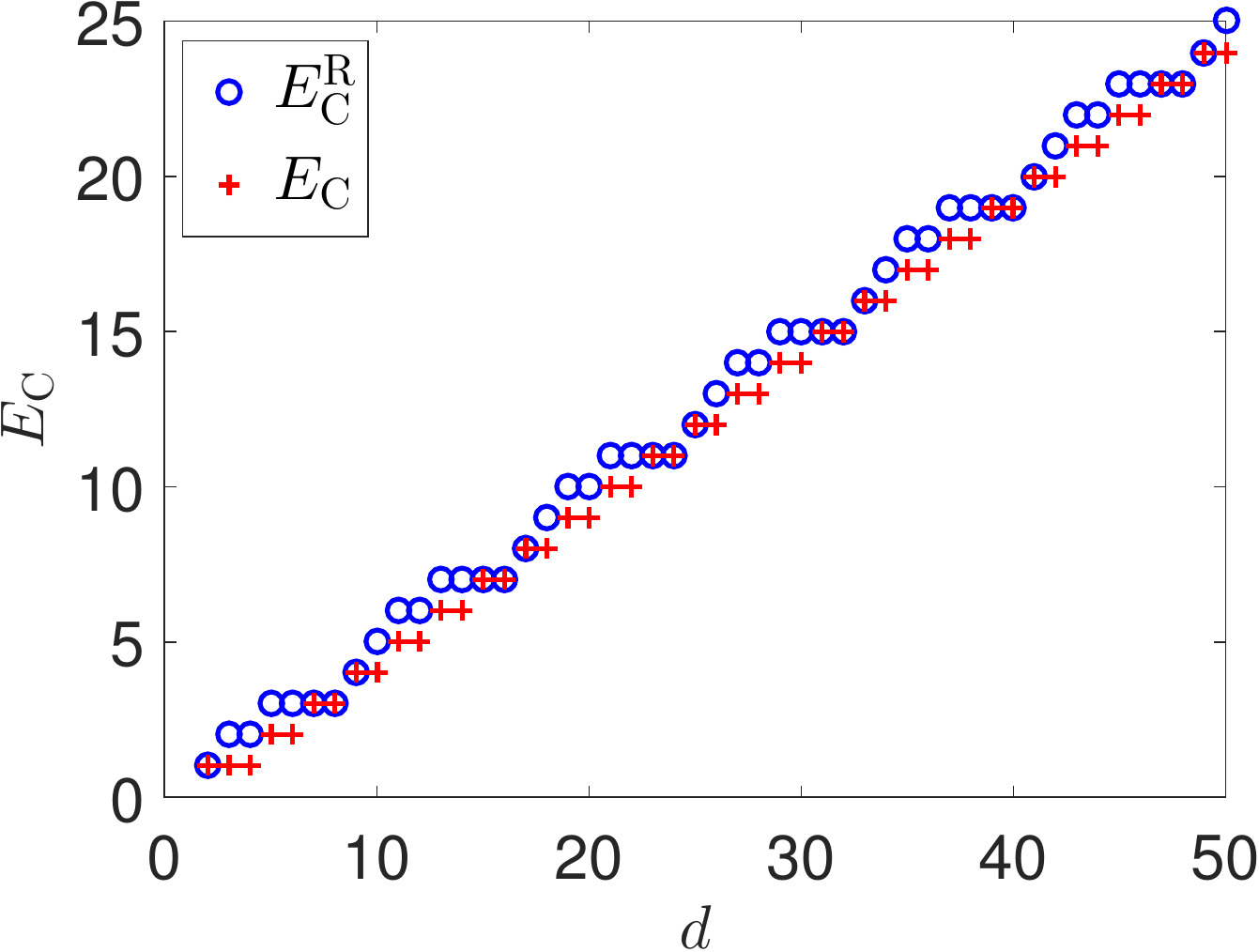}
	\caption{\label{fig:EntMask}
		Minimum entanglement cost $E_\rmC$ ($E_\rmC^\rmR$) required to mask $\scrP^\rmR(\caH)$ and $\scrD^\rmR(\caH)$ with complex (real)  Hilbert spaces. }
\end{figure}

For a bipartite pure state, the entanglement cost, entanglement of formation, distillable entanglement, and relative entropy of entanglement all coincide with the von Neumann entropy of each reduced state \cite{HoroHHH09}. If $E$ is one of these entanglement measures, say,  the entanglement cost $E_\rmC$, then \crref{cor:MaskEnt} and \lref{lem:HRnumMax} imply that
\begin{align}
&E_\rmC(\scrD^\rmR(\caH))=E_\rmC(\scrP^\rmR(\caH))=\log_2\tilde{\kappa}(d) \nonumber\\
&=\begin{cases}
1 &d=2,\\
\bigl\lfloor \frac{d-1}{2}\bigr\rfloor & d\geq 3;
\end{cases} \label{eq:MaskEntR2}\\
&E_\rmC^\rmR(\scrD^\rmR(\caH))=E_\rmC^\rmR(\scrP^\rmR(\caH))=\log_2\kappa^\rmR(d)\nonumber\\
&=\begin{cases}
\bigl\lfloor \frac{d-1}{2}\bigr\rfloor & d=0,1,7 \mod 8,\\
\bigl\lfloor \frac{d-1}{2}\bigr\rfloor+1 & d=2,3,4,5,6 \mod 8.
\end{cases}
\label{eq:MaskEntRR2}
\end{align}
Note  that $E_\rmC(\scrD^\rmR(\caH))\approx E_\rmC^\rmR(\scrD^\rmR(\caH))\approx d/2$. 
The entanglement of masking quantified by the entanglement cost $E_\rmC$ is approximately linear in $d$, and the deviation exhibits Bott periodicity, as  illustrated in \fref{fig:EntMask}

If the concurrence $C$ \cite{HillW97,Woot98,RungBCH01} is chosen to quantify the entanglement, then we have 
\begin{align}
&C(\scrD^\rmR(\caH))=C(\scrP^\rmR(\caH))=\sqrt{2[1-\tilde{\kappa}(d)^{-1}]}, \\
&C(\scrD^\rmR(\caH))=C(\scrP^\rmR(\caH))=
\sqrt{2[1-\kappa^\rmR(d)^{-1}]}.
\end{align}
In addition, \crref{cor:MaskSpecCh} implies that the masking purity $\wp$ of any masker for $\scrD^\rmR(\caH)$ satisfies $\wp\leq 1/\tilde{\kappa}(d)$ (cf. the proof of \crref{cor:MaskEnt}); by contrast,  $\wp\leq 1/\kappa^\rmR(d)$ for any real masker.
This conclusion is expected given the close connection between the concurrence of a bipartite pure state and the purity of each reduced state.

\subsection{Real quantum mechanics as a maximal maskable subtheory}
Here we show that the set $\scrD^\rmR(\caH)$ of real density matrices is  a maximal maskable set within  $\scrD(\caH)$. Such maximal maskable sets are valuable to understanding the potential and limit of hiding and masking quantum information. However, no maximal maskable sets beyond the qubit system have been found before the current study as far as we know. This conclusion shows that real quantum mechanics is a maximal maskable subtheory in the usual quantum mechanics.
\Thsref{thm:MaskMax} and \ref{thm:MaskEnt} below are proved in \aref{sec:MaskMaxProof}.  

\begin{restatable}{theorem}{thmMaskMax}\label{thm:MaskMax}
	$\scrD^\rmR(\caH)$ is a maximal maskable set. 
\end{restatable}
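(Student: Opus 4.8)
The plan is to prove that $\scrD^\rmR(\caH)$ is maximal by a contradiction argument: suppose some maskable set $\scrS$ strictly contains $\scrD^\rmR(\caH)$, and derive a contradiction from the structure of all maskers for $\scrD^\rmR(\caH)$ established in \thref{thm:MaskGenHR}. The key observation is that any masker $M$ for $\scrS$ is in particular a masker for the subset $\scrD^\rmR(\caH)$, so \thref{thm:MaskGenHR} pins down its form almost completely: writing $|\Psi_j\>=M|j\>$, we must have $|\Psi_j\>=(U_j\otimes\openone_B)|\Psi_0\>$ with $\{U_j\}_{j=1}^{d-1}$ a set of traceless HR matrices commuting with $\tau_A$. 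The strategy is then to show that this rigid structure forces $M$ to masks only real states, i.e. that $M(\rho)$ can have $\rho$-independent reduced states only when $\rho\in\scrD^\rmR(\caH)$.

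First I would reduce to pure states: since masking a set is equivalent to masking its convex hull (the reduced-state maps are affine in $\rho$), and since an arbitrary state in $\scrS\setminus\scrD^\rmR(\caH)$ must have a nonreal component, it suffices to show that no pure state $\ket{\psi}=\sum_j c_j\ket{j}$ with a genuinely complex coefficient vector $\vec{c}$ can be masked alongside $\scrD^\rmR(\caH)$. The output is $M\ket{\psi}=[U(\vec c)\otimes\openone_B]\ket{\Psi_0}$ with $U(\vec c)=\sum_j c_j U_j$, so the reduced state on $A$ is $\tr_B(M\outer{\psi}{\psi})=U(\vec c)\tau_A U(\vec c)^\dag$. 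For $M$ to mask $\ket{\psi}$ we need this to equal $\tau_A$, i.e. $U(\vec c)\tau_A U(\vec c)^\dag=\tau_A$. The heart of the proof is to show that, given the HR relations \eqref{eq:HR} together with the commutation and tracelessness constraints, the condition $U(\vec c)\tau_A U(\vec c)^\dag=\tau_A$ can hold only when $\vec c$ is real up to an overall phase. Expanding $U(\vec c)\tau_A U(\vec c)^\dag$ and using $U_j^\dag=-U_j$, $U_j\tau_A=\tau_A U_j$, and the trace identities of \lref{lem:HRtauTrace} to isolate the cross terms, I expect the off-diagonal-in-$\vec c$ contributions (those involving $c_j\bar c_k$ with $j\neq k$) to vanish precisely when $\mathrm{Im}(\bar c_j c_k)=0$ for all $j,k$, which is equivalent to $\vec c$ being real up to a global phase.

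The main obstacle will be handling the low-dimensional anomalies, especially $d=4$ (where $s=3$) and $d=2$. \lref{lem:HRtauTrace} shows that for $s=3$ the quartic trace identity \eqref{eq:HRtauTrace3} picks up the extra term involving $\tr(U_1U_2U_3\tau^\alpha)$, so the clean cancellation used for $s\neq 3$ must be reexamined; I would treat $d=4$ separately, possibly leaning on the explicit magic-basis computation $C(\ket{\Psi(\vec c)})=|\sum_j c_j^2|$ already displayed in \eqref{eq:MagicBasis} and its surrounding text, which directly certifies that maximal entanglement of the output (equivalently, $\tau_A$-preservation) forces $\vec c$ real up to phase. The case $d=2$ is genuinely different because \lref{lem:HRtauTrace} requires $s\geq 2$ and the tracelessness of $U_1$ is not automatic (as emphasized after \eqref{eq:HR}); here the maximality must be argued through \crref{cor:MaskQubit}, which already identifies nontrivial disks in the Bloch ball—including the real-state disk—as maximal maskable sets in the qubit case.

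Finally, once it is shown that any masker for a set containing $\scrD^\rmR(\caH)$ cannot mask any genuinely complex pure state, I would conclude that $\scrS\subseteq\scrD^\rmR(\caH)$, contradicting the assumption of strict containment; hence $\scrD^\rmR(\caH)$ is not properly contained in any maskable set, establishing maximality. A subtlety to address carefully is the full-rank hypothesis on $\tau_A,\tau_B$ in \thref{thm:MaskGenHR}: a general masker for $\scrS$ need not have full-rank common reduced states, so I would first restrict attention to the support of $\tau_A$ (respectively $\tau_B$) and apply \thref{thm:MaskGenHR} there, checking that the argument is insensitive to the ambient dimension of $\caH_A\otimes\caH_B$ beyond what the rank of $\tau_A$ dictates.
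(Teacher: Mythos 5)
Your overall skeleton---invoke \thref{thm:MaskGenHR} to pin down the structure of any masker for a set containing $\scrD^\rmR(\caH)$, then show that the marginal condition forces reality, with $d=2$ routed through \crref{cor:MaskQubit} and the full-rank issue handled by restricting to supports---is the same as the paper's. However, there are two genuine gaps. First, the reduction to pure states is invalid. Masking does pass to convex (even affine) combinations, but that goes the wrong way for you: if $\rho$ is a non-real \emph{mixed} state, the convex hull of $\scrD^\rmR(\caH)\cup\{\rho\}$ contains no non-real pure state at all. Indeed, every state in that hull has imaginary part $t\lsp\Im(\rho)$ for some $t\in[0,1]$, while the imaginary part of a pure state $|\psi\>\<\psi|$ with $|\psi\>=u+\rmi v$ is $vu^\rmT-uv^\rmT$, an antisymmetric matrix of rank at most $2$; so when $\Im(\rho)$ has rank $\geq 4$ no non-real pure state is available, and masking a mixture never implies masking its pure components. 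Hence proving that no genuinely complex pure state can be masked alongside $\scrD^\rmR(\caH)$ does not rule out a non-real mixed state, which is exactly the case you must exclude. The fix is to drop the reduction: for mixed $\rho$ the same expansion gives $\varrho_A=\tau_A+\sum_{0\leq j<k<d}(\rho_{kj}-\rho_{jk})U_jU_k\tau_A$ (the paper's \lref{lem:varrhoAB}), and your linear-independence argument applies verbatim to the coefficients $\Im\rho_{jk}$.

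Second, your $d=4$ patch would fail. The concurrence formula $C=|\sum_j c_j^2|$ is specific to the two-qubit magic-basis masker, whereas \thref{thm:MaskGenHR} together with \crref{cor:MaskSpecCh} allows $d=4$ maskers with any output dimension divisible by $2$ and nonuniform masking spectra, to which that formula does not apply. Worse, once mixed states must be treated directly (per the first gap), the one-sided condition $\varrho_A=\tau_A$ genuinely does not force reality when $d=4$: for the magic-basis masker one has $U_1U_2U_3=\openone$, so the anomalous term $\tr(U_1U_2U_3\tau_A^2)$ in \eref{eq:HRtauTrace3} is nonzero, and the states with $\varrho_A=\openone_A/2$ form precisely the strictly larger set $\scrS$ of \sref{sec:HideNoMask} (three linear conditions on the $\Im\rho_{jk}$ rather than six), which contains non-real states. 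This is why the paper uses \emph{both} marginal conditions: it computes $\tr(\varrho_A^2)+\tr(\varrho_B^2)$ and shows the anomalous contributions cancel in the sum because $\tr(V_1V_2V_3\tau_B^2)=-\tr(U_1U_2U_3\tau_A^2)$ (\lref{lem:UVAB}), yielding \eref{eq:purityABsum} for all $d\geq3$; setting $\varrho_A=\tau_A$ and $\varrho_B=\tau_B$ then forces $\|\rho-\rho^\rmT\|_{\hs}=0$, i.e., $\rho\in\scrD^\rmR(\caH)$. Without this two-sided step your argument cannot close the $d=4$ case.
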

\Thref{thm:MaskMax} implies that $\scrP^\rmR(\caH)$ is a maximal maskable set within $\scrP(\caH)$. Moreover, given any masker $M$ for $\scrD(\caH)$ with $d=\dim(\caH)\geq 3$, it turns out $M(\scrP^\rmR(\caH))$ happens to be the set of maximally entangled states within $M(\scrD(\caH))$, as shown in the following theorem. 

\begin{restatable}{theorem}{thmMaskEnt}\label{thm:MaskEnt}
	Suppose  $d\geq 3$ and   $M: \caH\mapsto \caH_A\otimes \caH_B$ is a masker  for $\scrD^\rmR(\caH)$.
	Let $|\Psi_0\>=M|0\>$; then any state in $M(\scrD(\caH))$ can be created from $|\Psi_0\>\<\Psi_0|$ by local operations and classical communication (LOCC). 
	In addition, $E(M(\rho))\leq E(|\Psi_0\>\<\Psi_0|)$ for any entanglement monotone $E$ and  all $\rho\in \scrD(\caH)$. When $E$ is a convex strict monotone, the upper bound is saturated iff $\rho\in \scrP^\rmR(\caH)$.
\end{restatable}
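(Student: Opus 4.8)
The plan is to derive all three claims from the structure of maskers in \thref{thm:MaskGenHR}, the majorization bound of \crref{cor:PurityCon}, and Nielsen-type LOCC theory; the entanglement inequality and its saturation will follow from LOCC reachability together with the convexity and strictness of $E$.

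First I would fix $\rho\in\scrD(\caH)$ with spectral decomposition $\rho=\sum_k p_k|\psi_k\>\<\psi_k|$, so that $M(\rho)=\sum_k p_k|\Psi(\psi_k)\>\<\Psi(\psi_k)|$ with normalized outputs $|\Psi(\psi_k)\>=M|\psi_k\>$ and reduced states $\sigma_A^{(k)}=\tr_B|\Psi(\psi_k)\>\<\Psi(\psi_k)|$. Applying \crref{cor:PurityCon} to each pure input $|\psi_k\>\<\psi_k|\in\scrD(\caH)$ (here $d\geq3$ is used) shows that $\tau_A$ is majorized by $\sigma_A^{(k)}$; writing $\lambda(\cdot)$ for the decreasingly ordered spectrum and $\prec$ for majorization, this reads $\lambda(\tau_A)\prec\lambda(\sigma_A^{(k)})$. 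Since each $\lambda(\sigma_A^{(k)})$ is already sorted, so is their convex combination, and a partial-sum comparison gives $\lambda(\tau_A)\prec\sum_k p_k\lambda(\sigma_A^{(k)})$. This is exactly the majorization criterion (Nielsen; Jonathan--Plenio) for converting the pure state $|\Psi_0\>$ into the ensemble $\{p_k,|\Psi(\psi_k)\>\}$ by LOCC; discarding the classical flag then yields $M(\rho)$, establishing the first claim. Monotonicity of any entanglement monotone $E$ under LOCC immediately gives $E(M(\rho))\le E(|\Psi_0\>\<\Psi_0|)$ for all $\rho$, the second claim.

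For the saturation, the implication $\rho\in\scrP^\rmR(\caH)\Rightarrow$ equality is immediate: for $\rho=|\psi(\vec c)\>\<\psi(\vec c)|$ with $\vec c$ real, \lref{lem:HRequiDef} makes $U(\vec c)=\sum_j c_j U_j$ unitary, so $M(\rho)=[U(\vec c)\otimes\openone_B]|\Psi_0\>\<\Psi_0|[U(\vec c)^\dag\otimes\openone_B]$ is local-unitarily equivalent to $|\Psi_0\>\<\Psi_0|$ and $E$ is invariant under local unitaries. Conversely, suppose $E$ is a convex strict monotone and $E(M(\rho))=E(|\Psi_0\>\<\Psi_0|)$. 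Convexity combined with the second claim applied to each $|\psi_k\>\<\psi_k|$ gives
\be
E(M(\rho))\le\sum_k p_k E(|\Psi(\psi_k)\>\<\Psi(\psi_k)|)\le E(|\Psi_0\>\<\Psi_0|),
\ee
so equality forces $E(|\Psi(\psi_k)\>\<\Psi(\psi_k)|)=E(|\Psi_0\>\<\Psi_0|)$ for every $k$ with $p_k>0$. Because $|\Psi_0\>\to|\Psi(\psi_k)\>$ is a pure-to-pure LOCC transformation preserving a strict monotone, it must be reversible, i.e.\ $|\Psi(\psi_k)\>$ is local-unitarily equivalent to $|\Psi_0\>$; hence $\sigma_A^{(k)}$ is isospectral to $\tau_A$ and $\tr[(\sigma_A^{(k)})^2]=\wp$. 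Since $|\Psi(\psi_k)\>$ is pure, $\tr[(\sigma_A^{(k)})^2]=\tr[(\sigma_B^{(k)})^2]$, so \crref{cor:PurityCon} (whose purity bound is saturated exactly for real inputs) forces $|\psi_k\>$ to be real up to a phase, whence $\rho$ is real. Finally, strictness excludes a genuine mixture: the LOCC map $|\Psi_0\>\mapsto M(\rho)$ is irreversible whenever $M(\rho)$ is properly mixed and would then strictly decrease $E$, so $M(\rho)$ is pure and $\rho\in\scrP^\rmR(\caH)$.

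The main obstacle is the saturation direction: correctly invoking the pure-to-mixed LOCC (majorization) criterion and, above all, upgrading $\scrD^\rmR(\caH)$ to $\scrP^\rmR(\caH)$. This final step is precisely where the \emph{strict} hypothesis on $E$ is indispensable---a properly mixed real $\rho$ produces an $M(\rho)$ with the correct marginals $\tau_A,\tau_B$ yet strictly smaller entanglement---and care is needed to keep the reversibility argument (giving reality of each $|\psi_k\>$) logically separate from the purity argument (ruling out nontrivial mixing).
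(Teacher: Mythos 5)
Your handling of the first two claims and of the pure-state half of the saturation claim is sound and essentially parallels the paper: \crref{cor:PurityCon} gives majorization of $\tau_A$ by the output marginals, Nielsen's criterion gives the LOCC conversion (the paper just samples a pure-state decomposition and discards the classical flag, rather than invoking the Jonathan--Plenio ensemble criterion, but both are fine), and convexity plus strictness localizes saturation on the eigencomponents, forcing each eigenvector $|\psi_k\>$ to be real via \eref{eq:purityABsum}.

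The gap is in your final step, where you rule out properly mixed $\rho$. You assert that ``the LOCC map $|\Psi_0\>\mapsto M(\rho)$ is irreversible whenever $M(\rho)$ is properly mixed'' and then invoke strictness. Two problems. First, the paper defines strict monotonicity only for pure-to-pure conversions, so it cannot be applied as stated to the mixed target $M(\rho)$. Second, and more seriously, the general principle you lean on is false: a properly mixed state \emph{can} be converted by LOCC into an entangled pure state when its mixedness is purely local --- e.g. $|\Phi^+\>\<\Phi^+|_{A_1B_1}\otimes\frac{1}{2}\openone_{A_2}$ is properly mixed, yet Alice can simply reset her ancilla $A_2$ and obtain a pure maximally entangled state. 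So irreversibility here is a nontrivial claim that depends on the specific structure of $M(\rho)$ (a mixture of distinct local-unitary images of $|\Psi_0\>$), and you give no argument for it. The paper sidesteps the issue entirely: for mixed $\rho$ (even a real one) it chooses a pure-state decomposition containing at least one \emph{non-real} component --- always possible for rank $\geq 2$, e.g. replace $p_1|e_1\>\<e_1|+p_2|e_2\>\<e_2|$ by the equal mixture of the states proportional to $\sqrt{p_1}\lsp|e_1\>\pm\rmi\sqrt{p_2}\lsp|e_2\>$ --- and then convexity together with the already-established pure-state strict inequality yields $E(M(\rho))<E(|\Psi_0\>\<\Psi_0|)$. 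You could close your gap the same way (you already have the strict inequality for non-real pure inputs), or prove irreversibility directly: any LOCC map sending $M(\rho)$ to the pure state $|\Psi_0\>\<\Psi_0|$ must, by extremality of pure states under linear maps, send \emph{every} pure state in the support of $M(\rho)$ to $|\Psi_0\>$; applying this to a non-real superposition of the (real) eigenvectors and combining \eref{eq:purityABsum} with Nielsen's criterion gives a contradiction. As written, however, the proof is incomplete at exactly the step the theorem's ``$\scrP^\rmR(\caH)$ rather than $\scrD^\rmR(\caH)$'' conclusion hinges on.
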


A strict entanglement monotone $E$ means $E(|\Psi_1\>)>E(|\Psi_2\>)$ whenever $|\Psi_1\>$ can be turned into $|\Psi_2\>$ by LOCC, but not vice versa. Prominent examples of convex strict  monotones include entanglement of formation, entanglement cost, relative entropy of entanglement, and concurrence  \cite{HillW97,Woot98,RungBCH01,HoroHHH09}. For concurrence,  we can further derive the following result (assuming $d\geq 3$)
\begin{align}
C(\varrho)&\leq \min\Bigl\{\sqrt{2[1-\tr(\varrho_A^2)]},\sqrt{2[1-\tr(\varrho_B^2)]}\,\Bigr\}\nonumber\\
&\leq \sqrt{2-\wp\bigr(2+\|\rho-\rho^\rmT\|_{\hs}^2\bigr)}\leq \sqrt{2(1-\wp)}, \label{eq:concurrence}
\end{align}
where $\varrho_A=\tr_B(\varrho)$ and $\varrho_B=\tr_A(\varrho)$ with $\varrho=M(\rho)$. 
Here the second inequality follows from \eref{eq:purityABsum} in \crref{cor:PurityCon}. 
	If $\rho$ is pure, then $\varrho$ is pure, and   $\varrho_A, \varrho_B$ have the same purity, so the first two inequalities in \eref{eq:concurrence} are saturated, which yields
\begin{align}
C(\varrho)&= \sqrt{2-\wp\bigr(2+\|\rho-\rho^\rmT\|_{\hs}^2\bigr)}\leq \sqrt{2(1-\wp)} \label{eq:concurrencePure}
\end{align}
for  $d\geq 3$.	
By contrast,  the last inequality in \eref{eq:concurrence} [also in \eref{eq:concurrencePure}] is saturated iff $\rho$ is a real density matrix.

\begin{figure}
	\includegraphics[width=7cm]{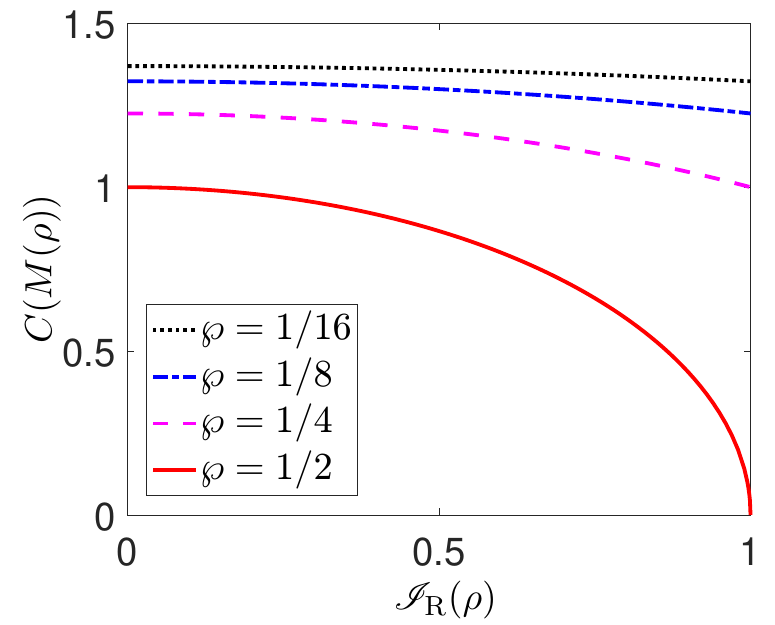}
	\caption{\label{fig:MaskCon}
		The relation between the concurrence $C(M(\rho))$ of the output state of the masker $M$ for $\scrD^\rmR(\caH)$ and the robustness of imaginarity $\scrI_\rmR(\rho)$ of the input state. Here 	$\rho$ is any pure state on $\caH$ with $\dim(\caH)\geq 3$, and $\wp$ is the masking purity of $M$. The concurrence decreases monotonically with the robustness of imaginarity. 	
	}
\end{figure}	
	
	For a pure input state $\rho$, the concurrence $C(M(\rho))$ of the output state is tied to the robustness of imaginarity $\scrI_\rmR(\rho)$ defined in \eref{eq:RoI} \cite{HickG18,WuKRS21}, which plays a key role in the resource theory of imaginarity. In particular, \esref{eq:RoIvalue}{eq:concurrencePure} together imply that 
	\begin{align}\label{eq:ConcurrenceRoI}
C(M(\rho))&= \sqrt{2-2\wp-2\wp\scrI_\rmR^2(\rho)},\quad   d\geq 3,
\end{align}
note that $\|\rho-\rho^\rmT\|_1=\sqrt{2}\|\rho-\rho^\rmT\|_{\hs}$ and 
\begin{align}\label{eq:RoI2}
\scrI_\rmR(\rho)=\frac{1}{2}\|\rho-\rho^\rmT\|_1=\frac{1}{\sqrt{2}}\|\rho-\rho^\rmT\|_{\hs}
\end{align}	
 when $\rho$ is pure (which means $\rho^\rmT$ is also pure). In this case $C(M(\rho))$  is completely determined by the masking purity $\wp$ and robustness of imaginarity $\scrI_\rmR(\rho)$.  This result is illustrated in \fref{fig:MaskCon} and is of  intrinsic interest to the resource theory of imaginarity.

Here it should be pointed out that \thref{thm:MaskEnt} and Eqs.~\eqref{eq:concurrence}-\eqref{eq:ConcurrenceRoI} are not applicable when $d=2$ since \crref{cor:PurityCon} is not applicable in this special case. To construct a counterexample, let 
\begin{equation}
\begin{aligned}
|\Psi_0\>&=\frac{1}{2}|00\>+\frac{1}{2}|11\>+\frac{1}{\sqrt{2}}|22\>,\\
|\Psi_1\>&=\frac{\rmi}{2}|00\>+\frac{\rmi}{2}|11\>-\frac{\rmi}{\sqrt{2}}|22\>.
\end{aligned}
\end{equation}
Then $|\Psi_0\>, |\Psi_1\>$ are orthonormal,  and the isometry $M$ defined by the map	$|j\>\mapsto |\Psi_j\>$ for $j=0, 1$ is a masker for $\scrD^\rmR(\caH)$ with masking purity $\wp=3/8$ [cf. \eref{eq:Maskerd2} in \aref{sec:MaskGenCorProof}]. Let 
\begin{align}
|\psi\>=\frac{1}{6}\Bigl[\sqrt{6(3-2\sqrt{2}\lsp)}\lsp|0\>-\rmi\sqrt{6(3+2\sqrt{2}\lsp)}\lsp|1\>\Bigr];
\end{align}
then 
\begin{align}
M|\psi\>&=\frac{1}{6}\Bigl[\sqrt{6(3-2\sqrt{2}\lsp)}\lsp|\Psi_0\>-\rmi\sqrt{6(3+2\sqrt{2}\lsp)}\lsp|\Psi_1\>\Bigr]\nonumber\\
&=\frac{1}{\sqrt{3}}(|00\>+|11\>-|22\>).
\end{align}
According to the majorization criterion \cite{Niel99}, $M|\psi\>$  can be turned into  $|\Psi_0\>$ by LOCC, but not vice versa. Therefore, $E(M(|\psi\>\<\psi|))\geq E(|\Psi_0\>\<\Psi_0|)$ for any entanglement monotone $E$, and the inequality is strict when $E$ is the entanglement cost, entanglement of formation, distillable entanglement,  relative entropy of entanglement, or concurrence, which contradicts \thref{thm:MaskEnt}. For example, the concurrence of $M(|\psi\>\<\psi|)$  reads
\begin{equation}
C(M(|\psi\>\<\psi|))=\frac{2}{\sqrt{3}}>\sqrt{2(1-\wp)}=\frac{\sqrt{5}}{2},
\end{equation}
which contradicts Eqs.~\eqref{eq:concurrence}-\eqref{eq:ConcurrenceRoI}.

\section{\label{sec:HideNoMask}A hidable set that is not maskable}
In this section we provide an example which is hidable but not maskable. Such an example has never been constructed in the literature previously
as far as we know.  Suppose $\dim(\caH)=4$ and 
$\scrS\subset \scrD(\caH)$ is the set of density matrices $\rho=\sum_{j,k=0}^3\rho_{jk}|j\>\<k|$ defined by the  following constraints
\begin{align}
\Im \rho_{01}=\Im \rho_{23},\quad \Im \rho_{02}=-\Im \rho_{13},\quad  \Im \rho_{03}=\Im \rho_{12},
\end{align}
where $\Im$ denotes the imaginary part of a complex number. Then $\scrS$ contains $\scrD^\rmR(\caH)$  as a strict subset and is thus not maskable according to \thref{thm:MaskMax}. Nevertheless, $\scrS$ 
is hidable as we shall see shortly. 

To show that $\scrS$ 
is hidable, it suffices to construct a nontrivial partial masker for $\scrS$. 
Consider the masker $M$ for $\scrD^\rmR(\caH)$  that is constructed using the magic basis in \eref{eq:MagicBasis}. Although $M$ is not a masker for $\scrS$, it is straightforward to verify that 
\begin{align}
\tr_B[M(\rho)]=\frac{\openone_A}{2}\quad \forall \rho\in \scrS.
\end{align}
Therefore, $M$ is a partial masker for $\scrS$. This partial masker is nontrivial because the information about all real quantum  states  in $\scrS$ is encoded in the correlations. This example demonstrates that not all hidable sets of quantum states are maskable. In other words, not all sets of quantum states that are not maskable are antiscrambling; the property of being antiscrambling is much stronger, as illustrated in \fref{fig:NoGo}.

\section{\label{sec:PhaseReal}Relation between real states and phase-parameterized states}
Let $\vec{c}=(c_0,c_1, \ldots, c_{d-1})$ be a normalized real vector with $d\geq 2$ and $c_j>0$ for $j=0,1,\ldots,d-1$; define
\begin{equation}
\mathscr{K}(\vec{c})=\biggl\{\sum_j c_j \rme^{\rmi\phi_j} |j\rangle\bigg| 0\leq \phi_j< 2\pi, j=0,1,\ldots,d-1 \biggr\}.
\end{equation}
Let $\scrP(\vec{c})$ be the set of phase-parameterized states associated with kets in $\scrK(\vec{c})$ and let $\overline{\scrP}(\vec{c})$ be the convex hull of $\scrP(\vec{c})$. According to \rcite{ModiPSS18}, the sets $\scrP(\vec{c})$ and $\overline{\scrP}(\vec{c})$ can be masked by the isometry defined as follows,
\begin{equation}
|j\rangle\mapsto |jj\rangle, \quad j=0,1,\ldots d-1.
\end{equation}
This isometry is well known although it was not recognized as a masker before the study in  \rcite{ModiPSS18}. Here the dimension of each subsystem of the output system is the same as that of the input system, which is much smaller compared with the counterpart for masking the set of real quantum states as discussed in \sref{sec:MaskRQ}. 
Moreover, \rcite{ModiPSS18} conjectured that any maskable set can be embedded in $\overline{\scrP}(\vec{c})$ for a suitable choice of $\vec{c}$, which is known as the hyperdisk conjecture. Although this conjecture has been disproved \cite{DingH20}, it is plausible that $\overline{\scrP}(\vec{c})$ is a maximal maskable set.

\begin{conjecture}\label{con:PhaseMax}
Suppose $c_j>0$ for $j=0,1,\ldots,d-1$; then $\overline{\scrP}(\vec{c})$ is a maximal maskable set. 
\end{conjecture}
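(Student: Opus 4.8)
The plan is to exploit \pref{pro:MaskSpan}, which turns maskability into a property of the linear span, and thereby reduce this geometric claim to a convex-geometric one. The first step is to pin down $\spa(\overline{\scrP}(\vec{c}))$. Every ket in $\scrK(\vec{c})$ gives a projector whose diagonal is the fixed matrix $D=\diag(c_0^2,\dots,c_{d-1}^2)$, while integrating a projector uniformly over all phases except $\phi_j,\phi_k$ annihilates every off-diagonal entry outside the $(j,k)$ pair; letting $\phi_j-\phi_k$ vary then sweeps out both the real and imaginary directions of that entry. Hence $\spa(\overline{\scrP}(\vec{c}))=V:=\mathbb{R}D\oplus\{\text{off-diagonal Hermitian operators}\}$, of dimension $d^2-d+1$, and the density matrices contained in $V$ are exactly $\caR:=\{\rho\in\scrD(\caH):\rho_{jj}=c_j^2\ \forall j\}$.

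The next step is the key reduction. The isometry $|j\rangle\mapsto|jj\rangle$ in fact masks not merely $\overline{\scrP}(\vec{c})$ but all of $\caR$, since $\tr_B[M(\rho)]=\tr_A[M(\rho)]=D$ for every $\rho$ with diagonal $D$. Combined with \pref{pro:MaskSpan}, any maskable $\scrS'\supseteq\overline{\scrP}(\vec{c})$ with $\spa(\scrS')=V$ satisfies $\scrS'\subseteq\caR$. So maximality splits into two tasks: (i) rule out maskable supersets of strictly larger span, and (ii) show $\overline{\scrP}(\vec{c})=\caR$. For (i) I would characterize the maskers of $\overline{\scrP}(\vec{c})$ in the spirit of \thref{thm:MaskGenHR}, argue that enlarging $\spa(\scrS')$ beyond $V$ forces a nonconstant diagonal of the reduced states, and exclude the informationally complete limit via \thref{thm:MaskIC}.

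Task (ii) is where the genuine difficulty lies, and it is clearest under the substitution $\rho=D^{1/2}\tilde\rho D^{1/2}$ (legitimate because $c_j>0$), which identifies $\caR$ with the complex elliptope $\{\tilde\rho\succeq0:\tilde\rho_{jj}=1\}$ and identifies $\overline{\scrP}(\vec{c})$ with the convex hull of the rank-one phase projectors $|\phi\rangle\langle\phi|$, $|\phi_j|=1$. Thus (ii) asserts that every correlation matrix decomposes into phase-vector projectors. I would attack this by the extreme-point method: a rank-$r$ element is extremal in the elliptope iff the projectors $\{|w_j\rangle\langle w_j|\}_{j=1}^{d}$ of its Gram vectors $w_j\in\mathbb{C}^r$ span the whole space of $r\times r$ Hermitian matrices, which forces $r^2\le d$.

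The hard part --- indeed the crux of the conjecture --- is precisely this rank bound. For $d\le3$ it yields $r=1$, so the elliptope has only rank-one extreme points, (ii) holds, and the argument closes. For $d\ge4$, however, rank-two extreme points exist: the Gram matrix of four unit vectors in $\mathbb{C}^2$ whose projectors span the $2\times2$ Hermitian space (for instance the images of $|0\rangle,|1\rangle,|+\rangle$, and $\tfrac1{\sqrt2}(|0\rangle+\rmi|1\rangle)$) is extremal of rank two, and after the substitution lies in $\caR\setminus\overline{\scrP}(\vec{c})$. Since $\caR$ is maskable, this shows the natural strategy cannot succeed unchanged once $d\ge4$ and signals that the maximal maskable set enlarging $\overline{\scrP}(\vec{c})$ is actually $\caR$. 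Resolving the conjecture therefore reduces entirely to reconciling this obstruction --- either by restricting to $d\le3$, or by replacing $\overline{\scrP}(\vec{c})$ with $\caR$ in the statement --- and I expect this elliptope identity to be the single decisive obstacle.
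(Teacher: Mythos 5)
You should first be aware that the statement you were asked to prove is labeled a \emph{conjecture} in the paper: the author gives no proof of it, only the remark that it is "plausible" after the stronger hyperdisk conjecture was disproved. So there is no paper proof to compare against, and your attempt must be judged on its own merits.

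Judged that way, what you have written is not a proof of the conjecture but, in essence, a refutation of it for $d\geq 4$ --- and your reasoning there appears sound. Your central observation is that the isometry $|j\>\mapsto|jj\>$ masks not merely $\overline{\scrP}(\vec{c})$ but the whole set $\caR$ of density matrices with diagonal $(c_j^2)_j$; equivalently, by \pref{pro:MaskSpan}, the set of \emph{all} density matrices inside $\spa(\overline{\scrP}(\vec{c}))$ is maskable, since it has the same span. Hence $\overline{\scrP}(\vec{c})$ can be maximal only if $\overline{\scrP}(\vec{c})=\caR$. Under the congruence $\rho\mapsto D^{-1/2}\rho D^{-1/2}$ with $D=\diag(c_0^2,\ldots,c_{d-1}^2)$ (legitimate since $c_j>0$), this becomes the assertion that every complex correlation matrix is a convex mixture of rank-one correlation matrices $vv^\dag$ with unimodular entries, and that fails for $d\geq 4$: your Gram-matrix example built from $|0\>,|1\>,|+\>,\tfrac{1}{\sqrt{2}}(|0\>+\rmi|1\>)$ is correct, because the four associated projectors span the real vector space of $2\times 2$ hermitian matrices, which is exactly the criterion for extremality in the elliptope; being extreme and of rank two, this point cannot be a mixture of rank-one elements, so it lies in $\caR\setminus\overline{\scrP}(\vec{c})$. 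You have therefore exhibited a maskable set strictly containing $\overline{\scrP}(\vec{c})$ whenever $d\geq4$, which falsifies the conjecture as stated. You should present this as a counterexample rather than bury it as an "obstruction" to your own strategy; it is the main content of your work.

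In the remaining range your argument is incomplete. For $d=2$ the conjecture already follows from \crref{cor:MaskQubit} (a nontrivial disk is a maximal maskable set), and there your step (i) is also easy, since any strictly larger span must be the full operator space, so the superset is IC and excluded by \thref{thm:MaskIC}. For $d=3$ your step (ii) does hold: extremality of a rank-$r$ point of the complex elliptope forces $r^2\leq d$, so all extreme points have rank one and $\overline{\scrP}(\vec{c})=\caR$. But step (i) is only a sketch there: $\spa(\overline{\scrP}(\vec{c}))$ has dimension $d^2-d+1=7$ while the full space has dimension $9$, so a maskable superset with span of dimension $8$ is not IC and is excluded by nothing you wrote; the claim that enlarging the span "forces a nonconstant diagonal of the reduced states" is asserted, not proved. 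The honest summary of your proposal is thus: the conjecture is false for $d\geq4$, true for $d=2$ by the paper's own results, and still open for $d=3$, where your program reduces it precisely to step (i).
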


Here we clarify the relation between the set $\scrP(\vec{c})$ of phase-parameterized states defined above and 
the set $\scrP^\rmR(\caH')$ of real states. For generality, here $\caH'$ is a Hilbert space whose dimension may be different from that of $\caH$. 
When $d=2$ and $c_0=c_1=1/\sqrt{2}$, the set $\scrP(\vec{c})$ corresponds to a great circle on the Bloch sphere and is equivalent to the set $\scrP^\rmR(\caH')$  with $\dim(\caH')=2$. Here "equivalent" means that $\scrP(\vec{c})$ can be mapped to $\scrP^\rmR(\caH')$ by some isometry $S$, that is
\begin{equation}
S\scrP(\vec{c})S^\dag =\scrP^\rmR(\caH'). 
\end{equation}
It turns out this is the only case in which $\scrP(\vec{c})$ is equivalent to a subset of $\scrP^\rmR(\caH')$, as shown in \pref{pro:PhaseRoReal} below and proved in \aref{sec:PhaseRealProof}. Conversely, $\scrP^\rmR(\caH')$ is equivalent to a subset of $\scrP(\vec{c})$ only if $\dim(\caH')=2$,
as shown in  \pref{pro:RealToPhase} below and proved in \aref{sec:PhaseRealProof}.

\begin{restatable}{proposition}{proPhaseRoReal}\label{pro:PhaseRoReal}
	Suppose $\scrP(\vec{c})$ with $d\geq 2$ is equivalent to a subset of $\scrP^\rmR(\caH')$; then $d=2$ and $c_0=c_1=1/\sqrt{2}$. 
\end{restatable}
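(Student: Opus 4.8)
The plan is to extract a rigid functional constraint from the isometry and then analyze it by harmonic analysis on the torus. Suppose $S$ is an isometry with $S\scrP(\vec{c})S^\dag\subseteq\scrP^\rmR(\caH')$, so that for every phase vector $\vec{\phi}=(\phi_0,\ldots,\phi_{d-1})$ the image $S|\psi(\vec{\phi})\>$ of $|\psi(\vec{\phi})\>=\sum_j c_j\rme^{\rmi\phi_j}|j\>$ is a real ket up to an overall phase. The key observation is that the \emph{Bargmann invariant} (triple overlap) of any three such images is real: if $|\chi_a\>=\rme^{\rmi\theta_a}|r_a\>$ with $|r_a\>$ real, the phases telescope in $\<\chi_0|\chi_1\>\<\chi_1|\chi_2\>\<\chi_2|\chi_0\>=\<r_0|r_1\>\<r_1|r_2\>\<r_2|r_0\>\in\bbR$. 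Since $S$ preserves all overlaps, the same triple product computed from the original kets must be real. Writing $g(\vec{\psi}):=\sum_j c_j^2\rme^{\rmi\psi_j}=\<\psi(\vec{\phi})|\psi(\vec{\phi}')\>$ with $\vec{\psi}=\vec{\phi}'-\vec{\phi}$, and choosing the three phase vectors as $\vec{0},\vec{u},\vec{u}+\vec{w}$, this reality condition becomes
\begin{equation}
\Im\bigl[g(\vec{u})\,g(\vec{w})\,\overline{g(\vec{u}+\vec{w})}\bigr]=0\quad\forall\,\vec{u},\vec{w},
\end{equation}
where I used $g(-\vec{x})=\overline{g(\vec{x})}$.

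First I would convert this into a statement about $\arg g$. Wherever the three factors are nonzero, the displayed equation says $\arg g(\vec{u})+\arg g(\vec{w})\equiv\arg g(\vec{u}+\vec{w})\pmod\pi$, so $\arg g$ descends to a continuous homomorphism from the torus $(\bbR/2\pi\bbZ)^d$ into $\bbR/\pi\bbZ$ (continuity is clear near $\vec{\psi}=\vec{0}$ since $g(\vec{0})=1>0$, and the homomorphism law propagates it globally). Such a homomorphism is necessarily linear, $\arg g(\vec{\psi})\equiv\sum_j a_j\psi_j\pmod\pi$, and well-definedness on the lattice $(2\pi\bbZ)^d$ forces $a_j\in\tfrac12\bbZ$. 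Equivalently $\rme^{-\rmi\sum_j a_j\psi_j}g(\vec{\psi})\in\bbR$ for all $\vec{\psi}$, whose imaginary part yields the identity
\begin{equation}
\sum_{k=0}^{d-1} c_k^2\,\sin\bigl(\<\vec{\beta}_k,\vec{\psi}\>\bigr)=0\quad\forall\,\vec{\psi},\qquad \vec{\beta}_k:=\vec{e}_k-\vec{a},
\end{equation}
with $\vec{e}_k$ the $k$-th standard basis vector and $\vec{a}=(a_j)$.

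Finally I would expand each sine into characters $\rme^{\pm\rmi\<\vec{\beta}_k,\vec{\psi}\>}$ and invoke linear independence of distinct characters on the torus. Because the vectors $\vec{\beta}_k=\vec{e}_k-\vec{a}$ are pairwise distinct, the coefficient of $\rme^{\rmi\<\vec{\beta}_k,\vec{\psi}\>}$ can be cancelled only by a term with $\vec{\beta}_{k'}=-\vec{\beta}_k$; since every $c_k^2>0$, each index $k$ must admit such a partner, i.e. $\vec{e}_k+\vec{e}_{k'}=2\vec{a}$, and moreover $c_{k'}^2=c_k^2$. A fixed $\vec{a}$ can satisfy $\vec{e}_k+\vec{e}_{k'}=2\vec{a}$ for at most one unordered pair $\{k,k'\}$, while the diagonal option $\vec{a}=\vec{e}_k$ is excluded because it strands the remaining indices, which exist since $d\geq2$. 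Hence there is exactly one pair, forcing $d=2$, $\vec{a}=\tfrac12(\vec{e}_0+\vec{e}_1)$, and $c_0^2=c_1^2$; normalization then gives $c_0=c_1=1/\sqrt{2}$. The main obstacle I anticipate is the middle step: rigorously upgrading the mod-$\pi$ additivity of $\arg g$, which a priori holds only off the zero set of $g$, to a genuine global linear homomorphism and hence to the clean identity above. Once that is established, the character-cancellation argument is routine.
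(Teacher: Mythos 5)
Your opening move is the same as the paper's---under an isometry into $\scrP^\rmR(\caH')$ every Bargmann triple product must be real---and your character-cancellation endgame is correct: granting the identity $\sum_k c_k^2\sin(\langle\vec{e}_k-\vec{a},\vec{\psi}\rangle)=0$ for all $\vec{\psi}$, the pairing/diagonal analysis does force $d=2$ and $c_0=c_1=1/\sqrt{2}$. What differs is everything in between, and the middle step you flagged is a genuine gap, not a formality. The relation $\arg g(\vec{u})+\arg g(\vec{w})\equiv\arg g(\vec{u}+\vec{w}) \pmod\pi$ is available only when all three of $g(\vec{u}),g(\vec{w}),g(\vec{u}+\vec{w})$ are nonzero; $\arg g$ simply does not exist on the zero set $Z$ of $g$, so the phrase ``descends to a continuous homomorphism from the torus'' asserts an extension across $Z$ that is exactly what needs proving. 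Moreover $Z$ cannot be assumed empty: in the very case the proposition permits ($d=2$, $c_0=c_1=1/\sqrt{2}$) one has $g(\vec{\psi})=\rme^{\rmi(\psi_0+\psi_1)/2}\cos\bigl((\psi_0-\psi_1)/2\bigr)$, which vanishes on a whole circle in the torus.

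The gap is repairable by a local argument plus analytic continuation, avoiding any global statement about $\arg g$. Since $g(\vec{0})=1$, choose a ball $B\ni\vec{0}$ on which $\Re g>0$ and set $\theta:=\arg g\in(-\pi/2,\pi/2)$ there; for $\vec{u},\vec{w}\in\tfrac{1}{2}B$ the defect $\theta(\vec{u})+\theta(\vec{w})-\theta(\vec{u}+\vec{w})$ is continuous, $\pi\bbZ$-valued, and vanishes at the origin, hence vanishes identically, so $\theta$ is additive and continuous, therefore linear on $\tfrac{1}{2}B$: $\theta(\vec{\psi})=\langle\vec{a},\vec{\psi}\rangle$. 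Consequently $\Im\bigl[\rme^{-\rmi\langle\vec{a},\vec{\psi}\rangle}g(\vec{\psi})\bigr]=\sum_k c_k^2\sin(\langle\vec{e}_k-\vec{a},\vec{\psi}\rangle)$ is real-analytic on $\bbR^d$ and vanishes on an open set, hence vanishes identically; then run your character argument over $\bbR^d$, which also makes the half-integrality discussion unnecessary. (Alternatively, drop $\arg$ entirely: expand $\Im\bigl[g(\vec{u})g(\vec{w})\overline{g(\vec{u}+\vec{w})}\bigr]=0$ into exponentials in $(\vec{u},\vec{w})$; the coefficient at frequency $(\vec{e}_0-\vec{e}_1,\vec{e}_0-\vec{e}_1)$ gives $c_0^4c_1^2=c_0^2c_1^4$, so all $c_j$ are equal, while for $d\geq3$ the coefficient at $(\vec{e}_0-\vec{e}_2,\vec{e}_1-\vec{e}_2)$ equals $c_0^2c_1^2c_2^2>0$ yet must vanish, a contradiction.) For perspective, the paper extracts everything from a single evaluation: after permuting so that $c_0^2\leq 1/d$, it applies the reality constraint to the three states $c_0\rme^{2k\pi\rmi/3}|0\rangle+\sum_{j\geq1}c_j|j\rangle$, $k=0,1,2$, obtaining $(c_0^2\rme^{2\pi\rmi/3}+1-c_0^2)^3\in\bbR$, which together with $c_0^2\leq 1/d\leq 1/2$ forces $c_0^2=1/2$ and hence $d=2$. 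Your route is salvageable and more systematic, but considerably heavier than the paper's three-state trick.
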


\begin{restatable}{proposition}{proRealToPhase}\label{pro:RealToPhase}
	Suppose $\scrP^\rmR(\caH')$ with $\dim(\caH')\geq 2$ is equivalent to a subset of $\scrP(\vec{c})$; then $\dim(\caH')=2$. 
\end{restatable}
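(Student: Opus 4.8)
The plan is to reduce the hypothesis to an elementary rank inequality for the real and imaginary parts of the embedding isometry. Let $S$ be the isometry realizing $S\,\scrP^\rmR(\caH')\,S^\dag\subseteq\scrP(\vec c)$, and set $n:=\dim(\caH')\ge 2$. Every element of $\scrP^\rmR(\caH')$ is $|\psi\>\<\psi|$ with $|\psi\>=\sum_k x_k|k\>$ for some real unit vector $\vec x=(x_k)\in\bbR^n$. Since the amplitude moduli of any ket in $\scrK(\vec c)$ equal $c_0,c_1,\ldots,c_{d-1}$, and these moduli are invariant under the overall phase (and under the sign ambiguity of a real ket), and since $c_j>0$ guarantees that prescribing the moduli recovers membership, the containment $S|\psi\>\<\psi|S^\dag\in\scrP(\vec c)$ is equivalent to
\begin{equation}
\bigl|\<j|S|\psi\>\bigr|=c_j,\qquad j=0,1,\ldots,d-1,
\end{equation}
and this must hold for every real unit vector $\vec x$.

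First I would expand one such modulus as a real quadratic form in $\vec x$. Writing $\<j|S|k\>=p_{jk}+\rmi q_{jk}$ with $p_{jk},q_{jk}\in\bbR$ and collecting the $j$th rows into real vectors $\vec p_j,\vec q_j\in\bbR^n$, one has $\<j|S|\psi\>=\vec p_j\cdot\vec x+\rmi\,\vec q_j\cdot\vec x$, whence
\begin{equation}
\bigl|\<j|S|\psi\>\bigr|^2=(\vec p_j\cdot\vec x)^2+(\vec q_j\cdot\vec x)^2=\vec x^{\,\rmT}M_j\vec x,\quad M_j:=\vec p_j\vec p_j^{\,\rmT}+\vec q_j\vec q_j^{\,\rmT}.
\end{equation}
The matrix $M_j$ is real symmetric and, being a sum of two rank-one terms, obeys $\rk(M_j)\le 2$. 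Next I would promote the sphere identity $\vec x^{\,\rmT}M_j\vec x=c_j^2$ to a matrix identity: the form $\vec x^{\,\rmT}M_j\vec x$ agrees with $c_j^2|\vec x|^2=\vec x^{\,\rmT}(c_j^2\openone)\vec x$ on the unit sphere, both sides are homogeneous of degree two, so they agree on all of $\bbR^n$, and since $M_j$ is symmetric this forces $M_j=c_j^2\openone$ with $\openone$ the $n\times n$ identity. As $c_j>0$, the right-hand side has rank $n$, while $\rk(M_j)\le 2$; using any single index $j$ therefore gives $n\le 2$, and combined with $n\ge 2$ we conclude $\dim(\caH')=2$.

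The argument is a rank count, so I do not expect a genuine obstacle; the only care needed is in the two bookkeeping steps. One is the reduction above, where I must use $c_j>0$ (no amplitude is forced to vanish) together with the phase/sign invariance of moduli to see that membership in $\scrP(\vec c)$ is faithfully encoded by the modulus identities. The other is the passage from equality of quadratic forms on the unit sphere to equality of the associated symmetric matrices, which follows from homogeneity together with polarization. Once the quadratic-form reformulation is in place, the decisive contrast $\rk(M_j)\le 2$ versus $\rk(c_j^2\openone)=n$ is immediate.
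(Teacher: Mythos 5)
Your proof is correct, and it takes a genuinely different route from the paper's. The paper argues by contradiction: assuming $\dim(\caH')\geq 3$, it sets $|\psi_k\rangle=S|k\rangle$ for $k=0,1,2$, normalizes the phases of $|\psi_0\rangle$ to zero (implicitly composing $S$ with a diagonal phase unitary, which preserves $\scrP(\vec{c})$), and then pushes the three real superpositions $(|k\rangle+|l\rangle)/\sqrt{2}$ through $S$; the requirement that each image lie in $\scrK(\vec{c})$ up to a global phase forces $\cos\beta_j=\cos\gamma_j=\cos(\beta_j-\gamma_j)=0$, which is impossible. You instead encode membership in $\scrP(\vec{c})$ by the modulus identities $|\langle j|S|\psi\rangle|=c_j$ holding for \emph{all} real unit vectors $\vec{x}$, turn each coordinate into an identity of quadratic forms $\vec{x}^{\,\rmT}M_j\vec{x}=c_j^2|\vec{x}|^2$ on $\bbR^n$, conclude $M_j=c_j^2\openone$ by polarization of symmetric forms, and finish with the rank count $\rk(M_j)\leq 2$ versus $\rk(c_j^2\openone)=n$, so $n\leq 2$. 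Both bookkeeping steps you flagged are sound: membership really is equivalent to the modulus identities (this uses $c_j>0$ and the phase invariance of moduli), and a real symmetric matrix whose quadratic form vanishes identically is zero. What your route buys: it is direct rather than by contradiction, it dispenses with the phase-normalization step, and it isolates the obstruction as a rank bound, which simultaneously explains why $n=2$ is attainable (a rank-$\leq2$ form can be a positive multiple of the identity only when $n\leq 2$). What the paper's route buys: it consumes only six specific states, so it establishes the slightly stronger fact that already the images of $|0\rangle,|1\rangle,|2\rangle$ and their three pairwise equal superpositions cannot all sit inside $\scrP(\vec{c})$, whereas your argument exploits the full family of real states.
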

\Psref{pro:PhaseRoReal} and \ref{pro:RealToPhase} still hold if $\scrP(\vec{c})$ is replaced by $\overline{\scrP}(\vec{c})$ and $\scrP^\rmR(\caH')$ is replaced by $\scrD^\rmR(\caH')$. By \pref{pro:RealToPhase},  the set $\scrD^\rmR(\caH')$ with $\dim(\caH')\geq3$ cannot be embedded in $\overline{\scrP}(\vec{c})$ irrespective of the choice of $\vec{c}$, which disproves again the hyperdisk conjecture posed in \rcite{ModiPSS18} (cf. \rcite{DingH20}). 
This conclusion is consistent with the fact that $\scrD^\rmR(\caH')$ is a maximal maskable set as shown in \thref{thm:MaskMax}. On the other hand, except for the special case specified in \pref{pro:PhaseRoReal}, $\overline{\scrP}(\vec{c})$ cannot be embedded in  $\scrD^\rmR(\caH')$, so  any maximal maskable set  containing $\overline{\scrP}(\vec{c})$ cannot be  equivalent to $\scrD^\rmR(\caH')$. Quite likely, $\overline{\scrP}(\vec{c})$ is itself a maximal maskable set as stated in \cref{con:PhaseMax}. In any case, \psref{pro:PhaseRoReal} and \ref{pro:RealToPhase}
 imply that there exist inequivalent maximal maskable sets, which reflects the complexity of the masking problem.

\section{\label{sec:sum}Summary} 
We showed that it is impossible to hide or mask any set of quantum states that is IC, which strengthens the  no-hiding and no-masking theorems and establishes an information theoretical underpinning of these no-go results.   In sharp contrast,  quantum information can be completely hidden in correlations for real quantum mechanics, although the minimum dimension of the output Hilbert space has to increase exponentially with the dimension of the original Hilbert space. Moreover, in a precise sense real quantum mechanics is a maximal maskable subtheory within  complex quantum mechanics.

Our study offers valuable insight on the potential and limit of hiding and masking quantum information, which are of intrinsic interest to foundational studies. 
Meanwhile, the masking protocols we constructed are  useful to quantum secret sharing and quantum communication. This is the case in particular when  quantum states with real density matrices are easier to prepare and manipulate, say, in linear optics \cite{WuKRS21}. 
Furthermore, our study may shed light on a number of other active research areas, including information scrambling and the black-hole information paradox. Notably, the no-hiding theorem derived in \rcite{BrauP07} aggravates the tension between unitarity and Hawking's
semiclassical analysis of  black-hole radiation. The tension would  not be so serious in real quantum mechanics thanks to the breakdown of the no-hiding theorem.
In addition, our study indicates that Bott periodicity might play an important role in these research areas. We hope that our work can stimulate further progresses in these research topics.

\bigskip
\section*{Acknowledgments}
This work is  supported by   the National Natural Science Foundation of China (Grant No.~11875110) and  Shanghai Municipal Science and Technology Major Project (Grant No.~2019SHZDZX01).

\appendix

\bigskip

\section*{Appendix}

In this Appendix, we prove the key results presented in the main text, including \lsref{lem:MaskEnt}-\ref{lem:HRtauTrace}, \thsref{thm:HideIC}-\ref{thm:MaskEnt}, \crsref{cor:MaskSpecCh}-\ref{cor:MaskEnt}, and \psref{pro:PhaseRoReal}, \ref{pro:RealToPhase}. For the convenience of the readers, lemmas, theorems,  corollaries, and propositions will be restated before their proofs.  In the course of study, we also derive a few auxiliary results on bipartite pure states and on the masking of real quantum states, which are of some independent interest.

\section{\label{sec:MaskEntLemProof}Proof of \lref{lem:MaskEnt}}
\lemMaskEnt*

\begin{proof}[Proof of \lref{lem:MaskEnt}]
	Suppose the masker $M$ is an isometry from $\caH$ to $\caH_A\otimes\caH_B$. 
	Let $\varrho_0=M(\rho_0)$ and  $\tau_A=\tr_B(\varrho_0)$. Let $\rho$ be an arbitrary quantum state in $\scrS$ and $\varrho=M(\rho)$; then $\tr_B(\varrho)=\tau_A$ given that $M$ is a masker for $\scrS$. Let $|\Psi_\varrho\>$ and $|\Psi_{\varrho_0}\>$ be  purifications of $\varrho$ and $\varrho_0$, respectively, in $\caH_A\otimes \caH_B\otimes \caH_C$, where $\caH_C$ is a suitable Hilbert space. Then 
	\begin{align}
	&\tr_{C}(|\Psi_\varrho\>\<\Psi_\varrho|)=\varrho,\quad \tr_{C}(|\Psi_{\varrho_0}\>\<\Psi_{\varrho_0}|)=\varrho_0;\\
	&\tr_{BC}(|\Psi_\varrho\>\<\Psi_\varrho|)=\tr_{BC}(|\Psi_{\varrho_0}\>\<\Psi_{\varrho_0}|)=\tau_A\quad \forall \varrho\in M(\scrS).
	\end{align}
	The last equation implies that 
	\begin{equation}\label{eq:MaskEntProof}
	E(\varrho_0)=E(|\Psi_{\varrho_0}\>\<\Psi_{\varrho_0}|)= E(|\Psi_\varrho\>\<\Psi_\varrho|)\geq E(\varrho)
	\end{equation}
	for all $\varrho\in M(\scrS)$,
	where $E(|\Psi_{\varrho_0}\>\<\Psi_{\varrho_0}|)$ and $E(|\Psi_\varrho\>\<\Psi_\varrho|)$ refer to the bipartite entanglement of $|\Psi_{\varrho_0}\>\<\Psi_{\varrho_0}|$ and $|\Psi_\varrho\>\<\Psi_\varrho|$ with respect to the partition between $A$ and $BC$. The two equalities in \eref{eq:MaskEntProof} follow from the fact that $\varrho_0$, $|\Psi_{\varrho_0}\>\<\Psi_{\varrho_0}|$, and $|\Psi_\varrho\>\<\Psi_\varrho|$  have the same nonzero Schmidt coefficients, while the inequality follows from the monotonicity of $E$ under partial trace. As an immediate corollary of \eref{eq:MaskEntProof}, we can deduce that
	\begin{equation}
	E(\scrS,M)=\max_{\rho\in \scrS} E(M(\rho))=E(\varrho_0)=E(M(\rho_0)),
	\end{equation}
	which confirms \lref{lem:MaskEnt}.
\end{proof}

\section{\label{sec:NoHMproof}Proofs of \thsref{thm:HideIC} and \ref{thm:MaskIC}}

\thmHideIC*
\begin{proof}[Proof of \thref{thm:HideIC}]
	Thanks to \pref{pro:MaskSpan}, it suffices to consider an IC set that is composed of pure states, in which case it is more convenient to work with kets. 	Let $\scrK=\{|\psi_h\>\}_h\subset \caH$ be an arbitrary set of kets  that is IC (which means the corresponding set of states is IC),  then $|\scrK|\geq d^2$ with $d=\dim(\caH)$.
	Without loss of generality, we can assume that  $d$ of the kets in $\scrK$, say $|\psi_0\>,|\psi_1\>, \ldots, |\psi_{d-1}\>$,  are linearly independent, so they form a basis (not necessarily orthogonal)  for $\caH$, denoted by $\caB:=\{|\psi_0\>,|\psi_1\>,\ldots, |\psi_{d-1}\>\}$ henceforth. Let $\tilde{\caB}=\{|\tilde{\phi}_0\>,|\tilde{\phi}_1\>,\ldots, |\tilde{\phi}_{d-1}\>\}$ be the dual basis of $\caB$ defined by the requirement 
	$\<\tilde{\phi}_j|\psi_k\>=\delta_{jk}$, where the tilde is used to indicate that the kets $|\tilde{\phi}_j\>$ are not necessarily normalized. Here and in the rest of this proof we assume that $j,k$ can take on the values $0,1,\ldots, d-1$.

	Suppose $M$ is an isometry from $\caH$ to $\caH_A\otimes \caH_B$ and  a partial masker for the set $\scrK$. Let $|\Psi_h\>=M|\psi_h\>$	and $\tau_A=\tr_B(|\Psi_h\>\<\Psi_h|)$; note that $\tau_A$ is independent of $h$ by assumption. Suppose $\tau_A$ has the  spectral decomposition $\tau_A=\sum \lambda_u |a_u\>\<a_u|$,
	where $\lambda_u$ are eigenvalues of $\tau_A$ and $|a_u\>$ form an orthonormal eigenbasis. Then  $|\Psi_h\>$ can be expressed as follows, 
	\begin{equation}
	|\Psi_h\>=\sum_u \sqrt{\lambda_u} |a_u\>|b_u^{(h)}\>,
	\end{equation}
	where the kets $|b_u^{(h)}\>$ for a given $h$ are orthonormal. 	In addition, $\sum_j |\psi_j\>\<\tilde{\phi}_j|$ is  the identity operator on $\caH$, so 
	\begin{equation}
	|\psi_h\>=\sum_j |\psi_j\>\<\tilde{\phi}_j|\psi_h\>=\sum_j\mu_{jh} |\psi_j\>,
	\end{equation}
	where $\mu_{jh}=\<\tilde{\phi}_j|\psi_h\>$ and the two summations run over $j=0,1,\ldots, d-1$. Accordingly, $|b_u^{(h)}\>=\sum_j \mu_{jh} |b_u^{(j)}\>$, and the requirement $\<b_v^{(h)}|b_u^{(h)}\>=\delta_{uv}\<\psi_h|\psi_h\>$ implies that
	\begin{align}
	\<\psi_h|R(v,u)|\psi_h\>=0\quad  \forall h,  \label{eq:HideCondition3}
	\end{align}
	where
	\begin{align}
	R(v,u)&:=\sum_{j,k} R_{kj}(v,u) |\tilde{\phi}_k\>\<\tilde{\phi}_j|,\\
	R_{kj}(v,u)&:=\<b_v^{(k)}|b_u^{(j)}\>-\delta_{uv}\<\psi_k|\psi_j\>. 
	\end{align}

	By assumption the set  $\scrK$ is IC, so \eref{eq:HideCondition3} implies that  $R(v,u)=0$ and $R_{kj}(v,u)=0$, that is, 
	\begin{equation}
	\<b_v^{(k)}|b_u^{(j)}\>=\delta_{uv}\<\psi_k|\psi_j\>,\quad j,k=0,1,\ldots, d-1.
	\end{equation}
	Define $|\bar{b}_u^{(j)}\>:=|u\>\otimes |\psi_j\>$, where $|u\>$ are orthonormal kets in a suitable Hilbert space; then $\<b_v^{(k)}|b_u^{(j)}\>=\<\bar{b}_v^{(k)}|\bar{b}_u^{(j)}\>$. So there exists  an isometry  that maps $|b_u^{(j)}\>$ to $|u\>\otimes |\psi_j\>$. This isometry induces the following transformation, 
	\begin{align}
	|\Psi_h\>\mapsto\sum_{u}  \sqrt{\lambda_u} |a_u\>\otimes (|u\>\otimes |\psi_h\>),	
	\end{align}
	which implies that $M|\psi_h\>\<\psi_h|M^\dag$ have the form in  \eref{eq:antiscrambling}. It follows that $\scrK$ is antiscrambling, which confirms \thref{thm:HideIC}.
\end{proof}

\thmMaskIC*
\begin{proof}[Proof of \thref{thm:MaskIC}]
	Let  $\scrK=\{|\psi_h\>\}_h$ be a set of kets in $\caH$ that is IC; then $|\scrK|\geq d^2$, where $d=\dim(\caH)$. Choose $d$ kets in $\scrK$, say  $|\psi_0\>,|\psi_1\>, \ldots, |\psi_{d-1}\>$, that are linearly independent, so they form a basis  of $\caH$, denoted by $\caB:=\{|\psi_0\>,|\psi_1\>,\ldots, |\psi_{d-1}\>\}$ henceforth. As in the proof of  \thref{thm:HideIC}, let $\tilde{\caB}=\{|\tilde{\phi}_0\>,|\tilde{\phi}_1\>,\ldots, |\tilde{\phi}_{d-1}\>\}$ be the dual basis of $\caB$. 
	
	Suppose on the contrary that $\scrK$ can be masked by the  masker $M$, which maps $\caH$ to a subspace of  $\caH_A\otimes \caH_B$. Let $|\Psi_h\>=M|\psi_h\>$ for all  $h$; then the  two reduced states $\tau_A:=\tr_B (|\Psi_h\>\<\Psi_h|)$ and $\tau_B:=\tr_A (|\Psi_h\>\<\Psi_h|)$ are independent of $h$ by assumption. 
	Suppose $\tau_A$ has spectral decomposition $\tau_A=\sum_u \lambda_u |a_u\>\<a_u|$, where the eigenkets $|a_u\>$ are orthonormal.  Then $|\Psi_h\>$ can be expressed as 
	\begin{equation}
	|\Psi_h\>=\sum_u \sqrt{\lambda_u} |a_u\>|b_u^{(h)}\>,
	\end{equation}
	where the kets $|b_u^{(h)}\>$ for a given $h$ are orthonormal. Accordingly we have
	\begin{equation}
	\tr_A (|\Psi_h\>\<\Psi_h|)=\sum_u \lambda_u |b_u^{(h)}\>\<b_u^{(h)}|=\tau_B\quad \forall h.\label{eq:MaskCondition0}
	\end{equation}
	In addition, each ket $|\psi_h\>$ in $\scrK$ can be expressed as 
	\begin{equation}
	|\psi_h\>=\sum_j |\psi_j\>\<\tilde{\phi}_j|\psi_h\>=\sum_j\mu_{jh} |\psi_j\>,
	\end{equation}
	where  $\mu_{jh}=\<\tilde{\phi}_j|\psi_h\>$, so that
	\begin{align}
	&|\Psi_h\>=\sum_j \mu_{jh} |\Psi_j\>=\sum_{u,j} \mu_{jh} \sqrt{\lambda_u} |a_u\>|b_u^{(j)}\>. 
	\end{align}
	Now the requirement $\tr_A(|\Psi_h\>\<\Psi_h|)=\<\psi_h|\psi_h\>\tau_B$	implies that
	\begin{gather}
	\tr_\caH[(|\psi_h\>\<\psi_h|\otimes \openone_B)\tilde{R}]=0 \quad \forall h , \label{eq:MaskCondition3} 
	\end{gather}
	where $\openone_B$ is the identity operator on $\caH_B$ and 
	\begin{align}
	\tilde{R}&:=\sum_{j,k}|\tilde{\phi}_k\>\<\tilde{\phi}_j|\otimes  \tilde{R}^{(jk)},\\
	\tilde{R}^{(jk)}&:=\sum_u \lambda_u |b_u^{(j)}\>\<b_u^{(k)}|-\<\psi_k|\psi_j\>\tau_B.  \label{eq:Rjk}
	\end{align} 
	Note that $\tilde{R}$ is a hermitian operator acting on  $\caH\otimes \caH_B$.

	By assumption the set $\scrK$ is IC, so \eref{eq:MaskCondition3} implies that  $\tilde{R}=0$, which in turn implies that  $\tilde{R}^{(jk)}=0$ for $j,k=0,1,\ldots, d-1$. In conjunction with \esref{eq:MaskCondition0}{eq:Rjk}, we can deduce the following equality,
	\begin{equation}
	0=\<b_u^{(j)}|\tilde{R}^{(jk)}|b_u^{(k)}\>=\lambda_u -\lambda_u \<\psi_k|\psi_j\>\<b_u^{(j)}|b_u^{(k)}\> \;\; \forall u. \label{eq:Contradiction}
	\end{equation}
	However, this equation cannot hold whenever $\lambda_u>0$ given that 
	$|\<\psi_k|\psi_j\>|<1$ and $|\<b_u^{(j)}|b_u^{(k)}\>|\leq 1$. This contradiction completes the proof of 
	\thref{thm:MaskIC}.
\end{proof}
Here it is instructive to clarify why the above proof does not apply to real quantum mechanics. It is well known that local tomography fails in real quantum mechanics, which means it is in general impossible to determine the state of a composite system by local measurements only \cite{Arak80}. In the case of two qubits for example, each real density matrix can be expressed as 
\begin{align}
\rho=&\frac{1}{4}\bigl[\openone +(a_x\sigma_x +a_z\sigma_z)\otimes \openone_B +\openone_A\otimes (b_x\sigma_x+b_z \sigma_z)\nonumber\\
&+T_{xx} \sigma_x\otimes \sigma_x+T_{xz} \sigma_x\otimes \sigma_z+T_{zx} \sigma_z\otimes \sigma_x\nonumber\\
&+T_{zz} \sigma_z\otimes \sigma_z+T_{yy} \sigma_y\otimes \sigma_y
\bigr],
\end{align}
where $\sigma_x,\sigma_y,\sigma_z$ are the three Pauli matrices,
\begin{equation}
\sigma_x=\begin{pmatrix}
0 &1 \\
1& 0
\end{pmatrix}, \;\;
\sigma_y=\begin{pmatrix}
0 &-\rmi \\
\rmi& 0
\end{pmatrix}, \;\;
\sigma_z=\begin{pmatrix}
1 &0 \\
0&- 1
\end{pmatrix}, 
\end{equation}
and $a_x, a_z, b_x, b_z, T_{xx}, T_{xz}, T_{zx},T_{zz}, T_{yy}
$ are real coefficients. With local measurements only, it is impossible to determine the state $\rho$ completely because such measurements provide no information about  $T_{yy}$. 

Now, let us turn back to the proof of 	\thref{thm:MaskIC}. In real quantum mechanics, due to the failure of local tomography,  \eref{eq:MaskCondition3} alone does not  imply the equality $\tilde{R}=0$, so the reasoning in the proof of \thref{thm:MaskIC} breaks down. Nevertheless, this fact offers little clue on how to construct a masker for the set of real density matrices, so itself does not disprove the no-masking and no-hiding theorems on real quantum mechanics. The maskers we construct in the main text are crucial to settling this issue completely.

\section{\label{sec:HRpropProof}Proofs of \lsref{lem:HRequiDef} and \ref{lem:HRtauTrace}}

\lemHRequiDef*
\begin{proof}[Proof of \lref{lem:HRequiDef}]
	Let
\begin{equation}
V_j:=U_0^\dag U_j,\quad  V(\vec{c}):=\sum_{j=0}^s c_j V_j;
\end{equation}	
 then $V_0=\openone$ and $V(\vec{c})=U_0^\dag U(\vec{c})$. If statement~1 holds, which means $\{V_j\}_{j=1}^{s}$ is a set of $s$ HR matrices, then 
\begin{equation}
V_j^\dag=-V_j,\quad  V_jV_k+V_kV_j=-2\delta_{jk}\openone\;\;  \forall j,k=1,2,\ldots, s.
\end{equation}	
Therefore,
	\begin{align}
	&V(\vec{c})^\dag V(\vec{c})=\sum_{j,k} c_j c_k V_j^\dag V_k \nonumber \\
	&=|\vec{c}|^2 \openone +\sum_{j<k} c_jc_k(V_j^\dag V_k+V_k^\dag V_j)=|\vec{c}|^2 \openone,
	\end{align}
	which implies that $U(\vec{c})^\dag U(\vec{c})=|\vec{c}|^2 \openone$ and confirms the implication $1\imply2$. The implication $2\imply3$ is obvious.

	If statement 3 holds, then the matrix $V(\vec{c})$ is unitary for each normalized real vector $\vec{c}$. Notably, $(\openone+V_j)/\sqrt{2}$ is unitary for $j=1,2,\ldots,s$, which means $V_j^\dag=-V_j$ and $V_j^2=-\openone$. In addition, the requirement that  $(V_j+V_k)/\sqrt{2}$ is unitary for $1\leq j<k\leq s$   implies that 
\begin{equation}
0=V_j^\dag V_k+V_k^\dag V_j=-V_j V_k-V_k V_j. 
\end{equation}	
Therefore, $\{V_j\}_{j=1}^{s}$ is a set of $s$ HR matrices, which confirms the implication $3\imply1$.
\end{proof}

\lemHRtauTrace*

\begin{proof}[Proof of \lref{lem:HRtauTrace}]
Since  $\{U_j\}_{j=1}^s$ is a set of HR matrices, 
we have	
\begin{equation}
U_jU_k+U_kU_j=-2\delta_{jk}\openone\quad \forall 1\leq j<k\leq s. 
\end{equation}	
By the assumptions $s\geq 2$ and $0\leq j<k\leq s$, we can always find a matrix $V$ in  $\scrU:=\{U_j\}_{j=1}^{s}$ that anticommutes with $U_jU_k$. Therefore,
	\begin{align}
	\tr(U_j  U_k \tau^\alpha)&=\tr(V U_j  U_k \tau^\alpha V^\dag)=\tr(V U_j  U_k V^\dag \tau^\alpha )\nonumber\\
	&=-\tr(U_j  U_k \tau^\alpha),
	\end{align}
	which implies \eref{eq:HRtauTrace1}.

	Next, we can prove \eref{eq:HRtauTrace2}   in the case $s\neq 3$. 
	When  $j=j'$ and $k=k'$, \eref{eq:HRtauTrace2} follows from the fact that $(U_j  U_k)^2=-\openone$. When $j\neq j'$ or $k\neq k'$,  we can always find a matrix $V\in \scrU$ that anticommutes with $U_jU_k U_{j'}  U_{k'}$, which implies that $\tr(U_j  U_k U_{j'}  U_{k'} \tau^\alpha)=0$ and confirms \eref{eq:HRtauTrace2}. 
	
	It remains to prove \eref{eq:HRtauTrace3} in the case $s=3$. When $j=j'$ and $k=k'$,	
	\eref{eq:HRtauTrace3} follows from the fact that $(U_j  U_k)^2 =-\openone$ as \eref{eq:HRtauTrace2}. When $j=0$ and $(k, j', k')$ is a permutation of $(1, 2, 3)$, \eref{eq:HRtauTrace3} follows from  the fact that $U_k, U_{j'}, U_{k'}$ anticommute with each other.  When $j'=0$ and $(j,k, k')$ is a permutation of $(1, 2, 3)$, \eref{eq:HRtauTrace3} follows from a similar reasoning. In all other cases, we can always find a matrix $V\in \scrU$ that anticommutes with $U_jU_k U_{j'}  U_{k'}$, which implies that $\tr(U_j  U_k U_{j'}  U_{k'} \tau^\alpha)=0$ and confirms \eref{eq:HRtauTrace3}.
\end{proof}

\section{\label{sec:MaskGenProof}Proof of \thref{thm:MaskGenHR}}

\subsection{Main proof}

\thmMaskGenHR*
Before proving \thref{thm:MaskGenHR}, it is worth pointing out that it	does not cause any loss of generality to assume that 
$\tau_A$ and $\tau_B$ have full rank.  Note that  all states  $|\Psi_j\>$ for $j=0, 1,\ldots, d-1$ have the same reduced density matrix for each party since $M$ is a masker for $\scrD^\rmR(\caH)$. In addition, $\tr_B[M(\rho)]$ for each $\rho\in \scrD(\caH)$ is supported in the support of $\tau_A$. So we can always replace $\caH_A$ with the support of $\tau_A$  if necessary. 	Similarly, we can replace $\caH_B$ with the support of $\tau_B$  if necessary. 
\begin{proof}[Proof of \thref{thm:MaskGenHR}]
	By assumption each $|\Psi_j\>$ for $j=1,2,\ldots, d-1$ has the same reduced states  as $|\Psi_0\>$, so  there exists a unique unitary matrix $U_j$ acting on $\caH_A$ and a unique unitary matrix  $V_j$ acting on $\caH_B$ that satisfy \eref{eq:PsijMaskGen} in \thref{thm:MaskGenHR} according to \lref{lem:SchmidtDecom} below. 
	In addition, $U_j$ and $V_j$ commute with $\tau_A$ and $\tau_B$, respectively. 
	Furthermore, we have  
\begin{equation}
\tr(V_j\tau_B)=\tr(U_j\tau_A)=\<\Psi_0|\Psi_j\>=\<0|j\>=\delta_{j0}=0
\end{equation}	
 since $M$ is an isometry.

	Let $\vec{c}:=(c_0,c_1,\ldots, c_{d-1})$ be an arbitrary normalized real vector; let   $|\psi(\vec{c})\>=\sum_j c_j|j\>$ and $|\Psi(\vec{c})\>=M|\psi(\vec{c})\>$. Then 
	\begin{align}
	|\Psi(\vec{c})\>&=[U(\vec{c})\otimes \openone_B]|\Psi_0\>, \quad U(\vec{c})=\sum_{j=0}^{d-1} c_j U_j,
	\end{align}
	where $U_0=\openone_A$. 
	By assumption $|\Psi(\vec{c})\>$ has the same reduced states  as $|\Psi_0\>$, so  $U(\vec{c})$ is a unitary matrix that  commutes with $\tau_A$ according to \lref{lem:SchmidtDecom} below. Furthermore,   $\{U_j\}_{j=1}^{d-1}$ is a set of $d-1$ HR matrices by \lref{lem:HRequiDef}. Following a similar reasoning,  $\{V_j\}_{j=1}^{d-1}$ is a set of $d-1$ HR matrices.
\end{proof}

\subsection{An auxiliary lemma}

\begin{lemma}\label{lem:SchmidtDecom}
	Suppose
	$|\Psi_1\>, |\Psi_2\>\in \caH_A\otimes \caH_B$ are bipartite pure states 
	that have the same reduced states $\tau_A$ and $\tau_B$ for the two parties, respectively, that is,
	\begin{align}
	\tr_B(|\Psi_1\>\<\Psi_1|)&=\tr_B(|\Psi_2\>\<\Psi_2|)=\tau_A,\\
	\tr_A(|\Psi_1\>\<\Psi_1|)&=\tr_A(|\Psi_2\>\<\Psi_2|)=\tau_B,
	\end{align}	
	where  $\tau_A$ has full rank on $\caH_A$. Then there exists a unique linear operator $U$ acting on $\caH_A$ that satisfies
	\begin{equation}\label{eq:LUtran}
	|\Psi_2\>=(U\otimes \openone_B)|\Psi_1\>;
	\end{equation}
	in addition,  $U$ is  unitary and commutes with $\tau_A$.
\end{lemma}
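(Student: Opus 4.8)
The plan is to translate the statement into matrix language via the vectorization isomorphism and then reduce everything to elementary linear algebra. First I would fix orthonormal bases $\{|i\>\}$ of $\caH_A$ and $\{|j\>\}$ of $\caH_B$, and identify each bipartite ket $|\Psi_k\>=\sum_{ij}(\Psi_k)_{ij}|i\>|j\>$ with the $d_A\times d_B$ matrix $\Psi_k$. Under this identification $(U\otimes\openone_B)|\Psi_k\>$ corresponds to $U\Psi_k$, while $\tr_B(|\Psi_k\>\<\Psi_k|)=\Psi_k\Psi_k^\dag$ and $\tr_A(|\Psi_k\>\<\Psi_k|)=(\Psi_k^\dag\Psi_k)^\rmT$. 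The hypotheses then read $\Psi_1\Psi_1^\dag=\Psi_2\Psi_2^\dag=\tau_A$ and $\Psi_1^\dag\Psi_1=\Psi_2^\dag\Psi_2=\tau_B^\rmT$, and \eref{eq:LUtran} becomes the matrix equation $U\Psi_1=\Psi_2$.

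I would next dispose of uniqueness, which is exactly where the full-rank hypothesis enters. Since $\tau_A=\Psi_1\Psi_1^\dag$ is invertible, $\Psi_1$ has full row rank $d_A$; hence if $W$ is any operator with $W\Psi_1=0$, then $W\tau_A=W\Psi_1\Psi_1^\dag=0$ forces $W=0$. Therefore at most one linear operator $U$ can satisfy $U\Psi_1=\Psi_2$.

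For existence I would invoke the freedom of purifications. Diagonalizing the common $B$-marginal as $\tau_B=\sum_i p_i|i\>\<i|$ with $p_i>0$, each state can be written in the Schmidt-type form $|\Psi_k\>=\sum_i\sqrt{p_i}\,|\phi_i^{(k)}\>_A|i\>_B$, where orthonormality of $\{|\phi_i^{(k)}\>\}_i$ in $\caH_A$ follows from $\tr_A(|\Psi_k\>\<\Psi_k|)=\tau_B$. Because $\tau_A$ has full rank, the number of nonzero $p_i$ equals the Schmidt rank, which equals $d_A$, so each set $\{|\phi_i^{(k)}\>\}_i$ is actually an orthonormal \emph{basis} of $\caH_A$; defining $U$ by $U|\phi_i^{(1)}\>=|\phi_i^{(2)}\>$ yields a unitary with $(U\otimes\openone_B)|\Psi_1\>=|\Psi_2\>$. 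By the uniqueness above this unitary is the unique linear operator satisfying \eref{eq:LUtran}. The commutation is then immediate: from $\Psi_2=U\Psi_1$ together with $\Psi_2\Psi_2^\dag=\tau_A$ one gets $\tau_A=U\Psi_1\Psi_1^\dag U^\dag=U\tau_A U^\dag$, and unitarity turns $U\tau_A U^\dag=\tau_A$ into $U\tau_A=\tau_A U$.

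I expect the only delicate point to be the existence step, specifically making the construction of $U$ unambiguous when $\tau_A$ (equivalently $\tau_B$) has degenerate eigenvalues, where no preferred Schmidt basis is singled out. The purification argument handles this cleanly by diagonalizing on the $B$ side—so that the Schmidt coefficients but not a preferred $A$-basis are fixed—and reading off $U$ as the change of orthonormal basis on $\caH_A$. The full-rank hypothesis is precisely what guarantees that these orthonormal sets are complete bases, so that $U$ is genuinely unitary rather than a mere partial isometry.
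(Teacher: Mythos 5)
Your proof is correct and follows essentially the same route as the paper's: both identify bipartite kets with matrices so that the hypotheses become $\Psi_1\Psi_1^\dag=\Psi_2\Psi_2^\dag=\tau_A$, both obtain existence of a unitary from the local-unitary freedom of purifications of $\tau_B$, both use the invertibility of $\tau_A$ for uniqueness, and both get commutation from $U\tau_A U^\dag=\tau_A$. The only cosmetic differences are that you prove the purification-equivalence step explicitly via the Schmidt decomposition (the paper just invokes it) and phrase uniqueness as left-cancellation against the full-row-rank $\Psi_1$, where the paper instead writes the explicit formula $U=M_2M_1^\dag\tau_A^{-1}$.
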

The last conclusion in \lref{lem:SchmidtDecom}
means $U$ is   block diagonal with respect to the eigenspaces of $\tau_A$.
\begin{proof}
	By assumption both  $|\Psi_1\>$ and $|\Psi_2\>$ are purifications of $\tau_B$, so they can be turned into each other by local unitary transformations on $\caH_A$. In other words, we can find a unitary operator $U$ acting on $\caH_A$ that satisfies \eref{eq:LUtran}.	Let
	\begin{align}
	|\Psi_1\>=\sum_{jk}\Psi_{jk}^{(1)}|jk\>, \quad 
	|\Psi_2\>=\sum_{jk}\Psi_{jk}^{(2)}|jk\>
	\end{align}
	be the decomposition of $|\Psi_1\>$ and $|\Psi_2\>$	in the computational basis and
	define 
	\begin{align}
	M_1=\sum_{jk}\Psi_{jk}^{(1)}|j\>\<k|, \quad 
	M_2=\sum_{jk}\Psi_{jk}^{(2)}|j\>\<k|.
	\end{align}
	Then we have
	\begin{align}
	M_1 M_1^\dag=M_2M_2^\dag=\tau_A, \quad 
	M_2=UM_1, \label{eq:LUtranProof}
	\end{align}
	which imply that
	\begin{equation}
	M_1 M_1^\dag\tau_A^{-1}=\openone_A,\quad 
	U=M_2M_1^\dag\tau_A^{-1},
	\end{equation}
	given that $\tau_A$ has full rank.
	So  there exists only one linear operator $U$ that can satisfy \eref{eq:LUtran}. 
	
	In addition, \eref{eq:LUtranProof} implies that
	\begin{equation}
	U\tau_A U^\dag=\tau_A,
	\end{equation}
	so  the unitary operator $U$ commutes with $\tau_A$.
\end{proof}

\section{\label{sec:MaskGenCorProof}Proofs of  \crsref{cor:MaskSpecCh}-\ref{cor:PurityCon}}

\subsection{Main proofs}
\corMaskSpecCh*
\begin{proof}[Proof of \crref{cor:MaskSpecCh}]
	Suppose $d\geq 3$ and  each $m_o$  is divisible by $\kappa(d)$. Let  $m=\sum_o m_o$ and let $\caH_A$ and $\caH_B$ be two  $m$-dimensional Hilbert spaces, which are isomorphic. Let $|\Psi_0\>\in \caH_A\otimes\caH_B$ be any ket whose reduced state $\tau_A$ for party $A$ has spectrum $\{\lambda_o, m_o\}_o$.  Let $\caH_o$ be the eigenspace of $\tau_A$ associated with $\lambda_o$, then $\caH_o$ has dimension $m_o$.  By \lref{lem:HRnumMax},  we can construct a set $\{V_{o,j}\}_{j=1}^{d-1}$
	of $d-1$ HR matrices
for each subspace $\caH_o$. Let $U_j=\bigoplus_o V_{o,j}$ for $j=1,2,\ldots, d-1$; then  $\{U_j\}_{j=1}^{d-1}$ is  a set of HR matrices that commute with $\tau_A$.

	Define 
	\begin{equation}
	|\Psi_j\>=(U_j \otimes \openone_B )|\Psi_0\>,\quad j=1,2, \ldots, d-1
	\end{equation}
	as in the main text. Then 
	\begin{align}
	\<\Psi_0|\Psi_j\>&=\tr(U_j\tau_A)=0\quad \forall j=1,2,\ldots,d-1,\label{eq:PsijPsi0Inner}\\
	\!\<\Psi_k|\Psi_j\>&=\tr(U_k^\dag U_j\tau_A)=\delta_{jk} \quad \forall j,k=1,2,\ldots, d-1,  \label{eq:PsijPsikInner}
	\end{align}
	given  that $U_k$ anticommutes with $U_j$ and $U_k^\dag U_j$ for $j\neq k$ (cf.~\lref{lem:HRtauTrace}), so the $d$ kets $|\Psi_j\>$ for $j=0,1,\ldots, d-1$ are orthonormal. In addition, $\sum_j c_j U_j$ is  unitary  for any normalized real vector $\vec{c}=(c_j)_j$. So  the isometry $M$ determined by the map	$|j\>\mapsto |\Psi_j\>$ for $j=0, 1,\ldots, d-1$ is a masker for $\scrP^\rmR(\caH)$ and $\scrD^\rmR(\caH)$ with masking  spectrum $\{\lambda_o, m_o\}_o$. 
	
	Conversely, if there exists a masker with masking spectrum $\{\lambda_o,m_o\}_o$, then \thref{thm:MaskGenHR} implies that there exists a set of $d-1$ HR matrices in dimension $m_o$ for each $o$. Therefore, $m_o$ is divisible by $\kappa(d)$ according to \lref{lem:HRnumMax}.

	When $d\geq 2$ and $m_o$ is divisible by $\kappa^\rmR(d)$, a real masker can be constructed as long as $|\Psi_0\>$ is a real ket and $\{V_{o,j}\}_{j=1}^{d-1}$  for each $o$ is a set of real orthogonal HR matrices. In this case, $\tau_A$ is symmetric, while each $U_j$ is antisymmetric, so \esref{eq:PsijPsi0Inner}{eq:PsijPsikInner} hold even when $d=2$.

	 Conversely, the existence of a real masker with masking spectrum $\{\lambda_o,m_o\}_o$ implies that each $m_o$ is divisible by $\kappa^\rmR(d)$ according to a similar reasoning employed above.
	
	It remains to consider complex maskers for a qubit (with $d=2$). Suppose there exists a vector $v=(v_l)_l$ composed of 1 and $-1$  that satisfies the equality $\sum_l v_l\mu_l=0$. Let  $U=\sum_l \rmi v_l |l\>\<l|$ and
	\begin{equation}\label{eq:Maskerd2}
	\begin{aligned}
	|\Psi_0\>&=\sum_l\sqrt{\mu_l}|ll\>,\\ |\Psi_1\>&=(U\otimes \openone_B)|\Psi_0\>=\rmi\sum_l  v_l \sqrt{\mu_l}|ll\>.
	\end{aligned}
	\end{equation}
	Then $|\Psi_0\>, |\Psi_1\>$ are orthonormal,  and the isometry $M$ defined by the map	$|j\>\mapsto |\Psi_j\>$ for $j=0, 1$ is a masker for $\scrD^\rmR(\caH)$. 
	
	Conversely, suppose $M$ is a masker for the set $\scrD^\rmR(\caH)$ with masking spectrum $\{\mu_l\}_l$. Let $|\Psi_0\>=M|0\>$ and $\tau_A=\tr_B(|\Psi_0\>\<\Psi_0|)$. Without loss of generality, we can assume that $\tau_A$ has full rank. According to \thref{thm:MaskGenHR}, there exists an HR matrix $U$ that commutes with $\tau_A$ and satisfies $\tr(U\tau_A)=0$. By a suitable choice of orthonormal basis if necessary, we can assume that both $U$ and $\tau_A$ are diagonal. The diagonal entries of $\tau_A$ happen to be $\mu_l$, while the diagonal entries of $U$ have the form $ v_l \rmi$ with $v_l=\pm1$, 
	given that  each eigenvalue of $U$ is either $\rmi$ or $-\rmi$. Therefore,  $\rmi\sum_l v_l\mu_l=\tr(U\tau_A)=0$, which completes the proof of \crref{cor:MaskSpecCh}.
\end{proof}

\corMaskDim*
\begin{proof}[Proof of \crref{cor:MaskDim}]
	\Crref{cor:MaskDim} is an immediate consequence of \crref{cor:MaskSpecCh}.
\end{proof}

\begin{widetext}

\corPurityCon*

	\begin{proof}[Proof of \crref{cor:PurityCon}]
		Without loss of generality, we can assume that $\tau_A$ and $\tau_B$ have full rank. Let $|\Psi_j\>=M|j\>$, then	$|\Psi_j\>$ have the form in \eref{eq:PsijMaskGen} according to \thref{thm:MaskGenHR}, so $\varrho_A$ and $\varrho_B$ have the form in \esref{eq:varrhoA}{eq:varrhoB} according to \lref{lem:varrhoAB} below.
	In addition, $\varrho_A$ and $\varrho_B$ commute with $\tau_A$ and $\tau_B$, respectively. 
Let $P$ be the  projector onto the eigenspace of $\tau_A$ associated with any nonzero eigenvalue $\lambda$; then $P$ commutes with $\varrho_A$. Meanwhile, $\tr(P U_j U_k)=0$ for $0\leq j<k<d$ since we can always find a matrix in $\{U_j\}_{j=1}^{d-1}$ that anticommutes with $U_j U_k$ (note that this conclusion does not hold when $d=2$, and that is why we assume $d\geq 3$ in \crref{cor:PurityCon}). Therefore,
		\begin{align}
	\!\!	\tr(P\varrho_A)&=\tr(P\tau_A)+\sum_{0\leq j <k<d} (\rho_{kj}-\rho_{jk})\tr(PU_j  U_k \tau_A)=\tr(P\tau_A)+\sum_{0\leq j <k<d} (\rho_{kj}-\rho_{jk})\lambda\tr(PU_j  U_k)=\tr(P\tau_A),
		\end{align}
		which implies that $P\tau_A$ is majorized by $P\varrho_A$ given that $P\tau_A=\lambda P$ is proportional to a projector. As a corollary,  $\tau_A$ is majorized by $\varrho_A$. By the same token $\tau_B$ is majorized by $\varrho_B$.

		When $d\neq 4$, by \lref{lem:HRtauTrace} in the main text and \lref{lem:varrhoAB} below,  the purity of $\varrho_A$ can be computed as follows,
		\begin{align}
		\tr(\varrho_A^2)&=\tr(\tau_A^2)+\sum_{\substack{0\leq j<k<d\\0\leq j'<k'<d}}
		(\rho_{kj}-\rho_{jk})(\rho_{k'j'}-\rho_{j'k'})\tr(U_j  U_k U_{j'}  U_{k'} \tau_A^2)\nonumber\\
		&=\wp\Biggl[1  - \sum_{0\leq j <k<d}(\rho_{kj}-\rho_{jk})^2\Biggr]=\wp\bigr[1+\tr(\rho^2)-\tr\bigl(\rho\rho^\rmT\bigr)\bigr]=\frac{1}{2}\wp\bigr(2+\|\rho-\rho^\rmT\|_{\hs}^2\bigr), \label{eq:rhoApurity}
		\end{align}
		where $\wp=\tr(\tau_A^2)=\tr(\tau_B^2)$ is the masking purity. Here the last equality follows from the fact that 
		\begin{equation}
		\|\rho-\rho^\rmT\|_{\hs}^2=2\bigl[\tr(\rho^2)-\tr\bigl(\rho\rho^\rmT\bigr)\bigr]. 
		\end{equation}
		By the same token we can deduce
		\begin{equation}
		\tr(\varrho_B^2)=\wp\bigr[1+\tr(\rho^2)-\tr\bigl(\rho\rho^\rmT\bigr)\bigr]=\frac{1}{2}\wp\bigr(2+\|\rho-\rho^\rmT\|_{\hs}^2\bigr),
		\end{equation}
		which, together with \eref{eq:rhoApurity}, confirms \esref{eq:purityABsum}{eq:purityAB}.
		
		\Eref{eq:purityABsumLBUB} is a corollary of  \eref{eq:purityABsum} and the following equation, 
		\begin{align}
		\frac{4}{d}\scrI_\rmR(\rho)^2\leq 	\|\rho-\rho^\rmT\|_{\hs}^2\leq 2\scrI_\rmR(\rho)^2. \label{eq:rhorhoTHS}
		\end{align}
		\Eref{eq:rhorhoTHS} in turn follows from 	the equality $\scrI_\rmR(\rho)=\|\rho-\rho^\rmT\|_1/2$ in \eref{eq:RoIvalue} \cite{WuKRS21} and the inequalities
		\begin{equation}
		\frac{1}{d}\|\rho-\rho^\rmT\|_1^2\leq \|\rho-\rho^\rmT\|_{\hs}^2\leq \frac{1}{2}\|\rho-\rho^\rmT\|_1^2.
		\end{equation}	
		Here the first inequality is well known given that $\rho$ is a density matrix on a Hilbert space of dimension $d$; the second inequality follows from the fact that $\rho-\rho^\rmT$ is hermitian and antisymmetric, which means its nonzero eigenvalues form pairs of opposite signs. Similarly, 	\eref{eq:purityABLBUB} is a corollary of  \esref{eq:purityAB}{eq:rhorhoTHS}.

		When $d=4$, by \lsref{lem:HRtauTrace} and \ref{lem:varrhoAB}, the purity of $\varrho_A$ reads
		\begin{align}
		&\tr(\varrho_A^2)=\tr(\tau_A^2)+\sum_{\substack{0\leq j<k<d\\0\leq j'<k'<d}}
		(\rho_{kj}-\rho_{jk})(\rho_{k'j'}-\rho_{j'k'})\tr(U_j  U_k U_{j'}  U_{k'} \tau_A^2)\nonumber\\
		&=\wp\Biggl[1  - \sum_{0\leq j <k\leq 3}(\rho_{kj}-\rho_{jk})^2\Biggr]+\sum_{\substack{0\leq j<k\leq 3\\0\leq j'<k'\leq 3}}
		(\delta_{j=0}\epsilon_{kj'k'}
		+\delta_{j'=0}\epsilon_{j k k'})(\rho_{kj}-\rho_{jk})(\rho_{k'j'}-\rho_{j'k'})\tr(U_1U_2U_3\tau_A^2)\nonumber\\
		&=\wp\bigr[1+\tr(\rho^2)-\tr\bigl(\rho\rho^\rmT\bigr)\bigr]+\sum_{\substack{0\leq j<k\leq 3\\0\leq j'<k'\leq 3}}
		(\delta_{j=0}\epsilon_{kj'k'}
		+\delta_{j'=0}\epsilon_{j k k'})(\rho_{kj}-\rho_{jk})(\rho_{k'j'}-\rho_{j'k'})\tr(U_1U_2U_3\tau_A^2). \label{eq:varrhoApurityd4}
		\end{align}
		Similarly,  the purity of $\varrho_B$ reads
		\begin{align}
		&\tr(\varrho_B^2)=\wp\bigr[1+\tr(\rho^2)-\tr\bigl(\rho\rho^\rmT\bigr)\bigr]+\sum_{\substack{0\leq j<k\leq 3\\0\leq j'<k'\leq 3}}
		(\delta_{j=0}\epsilon_{kj'k'}
		+\delta_{j'=0}\epsilon_{j k k'})(\rho_{kj}-\rho_{jk})(\rho_{k'j'}-\rho_{j'k'})\tr(V_1V_2V_3\tau_B^2)\nonumber\\
		&=\wp\bigr[1+\tr(\rho^2)-\tr\bigl(\rho\rho^\rmT\bigr)\bigr]-\sum_{\substack{0\leq j<k\leq 3\\0\leq j'<k'\leq 3}}
		(\delta_{j=0}\epsilon_{kj'k'}
		+\delta_{j'=0}\epsilon_{j k k'})(\rho_{kj}-\rho_{jk})(\rho_{k'j'}-\rho_{j'k'})\tr(U_1U_2U_3\tau_A^2), \label{eq:varrhoBpurityd4}
		\end{align}
		which, together with \eref{eq:varrhoApurityd4}, implies  \eref{eq:purityABsum}.  The second equality in \eref{eq:varrhoBpurityd4} is a consequence of the equality  $\tr(V_1V_2V_3\tau_B^2)=-\tr(U_1U_2U_3\tau_A^2)$, which can be derived as follows. From  \eref{eq:PsijMaskGen} we can deduce that 
		\begin{equation}
		(\openone_A\otimes V_1 V_2 V_3)|\Psi_0\> =(U_3 U_2U_1\otimes \openone_B)|\Psi_0\>,
		\end{equation}
		which implies that
		\begin{align}
		\tr(V_1V_2V_3\tau_B^2)=\tr(U_3U_2U_1\tau_A^2)=-\tr(U_1U_2U_3\tau_A^2).
		\end{align}
		Here the first equality follows from \lref{lem:UVAB} below, and the second one follows from the fact that $U_1, U_2, U_3$ anticommute with each other. 
		
		\Eref{eq:purityABsumLBUB} is still a corollary of  \esref{eq:purityABsum}{eq:rhorhoTHS}. This observation completes the proof of \crref{cor:PurityCon}.
	\end{proof}
\end{widetext}

\bigskip
\subsection{Two auxiliary lemmas}
\begin{lemma}\label{lem:varrhoAB}
	Suppose $M$ is the masker for $\scrD^\rmR(\caH)$ in \thref{thm:MaskGenHR}. Let $\rho\in \scrD(\caH)$,  $\varrho=M(\rho)$, $\varrho_A=\tr_B(\varrho)$, and $\varrho_B=\tr_A(\varrho)$. Then 
	\begin{align}
	\varrho_A&=\tau_A+\sum_{0\leq j <k<d} (\rho_{kj}-\rho_{jk})U_j  U_k \tau_A, \label{eq:varrhoA}\\
	\varrho_B&=\tau_B+\sum_{0\leq j <k<d} (\rho_{kj}-\rho_{jk})V_j  V_k \tau_B, \label{eq:varrhoB}
	\end{align}
	where $U_0=\openone_A$, $V_0=\openone_B$, and  $U_j, V_j$ for $j=1,2,\ldots, d-1$ are determined by \eref{eq:PsijMaskGen}.
\end{lemma}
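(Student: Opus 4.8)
The plan is to compute the two reduced states directly from the explicit action of the masker and then simplify using the Hurwitz--Radon algebra supplied by \thref{thm:MaskGenHR}. First I would expand the output state: writing $\rho=\sum_{j,k}\rho_{jk}|j\>\<k|$ and $|\Psi_j\>=M|j\>$, the isometry gives
\begin{align}
\varrho=M\rho M^\dag=\sum_{j,k}\rho_{jk}|\Psi_j\>\<\Psi_k|.
\end{align}
Using the representation $|\Psi_j\>=(U_j\otimes\openone_B)|\Psi_0\>$ from \eref{eq:PsijMaskGen} together with $\tr_B(|\Psi_0\>\<\Psi_0|)=\tau_A$, the partial trace factors cleanly as $\tr_B(|\Psi_j\>\<\Psi_k|)=U_j\tau_A U_k^\dag$, so that
\begin{align}
\varrho_A=\sum_{j,k}\rho_{jk}\,U_j\tau_A U_k^\dag.
\end{align}

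Next I would invoke the two facts established in \thref{thm:MaskGenHR}: each $U_j$ commutes with the Hermitian matrix $\tau_A$, and $\{U_j\}_{j=1}^{d-1}$ is a set of HR matrices with $U_0=\openone_A$. Since $\tau_A$ commutes with $U_k$ it also commutes with $U_k^\dag$, which lets me move $\tau_A$ to the right and obtain $\varrho_A=\sum_{j,k}\rho_{jk}\,U_jU_k^\dag\tau_A$. I then split the double sum into the diagonal part $j=k$ and the off-diagonal part, pairing the terms $(j,k)$ and $(k,j)$ for $j<k$. The diagonal part collapses because $U_jU_j^\dag=\openone_A$ and $\tr(\rho)=\sum_j\rho_{jj}=1$, leaving exactly $\tau_A$.

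For the off-diagonal contribution I would treat two cases. When $1\le j<k$ the HR relations give $U_j^\dag=-U_j$, $U_k^\dag=-U_k$, and $U_kU_j=-U_jU_k$, so the paired terms combine to $(\rho_{kj}-\rho_{jk})U_jU_k\tau_A$; when $j=0$ one uses $U_0=\openone_A$ directly (with $U_0^\dag=\openone_A$) and finds the same expression, now with $U_0U_k=U_k$. Collecting all contributions reproduces \eref{eq:varrhoA}. The identity \eref{eq:varrhoB} for $\varrho_B$ then follows by the identical argument applied to the dual representation $|\Psi_j\>=(\openone_A\otimes V_j)|\Psi_0\>$, using $\tr_A(|\Psi_j\>\<\Psi_k|)=V_j\tau_B V_k^\dag$ together with the commutation $[V_j,\tau_B]=0$ and the HR relations for the $V_j$.

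The computation is essentially bookkeeping, so the only point requiring genuine care is the asymmetric role of the index $0$. Because $U_0=\openone_A$ is unitary but \emph{not} anti-Hermitian, the sign manipulations that work uniformly for $1\le j<k$ must be rechecked when $j=0$; the main obstacle is simply verifying that this boundary case still collapses into the single unified summand $(\rho_{kj}-\rho_{jk})U_jU_k\tau_A$ rather than generating a spurious extra term.
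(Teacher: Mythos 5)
Your proposal is correct and follows essentially the same route as the paper's proof: expand $\varrho=\sum_{j,k}\rho_{jk}(U_j\otimes\openone_B)|\Psi_0\>\<\Psi_0|(U_k^\dag\otimes\openone_B)$, use $[U_k,\tau_A]=0$ to write $\varrho_A=\sum_{j,k}\rho_{jk}U_jU_k^\dag\tau_A$, and then collapse the diagonal via $\tr(\rho)=1$ and pair the off-diagonal terms via the HR relations. Your explicit treatment of the $j=0$ boundary case (where $U_0^\dag=\openone_A$ rather than $-U_0$) is a point the paper handles only implicitly, and your verification that it still yields the unified summand $(\rho_{k0}-\rho_{0k})U_0U_k\tau_A$ is correct.
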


\begin{proof}
	By assumption we have
	\begin{align}
	\varrho&=M(\rho)=\sum_{j,k} \rho_{jk} M|j\>\<k|M^\dag=\sum_{j,k} \rho_{jk}|\Psi_j\>\<\Psi_k|\nonumber\\
	&=\sum_{j,k} \rho_{jk}(U_j\otimes \openone_B)|\Psi_0\>\<\Psi_0|(U_k^\dag\otimes \openone_B). 
	\end{align}
	Therefore,
	\begin{align}
	&\varrho_A=\tr_B(\varrho)=\sum_{j,k}\rho_{jk} U_j\tau_A  U_k^\dag=\sum_{j,k}\rho_{jk} U_j  U_k^\dag \tau_A\nonumber\\
	&=\tau_A+\sum_{j\neq k}\rho_{jk} U_j  U_k^\dag \tau_A\nonumber\\
	&=\tau_A+\sum_{0\leq j <k<d} (\rho_{kj}-\rho_{jk})U_j  U_k \tau_A. \label{eq:MaskMix}
	\end{align}
	Here the third equality follows from the fact that $U_k$ and $U_k^\dag$ commute with $\tau_A$. The last equality follows from the fact that $U_0=\openone_A$ and $\{U_j\}_{j=1}^{d-1}$ is a set  of HR matrices on $\caH_A$, which means 
	\begin{equation}
	U_j U_k +U_k U_j=-2\delta_{jk}\openone_A\quad \forall j,k=1,2,\ldots, d-1.
	\end{equation}
	\Eref{eq:varrhoB} can be proved in a similar way.
\end{proof}

\begin{lemma}\label{lem:UVAB}
	Let $|\Psi_0\>$ be a bipartite ket in the Hilbert space  $\caH_A\otimes \caH_B$ with reduced states  $\tau_A=\tr_B(|\Psi_0\>\<\Psi_0|)$ and $\tau_B=\tr_A(|\Psi_0\>\<\Psi_0|)$. Suppose $U$ and $V$ are unitary operators on $\caH_A$ and $\caH_B$ that satisfy
	\begin{equation}\label{eq:UVAB}
	(U\otimes \openone_B)|\Psi_0\>=(\openone_A\otimes V)|\Psi_0\>;
	\end{equation}
	then 
	\begin{equation}\label{eq:UVABtauTrace}
	\tr(\tau_A^\alpha U)=\tr(\tau_B^\alpha V)\quad \forall \alpha>0.
	\end{equation} 
	Suppose $P_A$ and $P_B$ are the projectors onto the eigenspaces of $\tau_A$  and $\tau_B$, respectively, which are associated with a same nonzero eigenvalue $\lambda$;  then 
	\begin{align}
	\tr(P_A U)=\tr(P_B V). \label{eq:UVABtraceP}
	\end{align}
\end{lemma}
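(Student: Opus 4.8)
The plan is to pass to a Schmidt decomposition of $|\Psi_0\rangle$ and to reduce both identities to a single scalar relation between the diagonal matrix elements of $U$ and $V$. Concretely, I would write $|\Psi_0\rangle=\sum_n\sqrt{s_n}\,|a_n\rangle|b_n\rangle$ with all $s_n>0$, where $\{|a_n\rangle\}$ and $\{|b_n\rangle\}$ are orthonormal families in $\caH_A$ and $\caH_B$; then $\tau_A=\sum_n s_n|a_n\rangle\langle a_n|$ and $\tau_B=\sum_n s_n|b_n\rangle\langle b_n|$, so the two reduced states share the same nonzero spectrum and the index $n$ labels matched eigenvectors on the two sides.

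The key step is to extract the content of the intertwining relation \eqref{eq:UVAB} by pairing it with $\langle a_m|\langle b_l|$. Writing $U_{mn}:=\langle a_m|U|a_n\rangle$ and $V_{ln}:=\langle b_l|V|b_n\rangle$, the left-hand side contributes $\sqrt{s_l}\,U_{ml}$ while the right-hand side contributes $\sqrt{s_m}\,V_{lm}$, so that $\sqrt{s_l}\,U_{ml}=\sqrt{s_m}\,V_{lm}$ for all Schmidt indices $m,l$. I only need the diagonal case $m=l=n$: cancelling $\sqrt{s_n}>0$ gives $U_{nn}=V_{nn}$ for every $n$, which is the heart of the lemma.

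Both assertions then follow by taking traces in this basis. For \eqref{eq:UVABtauTrace}, note that for $\alpha>0$ only the support of each reduced state contributes, so $\tr(\tau_A^\alpha U)=\sum_n s_n^\alpha U_{nn}$ and $\tr(\tau_B^\alpha V)=\sum_n s_n^\alpha V_{nn}$; these agree term by term by $U_{nn}=V_{nn}$. For \eqref{eq:UVABtraceP}, I would write $P_A=\sum_{n:\,s_n=\lambda}|a_n\rangle\langle a_n|$ and $P_B=\sum_{n:\,s_n=\lambda}|b_n\rangle\langle b_n|$, since the $\lambda$-eigenspaces are exactly the spans of the matched Schmidt vectors with $s_n=\lambda$; then $\tr(P_A U)=\sum_{n:\,s_n=\lambda}U_{nn}=\sum_{n:\,s_n=\lambda}V_{nn}=\tr(P_B V)$.

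The step I would treat most carefully is the degenerate case, since the Schmidt basis is not unique there: I must fix one decomposition and evaluate $U_{nn}=V_{nn}$, the trace formulas, and the projectors $P_A,P_B$ all in that same basis. A reassuring consistency check, which also underlies the identification of $P_A$ as a span of Schmidt vectors, is that tracing out $B$ in \eqref{eq:UVAB} yields $U\tau_A U^\dagger=\tau_A$, so $U$ commutes with $\tau_A$ (and likewise $V$ with $\tau_B$); hence each $\lambda$-eigenspace is $U$-invariant and no off-diagonal elements can contaminate the traces.
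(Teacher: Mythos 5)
Your proof is correct, and it takes a genuinely different route from the paper's. You fix a Schmidt decomposition $|\Psi_0\rangle=\sum_n\sqrt{s_n}|a_n\rangle|b_n\rangle$ and extract from \eref{eq:UVAB} the matrix-element identity $\sqrt{s_l}\,U_{ml}=\sqrt{s_m}\,V_{lm}$, whose diagonal case $U_{nn}=V_{nn}$ makes both claims term-by-term summations; this is elementary, self-contained, and actually yields a \emph{stronger} intermediate result than the lemma asks for (it constrains off-diagonal elements as well). The paper instead works basis-free: it first notes that \eref{eq:UVAB} forces $U$ to commute with $\tau_A$ and $P_A$ (and $V$ with $\tau_B$, $P_B$), uses the Schmidt-matching identity $(P_A\otimes \openone_B)|\Psi_0\rangle=(\openone_A\otimes P_B)|\Psi_0\rangle$ to get $(P_AU\otimes \openone_B)|\Psi_0\rangle=(\openone_A\otimes P_BV)|\Psi_0\rangle$, and then evaluates the expectation value in $|\Psi_0\rangle$ to obtain $\lambda\tr(P_AU)=\tr(P_AU\tau_A)=\tr(P_BV\tau_B)=\lambda\tr(P_BV)$, from which \eref{eq:UVABtauTrace} follows as a corollary by summing $\lambda_o^\alpha\tr(P_{A,o}U)$ over the common nonzero spectrum — note the logical order is the reverse of yours, which proves \eref{eq:UVABtauTrace} directly. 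What each approach buys: yours gives finer information and requires no operator-identity bookkeeping, while the paper's avoids any choice of basis (so the degeneracy issue you flag never arises) and produces the commutation facts in a form reused elsewhere in the appendix (e.g., in the proof of Corollary~\ref{cor:PurityCon}). Your handling of the degenerate case is sound — fixing one Schmidt basis and evaluating everything in it is all that is needed — and your closing observation that $U\tau_AU^\dagger=\tau_A$ is exactly the paper's starting point, though, as you implicitly recognize, it is not strictly needed for your trace computations, only as a consistency check.
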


\begin{proof}
	We first prove \eref{eq:UVABtraceP}.	
	\Eref{eq:UVAB} implies that $U$ commutes with $\tau_A$ and $P_A$, while $V$ commutes with $\tau_B$ and $P_B$. In addition, by assumption we have 
	\begin{align}
	(P_A\otimes \openone_B)|\Psi_0\>&=(\openone_A\otimes P_B)|\Psi_0\>,\\
	(P_AU\otimes \openone_B)|\Psi_0\>&=(P_A\otimes \openone_B)(\openone_A\otimes V)|\Psi_0\>\nonumber\\
	&=(\openone_A\otimes P_B V)|\Psi_0\>. 
	\end{align}
	Therefore,
	\begin{align}
	&	\lambda\tr(P_A  U)=\tr(P_A  U\tau_A)=\<\Psi_0|(P_A U\otimes \openone_B)|\Psi_0\>\nonumber\\
	&
	=\<\Psi_0|(\openone_A\otimes P_B V)|\Psi_0\>=\tr(P_B V\tau_B)=\lambda\tr(P_B V),
	\end{align}
	which implies \eref{eq:UVABtraceP}.
	
	\Eref{eq:UVABtauTrace} is a corollary of \eref{eq:UVABtraceP} given that $\tau_A$ and $\tau_B$ have the same nonzero spectrum. When $\alpha=1$, \eref{eq:UVABtauTrace}  can also be derived directly as follows,
	\begin{align}
	\tr(\tau_A U)&=\<\Psi_0|(U\otimes \openone_B)|\Psi_0\> =\<\Psi_0|(\openone_A\otimes V)|\Psi_0\>\nonumber\\
	&=\tr(\tau_B V).
	\end{align}
\end{proof}

\section{\label{sec:MaskEntProof}Proof of \crref{cor:MaskEnt}}

\corMaskEnt*
\begin{proof}[Proof of \crref{cor:MaskEnt}]
	Note that any masker for $\scrD^\rmR(\caH)$	is a masker for $\scrP^\rmR(\caH)$, and vice versa, so we have $E(\scrD^\rmR(\caH))=E(\scrP^\rmR(\caH))$ according to \lref{lem:MaskEnt}.
	Suppose $M: \caH\mapsto \caH_A\otimes\caH_B$ is a masker for  $\scrD^\rmR(\caH)$. Let $|\Psi_0\>=M|0\>$; then \lref{lem:MaskEnt} implies that
	\begin{equation}\label{eq:MaskEntRproof}
	E(\scrD^\rmR(\caH))=E(|\Psi_0\>\<\Psi_0|)\geq E(|\Phi(\tilde{\kappa}(d))\>),
	\end{equation}
	where $\tilde{\kappa}(d):= \max\{\kappa(d),2\}$ (cf.~ \crref{cor:MaskDim}). 
	Here the inequality follows from the fact that the nonzero spectrum of $\tr_B(|\Psi_0\>\<\Psi_0|)$ (that is, the masking spectrum of $M$) is majorized by the nonzero spectrum of $\tr_B(|\Phi(\tilde{\kappa}(d))\>\<\Phi(\tilde{\kappa}(d))|)$ according to  \crref{cor:MaskSpecCh}, so $|\Psi_0\>$ can be turned into $|\Phi(\tilde{\kappa}(d))\>$ by LOCC according to the  majorization criterion of Nielsen~\cite{Niel99}. In addition, the inequality in \eref{eq:MaskEntRproof} can be saturated by choosing a suitable masker  according to  \crref{cor:MaskSpecCh}. This observation confirms \eref{eq:MaskEntR}. \Eref{eq:MaskEntRR} can be proved using a similar reasoning.
\end{proof}

\section{\label{sec:MaskMaxProof}Proofs of \thsref{thm:MaskMax} and \ref{thm:MaskEnt}}

\thmMaskMax*

\begin{proof}[Proof of \thref{thm:MaskMax}] Let  $\rho\in \scrD(\caH)$ be an arbitrary density matrix; suppose the set  $\scrD^\rmR(\caH)\cup \{\rho\}$ can be  masked by the isometry  $M: \caH\mapsto\caH_A\times \caH_B$. Then $M$ is also a maker for $\scrD^\rmR(\caH)$.  Let  $\tau_A$ and $\tau_B$ be the  common reduced density matrices. Without loss of generality, we can assume that $\tau_A$ and $\tau_B$ have full rank.  Let 	 $\varrho=M(\rho)$, $\varrho_A=\tr_B(\varrho)$, and $\varrho_B=\tr_A(\varrho)$.
Then we have $\varrho_A=\tau_A$ and $\varrho_B=\tau_B$, so that  
	\begin{equation}
	\tr(\varrho_A^2)+\tr(\varrho_B^2)=\tr(\tau_A^2)+\tr(\tau_B^2)=2\tr(\tau_A^2).
	\end{equation}	
	When  $d\geq 3$, in conjunction with \crref{cor:PurityCon}, this equation implies that $\tr(\rho\rho^\rmT)=\tr(\rho^2)$, so that $\rho\in \scrD^\rmR(\caH)$. 	Therefore,  $\scrD^\rmR(\caH)$ is a maximal maskable set.

	When  $d=2$, the density matrix $\varrho_A$ has the form 
	\begin{equation}
	\varrho_A=\tau_A+(\rho_{10}-\rho_{01})U\tau_A
	\end{equation}
	according to \lref{lem:varrhoAB}, where $U$ is a unitary matrix that satisfies $U^2=-\openone_A$. So the equality $\varrho_A=\tau_A$ implies that $\rho_{10}=\rho_{01}$ and that  $\rho\in \scrD^\rmR(\caH)$. 	Therefore, $\scrD^\rmR(\caH)$ is  a maximal maskable set. Alternatively, this conclusion follows from \crref{cor:MaskQubit} (cf. \rscite{LianLF19,DingH20}). This observation 	
	completes the proof of \thref{thm:MaskMax}.

For pure states, \thref{thm:MaskMax} can be proved in a simpler way, which is less technical and more instructive. 
	Let $|\psi\>$ be any ket in $\caH$ that is not proportional to a real ket. Let $\scrS$ be the union of $\scrP^\rmR(\caH)$ and $\{|\psi\>\<\psi|\}$. To prove that $\scrP^\rmR(\caH)$	is a maximal maskable set within $\scrP(\caH)$, it suffices to prove that $\scrS$ is not maskable.

	By assumption $|\psi\>$ has the form 
	$|\psi\>=c_1|\psi_1\>+\rmi c_2|\psi_2\>$, where $|\psi_1\>, |\psi_2\>$ are two real kets that are linearly independent, and $c_1, c_2$ are nonzero real coefficients that satisfy the normalization condition $c_1^2+c_2^2=1$. Let $\caH_2$ be the span of $\{|\psi_1\>,|\psi_2\>\}$ and let
	$\scrS_2$ be the set of all real pure states on $\caH_2$. Then we have $|\psi\>\in \caH_2$ and  $\scrS_2\cup \{|\psi\>\<\psi|\}\subseteq \scrS$. Geometrically,  $\scrS_2$ is a great circle on the Bloch sphere associated with $\caH_2$, while $|\psi\>\<\psi|$ is a point  on the sphere, but not on this great circle. Therefore,   $\scrS_2 \cup \{|\psi\>\<\psi|\}$ is not maskable according to \crref{cor:MaskQubit}, which implies that $\scrS$ is not maskable.
\end{proof}

\thmMaskEnt*

\begin{proof}[Proof of \thref{thm:MaskEnt}]
	Let $\tau_A=\tr_B(|\Psi_0\>\<\Psi_0|)$. Let $\rho$ be an arbitrary state in   $\scrD(\caH)$, $\varrho=M(\rho)$, and  $\varrho_A=\tr_B(\varrho)$.	According to \crref{cor:PurityCon}, $\tau_A$ is majorized by $\varrho_A$. If $\rho$ is pure, so that $\varrho$ is also pure, then $\varrho$ can be created from $|\Psi_0\>\<\Psi_0|$ by LOCC according to the  majorization criterion \cite{Niel99}. The same conclusion applies when $\rho$ is mixed, in which case $\rho$ can be expressed as a convex mixture of pure states and so can $\varrho$ accordingly.
	
By  the above discussion, we have $E(\varrho)\leq E(|\Psi_0\>\<\Psi_0|)$ for any entanglement monotone $E$ since $|\Psi_0\>\<\Psi_0|$ can be turned into $\varrho$ by LOCC. Next, suppose $E$ is a convex strict monotone. If $\rho,\varrho$ are pure and the upper bound $E(\varrho)\leq  E(|\Psi_0\>\<\Psi_0|)$ is saturated,
	then $\varrho$ and $|\Psi_0\>\<\Psi_0|$ can be turned into each other by LOCC. Therefore,  $\varrho_A$,  $\varrho_B$, $\tau_A$, and $\tau_B$ have the same nonzero spectrum  according to the  majorization criterion \cite{Niel99} and thus have the same purity. Now \crref{cor:PurityCon}  implies that $\rho$ is a real density matrix, that is, $\rho\in \scrP^\rmR(\caH)$.

	If $\rho$ is mixed, then it can be expressed as a convex mixture of pure states  $\rho=\sum_r a_r|\psi_r\>\<\psi_r|$ with $a_r>0$ and $|\psi_r\>\<\psi_r|\notin \scrP^\rmR(\caH)$  for at least one $r$. Accordingly, we have $\varrho=\sum_r a_rM |\psi_r\>\<\psi_r|M^\dag$,
\begin{align}
E(M|\psi_r\>\<\psi_r| M^\dag)\leq E(|\Psi_0\>\<\Psi_0|)\quad \forall r,
\end{align}	
 and the inequality is strict for at least one $r$. Therefore,
	\begin{equation}
	E(\varrho)\leq \sum_r a_r E(M|\psi_r\>\<\psi_r| M^\dag)<E(|\Psi_0\>\<\Psi_0|),
	\end{equation}
	so the inequality $E(\varrho)\leq  E(|\Psi_0\>\<\Psi_0|)$ cannot be saturated when $\rho$ is mixed, which  completes the proof.
\end{proof}

\section{\label{sec:PhaseRealProof}Proofs of \psref{pro:PhaseRoReal} and \ref{pro:RealToPhase}}

\proPhaseRoReal*
\begin{proof}[Proof of \pref{pro:PhaseRoReal}]
	Note that $\scrP(\vec{c})$ and $\scrP(\vec{c}')$ are equivalent whenever $\vec{c}'$ is a permutation of  $\vec{c}$. So we can assume that $0<c_0\leq c_1\leq \cdots\leq c_{d-1}<1$ without loss of generality, which implies that $0<c_0^2 \leq 1/d$. Let 
	\begin{align}
	|\psi_k\rangle =c_0\rme^{2k\pi\rmi/3}|0\rangle+\sum_{j=1}^{d-1} c_j  |j\rangle, \quad k=0,1,2;
	\end{align}
	then $|\psi_k\rangle\in \scrK(\vec{c})$ and  $|\psi_k\rangle\langle \psi_k|\in \scrP(\vec{c})$. Suppose $\scrP(\vec{c})$ is equivalent to a subset of $\scrP^\rmR(\caH')$; then the following triple product must be real,
	\begin{align}
	\tr[(|\psi_0\rangle\langle\psi_0|) (|\psi_1\rangle\langle\psi_1|) (|\psi_2\rangle\langle\psi_2|)]=(c_0^2\rme^{2\pi\rmi/3}+1-c_0^2)^3,
	\end{align}
	since the triple product is invariant under isometry. 
	In conjunction with the condition $0<c_0^2 \leq 1/d\leq 1/2$, we can deduce that $d=2$ and $c_0^2=1-c_0^2$, which implies that $c_0=c_1=1/\sqrt{2}$. This observation completes the proof of \pref{pro:PhaseRoReal}.
\end{proof}

\proRealToPhase*

\begin{proof}[Proof of \pref{pro:RealToPhase}]
	Suppose on the contrary that the set $\scrP^\rmR(\caH')$ with $\dim(\caH')\geq 3$ is equivalent to a subset of $\scrP(\vec{c})$ under the isometry $S$. Let $|\psi_k\rangle=S|k\rangle$ for $k=0,1,2$; then $|\psi_k\rangle$ have the form
	\begin{align}
	|\psi_0\rangle &=\sum_j c_j \rme^{\rmi\alpha_j} |j\rangle,\\
	|\psi_1\rangle &=\sum_j c_j \rme^{\rmi\beta_j} |j\rangle,\\
	|\psi_2\rangle &=\sum_j c_j \rme^{\rmi\gamma_j} |j\rangle,
	\end{align}
	where $0\leq \alpha_j,\beta_j,\gamma_j<2\pi$ for $j=0,1,\ldots,d-1$. 
	Moreover, we can assume that $\alpha_j=0$ for all $j$ without loss of generality. Let 
	\begin{equation}
	|\psi_{kl}\rangle=\frac{1}{\sqrt{2}}S(|k\rangle+|l\rangle),\quad 0\leq k<l\leq 2;
	\end{equation}
	then $|\psi_{kl}\rangle\in \scrK(\vec{c})$ by assumption, which implies that
	\begin{gather}
	\biggl|\frac{1+\rme^{\rmi\beta_j}}{\sqrt{2}}\biggr|^2=\biggl|\frac{1+\rme^{\rmi\gamma_j}}{\sqrt{2}}\biggr|^2=\biggl|\frac{1+\rme^{\rmi(\beta_j-\gamma_j)}}{\sqrt{2}}\biggr|^2=1,\\
	\cos(\beta_j)=\cos(\gamma_j)=\cos(\beta_j-\gamma_j)=0\quad \forall j. 
	\end{gather}
	However, the last equation can never hold. This contradiction shows that $\scrP^\rmR(\caH')$ cannot be  equivalent to a subset of $\scrP(\vec{c})$ except when  $\dim(\caH')=2$, which confirms \pref{pro:RealToPhase}.
\end{proof}

\bibliography{all_references}

\end{document}